\DeclareMathAlphabet{\mathbbold}{U}{bbold}{m}{n}
\renewcommand{\backref}[1]{}
\renewcommand{\backrefalt}[4]{%
\ifcase #1 %
\or 
[p.\ #2]%
\else 
[pp.\ #2]%
\fi}
\renewcommand{\paragraph}{%
  \@startsection{paragraph}{4}%
  {\z@}{2.25ex \@plus .5ex \@minus .3ex}{-1em}%
  {\normalfont\normalsize\bfseries}%
}
\newtheorem{theorem}{Theorem}
\newtheorem{lemma}[theorem]{Lemma}
\newtheorem{proposition}[theorem]{Proposition}
\newtheorem{corollary}[theorem]{Corollary}
\newtheorem{definition}[theorem]{Definition}
\theoremstyle{definition}
\newtheorem{problem}[theorem]{Problem}
\newcommand{\eq}[1]{\hyperref[eq:#1]{(\ref*{eq:#1})}}
\renewcommand{\sec}[1]{\hyperref[sec:#1]{Section~\ref*{sec:#1}}}
\newcommand{\thm}[1]{\hyperref[thm:#1]{Theorem~\ref*{thm:#1}}}
\newcommand{\lem}[1]{\hyperref[lem:#1]{Lemma~\ref*{lem:#1}}}
\newcommand{\prop}[1]{\hyperref[prop:#1]{Proposition~\ref*{prop:#1}}}
\newcommand{\cor}[1]{\hyperref[cor:#1]{Corollary~\ref*{cor:#1}}}
\newcommand{\fig}[1]{\hyperref[fig:#1]{Figure~\ref*{fig:#1}}}
\newcommand{\tab}[1]{\hyperref[tab:#1]{Table~\ref*{tab:#1}}}
\newcommand{\alg}[1]{\hyperref[alg:#1]{Algorithm~\ref*{alg:#1}}}
\newcommand{\app}[1]{\hyperref[app:#1]{Appendix~\ref*{app:#1}}}
\newcommand{\step}[1]{\hyperref[step:#1]{step~\ref*{step:#1}}}
\newcommand{\Step}[1]{\hyperref[step:#1]{Step~\ref*{step:#1}}}
\newcommand{\comment}[1]{}
\newcommand{\C}{{\mathbb{C}}}
\newcommand{\R}{{\mathbb{R}}}
\renewcommand{\O}{\widetilde{O}}
\newcommand{\A}{\mathcal{A}}
\renewcommand{\H}{\mathcal{H}}
\newcommand{\id}{\mathbbold{1}} 
\renewcommand{\varepsilon}{\delta}
\renewcommand{\d}{\mathrm{d}}
\DeclareMathOperator{\poly}{poly}
\renewcommand{\th}[1]{${#1}^{\textrm{th}}$}
\DeclareMathOperator{\GPE}{GPE}
\newcommand{\ceil}[1]{\lceil{#1}\rceil}
\newcommand{\floor}[1]{\lfloor{#1}\rfloor}
\newcommand{\norm}[1]{\|{#1}\|}
\newcommand{\Norm}[1]{\left\|{#1}\right\|}
\renewcommand{\(}{\left(}
\renewcommand{\)}{\right)}
\renewcommand{\>}{\rangle}
\newcommand{\<}{\langle}
\newcommand{\be}{\begin{equation}}
\newcommand{\ee}{\end{equation}}
\def\ba#1\ea{\begin{align}#1\end{align}}
\newcommand{\chebt}{\mathcal{T}}
\newcommand{\chebu}{\mathcal{U}}
\DeclareMathOperator{\spn}{span}
\DeclareMathOperator{\sgn}{sgn}
\newcommand{\boxA}{\mathcal{P}_A}
\newcommand{\boxB}{\mathcal{P}_B}
\newcommand{\cc}[1]{\mathsf{#1}} 
\newcommand{\mA}{{\mathcal A}}
\begin{document}

\title{Quantum algorithm for systems of linear equations \\ with exponentially improved dependence on precision}

\author{
Andrew M.\ Childs\thanks{
Department of Computer Science, Institute for Advanced Computer Studies,
and Joint Center for Quantum Information and Computer Science,
University of Maryland.
\texttt{amchilds@umd.edu}
} 
\and
Robin Kothari\thanks{
Center for Theoretical Physics, Massachusetts Institute of Technology.
\texttt{rkothari@mit.edu}
} 
\and
Rolando D.\ Somma\thanks{
Theoretical Division, Los Alamos National Laboratory.
\texttt{somma@lanl.gov}
} 
}

\date{}
\maketitle

\begin{abstract}
Harrow, Hassidim, and Lloyd showed that for a suitably specified $N \times N$ matrix $A$ and $N$-dimensional vector $\vec{b}$, there is a quantum algorithm that outputs a quantum state proportional to the solution of the linear system of equations $A\vec{x}=\vec{b}$. If $A$ is sparse and well-conditioned, their algorithm runs in time $\poly(\log N, 1/\epsilon)$, where $\epsilon$ is the desired precision in the output state. We improve this to an algorithm whose running time is polynomial in $\log(1/\epsilon)$, exponentially improving the dependence on precision while keeping essentially the same dependence on other parameters. Our algorithm 
is based on a general technique for implementing any operator with a suitable Fourier or Chebyshev series representation. This allows us to bypass the quantum phase estimation algorithm, whose dependence on $\epsilon$ is prohibitive.
\end{abstract}

\section{Introduction}
\label{sec:intro}

Recently, Harrow, Hassidim, and Lloyd \cite{HHL09} gave an efficient quantum algorithm for the Quantum Linear Systems Problem (QLSP). Here the goal is to prepare a quantum state $|x\>$ proportional to the solution $\vec{x}$ of a linear system of equations $A\vec{x} = \vec{b}$, given procedures for computing the entries of $A$ and for preparing a quantum state $|b\>$ proportional to $\vec{b}$. 
If the $N \times N$ matrix $A$ is $d$-sparse and has condition number $\kappa$, and if the procedures for computing entries of $A$ and for preparing $|b\>$ are efficient, then the Harrow--Hassidim--Lloyd (HHL) algorithm produces an $\epsilon$-approximation to the desired quantum state using $\poly(\log N, 1/\epsilon, d, 1/\kappa)$ resources (where the notation $\poly$ denotes a function upper bounded by a polynomial in its arguments). Note that the QLSP is not the same as the traditional problem of solving a linear system of equations \cite{Chi09,Aar15}.

The core of the HHL algorithm is an efficient procedure for simulating the dynamics of quantum systems.  Whereas straightforward approaches to quantum simulation using product formulas have complexity polynomial in $1/\epsilon$, recent work has given an algorithm with complexity $\poly(\log(1/\epsilon))$ \cite{BCC+14}, an exponential improvement in the dependence on $\epsilon$.  However, the performance of the HHL algorithm is limited by its use of phase estimation, which requires $\Omega(1/\epsilon)$ uses of a unitary operation to estimate its eigenvalues to precision $\epsilon$.  Thus, simply replacing the Hamiltonian simulation subroutine of the HHL algorithm with the best known method gives only a modest improvement, and in particular, still gives complexity $\poly(1/\epsilon)$.

In this work, we show how to circumvent the limitations of phase estimation, giving an algorithm for the QLSP that uses ideas from recent quantum simulation algorithms to apply the inverse of a matrix directly.  Under the same assumptions as for the HHL algorithm, our algorithm uses $\poly(\log N, \log(1/\epsilon), d, 1/\kappa)$ resources, exponentially improving the dependence on the precision parameter. 

To obtain this improvement, it is essential to consider QLSP as an inherently quantum problem, where the goal is to output a quantum state $|x\>$.  Originally, the HHL algorithm was described as a method for sampling expectation values of $|x\>$, providing a classical output \cite{HHL09}.  For the expectation value version of the problem, sampling error alone rules out the possibility of an algorithm with complexity $\poly(\log(1/\epsilon))$, unless $\cc{BQP}=\cc{PP}$ \cite[Theorem 6]{HHL09}.  However, the HHL algorithm actually solves the more general problem of outputting $|x\>$ (and the algorithm is commonly described in those terms \cite{Har15}). 

The improved performance of our approach may be especially useful when the quantum linear systems algorithm is used as a subroutine polynomially many times, so that its output must have inverse polynomial precision to guarantee that the final algorithm succeeds with high probability. An algorithm with $\poly(1/\epsilon)$ scaling incurs a polynomial overhead in running time due to error reduction, whereas an algorithm with $\poly(\log(1/\epsilon))$ scaling incurs only logarithmic overhead.

In fact, the results of this paper have already found applications. A recent work on speeding up the finite element method using quantum algorithms \cite{MP15} finds that quantum algorithms outperform classical algorithms only when the spatial dimension (of the partial differential equation to be solved) is larger than some threshold value. Their quantum algorithm uses our algorithm as a subroutine and would be worse by a polynomial factor if they used the previous best algorithm, which likely reduces the threshold value at which the quantum algorithm is superior.

Another recent algorithm to estimate hitting times of Markov chains \cite{CS16} uses the framework laid out in this paper (in \sec{LCU}) and closely follows our first approach, which we call the Fourier approach (\sec{Fourier}).

The improved scaling with $\epsilon$ may also find complexity-theoretic applications. A recent result on the power of $\mathsf{QMA}$ with exponentially small soundness--completeness gap \cite{FL16} crucially relies on the fact that the best Hamiltonian simulation algorithms have error dependence $\poly(\log(1/\epsilon))$.

\subsection{Problem statement}

The QLSP can be stated more precisely as follows.  We are given an $N \times N$ Hermitian matrix $A$ and a vector $\vec{b}=(b_1,b_2,\ldots,b_N)$.\footnote{The assumption that $A$ is Hermitian can be dropped without loss of generality \cite{HHL09}, since we can instead solve the linear system of equations given by $\(\begin{smallmatrix} 0&A\\ A^\dag &0 \end{smallmatrix}\)\vec{y}=\(\begin{smallmatrix} \vec{b}\\ 0 \end{smallmatrix}\)$, which has the unique  solution $\vec{y}=\(\begin{smallmatrix} 0\\ \vec{x} \end{smallmatrix}\)$ when $A$ is invertible. This transformation does not change the condition number of $A$. However, we need oracle access to the nonzero entries of the rows \emph{and} columns of $A$ when $A$ is not Hermitian.} 
The problem is to create the quantum state $|x\> :=  {\sum_{i=1}^N x_i |i\>}/\allowbreak{\norm{\sum_i x_i |i\>}}$, where the vector $\vec{x}=(x_1,x_2,\ldots,x_N)$ is defined by the equation $A\vec{x}=\vec{b}$. To obtain an algorithm running in time $\poly(\log N)$, we require succinct representations of $A$ and $\vec{b}$. As in \cite{HHL09}, we assume that access to $A$ and $\vec{b}$ is provided by black-box subroutines. For the vector $\vec{b}$, we assume there is a procedure $\boxB$ that produces the quantum state $|b\> := {\sum_i b_i |i\>}/{\norm{\sum_i b_i |i\>}}$. For the matrix $A$, we assume there is a procedure $\boxA$ that computes the locations and values of the nonzero entries. Specifically, as in the best known algorithm for Hamiltonian simulation \cite{BCK15}, we assume $\boxA$ allows us to perform the map
\begin{align}
  |j,\ell\> \mapsto |j,\nu(j,\ell)\>
\label{eq:entryloc}
\end{align}
for any $j \in [N] := \{1,\ldots,N\}$ and $\ell \in [d]$, where $d$ is the maximum number of nonzero entries in any row or column, and $\nu\colon [N] \times [d] \to [N]$ computes the row index of the \th{\ell} nonzero entry of the \th{j} column.  The procedure $\boxA$ also allows us to perform the map
\begin{align}
  |j,k,z\> \mapsto |j,k,z \oplus A_{jk}\>
\label{eq:entryval}
\end{align}
for any $j,k \in [N]$, where the third register holds a bit string representing an entry of $A$.  We assume the entries of $A$ can be represented exactly (or to sufficiently high precision that any error can be neglected).

Note that for the map \eq{entryloc}, we assume that the locations of the nonzero entries of $A$ can be computed in place, as in previous work on Hamiltonian simulation \cite{BC12,BCK15}.  This is possible if we can efficiently compute both $(j,\ell) \mapsto \nu(j,\ell)$ and the reverse map $(j,\nu(j,\ell)) \mapsto \ell$, which is possible for typical implicitly specified matrices $A$.  Alternatively, if $\nu$ provides the nonzero entries in ascending order, we can compute the reverse map with only a $\log d$ overhead by binary search.  In the worst case, if the entries are unordered, there may be an additional factor of $O(\sqrt{d})$ using Grover's algorithm \cite{Gro96} to compute the reverse map.

While the HHL algorithm solves the QLSP for all such matrices  $A$, it is efficient only when $A$ is sparse and well-conditioned. (We discuss later how the sparsity assumption can be slightly relaxed.) An $N \times N$ matrix is called \emph{$d$-sparse} if it has at most $d$ nonzero entries in any row or column. We call it simply \emph{sparse} if $d=\poly(\log N)$.  We call a matrix well-conditioned if its condition number is $\poly(\log N)$, where the \emph{condition number} of a matrix is the ratio of the largest to smallest singular value, and undefined when the smallest singular value of $A$ is $0$ (i.e., when $A$ is not invertible). Since $A$ is Hermitian, its singular values and eigenvalues are equal in magnitude. (Note that ``well-conditioned'' is a very strong requirement on the condition number, since it requires the condition number to be exponentially smaller than the dimension of the matrix.)

It will be convenient to quantify the resource requirements of algorithms solving the QLSP using two measures. First, by \emph{query complexity} we mean the number of uses of the procedure $\boxA$ (treating this procedure as a black box). By \emph{gate complexity}, we mean the total number of $2$-qubit gates used in the algorithm. We say an algorithm is \emph{gate-efficient} if its gate complexity is larger than its query complexity only by logarithmic factors. Formally, an algorithm with query complexity $Q$ is gate-efficient if its gate complexity is $O(Q \poly(\log Q ,\log N))$. We will also use \emph{expected} query or gate complexity to refer to an algorithm's query or gate cost in expectation (in the sense of a Las Vegas algorithm).

Formally, we define the Quantum Linear Systems Problem as follows:

\begin{problem}[QLSP] 
Let $A$ be an $N \times N$ Hermitian matrix with known condition number $\kappa$, $\norm{A} = 1$, and at most $d$ nonzero entries in any row or column. Let $\vec{b}$ be an $N$-dimensional vector, and let $\vec{x}:= A^{-1} \vec{b}$. We define the quantum states $|b\>$ and $|x\>$ as
\be
|b\> := \frac{\sum_i b_i |i\>}{\norm{\sum_i b_i |i\>}} \quad \mathrm{and} 
\quad |x\> :=  \frac{\sum_i x_i |i\>}{\norm{\sum_i x_i |i\>}}.
\ee
Given access to a procedure $\boxA$ that computes entries of $A$ as described in equations \eq{entryloc} and \eq{entryval} and a procedure $\boxB$ that prepares the state $|b\>$ in time $O(\poly(\log N))$, the goal is to output a state $|\tilde{x}\>$ such that $\norm{|\tilde{x}\>-|x\>} \leq \epsilon$, succeeding with probability $\Omega(1)$ (say, at least $1/2$), with a flag indicating success.
\end{problem}

We assume the condition number is known since our algorithm depends on it explicitly. More generally, we can replace $\kappa$ by an upper bound on the condition number at the expense of a corresponding increase in the running time.

While we only demand success with bounded error, repeating the procedure until it is successful gives an algorithm that is always correct and whose expected running time is asymptotically the same.  Alternatively, by repeating the procedure $O(\log(1/\epsilon))$ times, we can give an algorithm that always outputs a state $\epsilon$-close to the desired one.  To achieve this, the running time is simply multiplied by a factor of $O(\log(1/\epsilon))$.

\subsection{Results}
\label{sec:results}

Harrow, Hassidim, and Lloyd \cite{HHL09} present an algorithm for the QLSP that is efficient when $A$ is sparse and well-conditioned (i.e., when $d$ and $\kappa$ are both $\poly(\log N)$). 

\begin{theorem}[HHL algorithm]
\label{thm:HHL}
The QLSP can be solved by a gate-efficient algorithm that makes $O\big(\frac{d\kappa^2}{\epsilon} \poly(\log(d\kappa/\epsilon))\big)$ queries to the oracle $\boxA$ and $O(d\kappa \poly(\log(d\kappa/\epsilon)))$ uses of $\boxB$. 
\end{theorem}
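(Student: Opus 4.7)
The plan is to follow the original HHL scheme, which proceeds in three stages: sparse Hamiltonian simulation, quantum phase estimation followed by controlled inversion, and amplitude amplification. First, I would use $\boxB$ once to prepare $|b\> = \sum_j \beta_j |u_j\>$ in the eigenbasis $\{|u_j\>\}$ of $A$ (whose eigenvalues satisfy $1/\kappa \le |\lambda_j| \le 1$), and implement $e^{iAt}$ via the best known sparse Hamiltonian simulation algorithm, which costs $O(dt\,\poly(\log(dt/\delta_H)))$ queries to $\boxA$ for a $\delta_H$-accurate simulation.

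Next, I would run quantum phase estimation using $e^{iAt_0}$ on $|b\>$ to obtain approximately $\sum_j \beta_j |u_j\>|\tilde{\lambda}_j\>$ with $|\tilde{\lambda}_j - \lambda_j| \le \delta_\lambda$, and, conditioned on $\tilde{\lambda}_j$, apply the controlled rotation $|\tilde{\lambda}_j\>|0\> \mapsto |\tilde{\lambda}_j\>\bigl(\tfrac{C}{\tilde{\lambda}_j}|1\> + \sqrt{1 - C^2/\tilde{\lambda}_j^2}\,|0\>\bigr)$ with $C = \Theta(1/\kappa)$. Uncomputing the phase-estimation register and post-selecting the ancilla on $|1\>$ produces a state close to $|x\> \propto A^{-1}|b\>$, with success probability $\Omega(1/\kappa^2)$. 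A standard error analysis fixes $\delta_\lambda = \Theta(\epsilon/\kappa)$ to guarantee $\epsilon$-closeness to $|x\>$, which forces $t_0 = \Theta(\kappa/\epsilon)$. Finally, $O(\kappa)$ rounds of amplitude amplification on the ancilla flag boost the success probability to $\Omega(1)$.

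Assembling the costs, each amplification round performs one Hamiltonian simulation of duration $t_0 = \Theta(\kappa/\epsilon)$ with simulation error $\delta_H = \Theta(\epsilon/\kappa)$, at cost $O((d\kappa/\epsilon)\,\poly(\log(d\kappa/\epsilon)))$ queries to $\boxA$, yielding the claimed $O((d\kappa^2/\epsilon)\,\poly(\log(d\kappa/\epsilon)))$ total queries, plus $O(1)$ uses of $\boxB$ per round for a total that fits inside the stated $O(d\kappa\,\poly(\log(d\kappa/\epsilon)))$ bound. Gate-efficiency holds because the QFT inside phase estimation, the inversion rotation, and the amplification reflections all have polylogarithmic overhead in $N$, $d$, $\kappa$, and $1/\epsilon$.

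The main obstacle is the error analysis of the middle stage: branches whose eigenvalue estimate lands in the tail of the phase-estimation distribution get amplified by $1/\tilde{\lambda}_j = O(\kappa)$ during inversion, so one must show these tails contribute only $O(\epsilon)$ to the post-selected state. It is precisely the resulting $\delta_\lambda = \Theta(\epsilon/\kappa)$ requirement, combined with the $\Omega(1/\delta_\lambda)$ lower bound on phase-estimation cost, that produces the $1/\epsilon$ factor the rest of the paper is devoted to removing.
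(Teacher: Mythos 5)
Your proposal is correct and takes essentially the same route as the paper: \thm{HHL} is stated there without an independent proof, being exactly the original HHL phase-estimation-plus-controlled-rotation-plus-amplitude-amplification algorithm with the Hamiltonian simulation subroutine replaced by that of \cite{BCK15}, which is what you reconstruct with the right parameter choices ($\delta_\lambda = \Theta(\epsilon/\kappa)$, $t_0 = \Theta(\kappa/\epsilon)$, $O(\kappa)$ amplification rounds). Your $O(\kappa)$ count of uses of $\boxB$ is consistent with (indeed within) the stated $O(d\kappa \poly(\log(d\kappa/\epsilon)))$ bound.
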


The stated complexity uses the best known results on Hamiltonian simulation \cite{BCK15}, improving the $d$-dependence compared to \cite{HHL09}.  More generally, the HHL algorithm also works assuming only the ability to efficiently solve the Hamiltonian simulation problem for $A$, i.e., to efficiently implement the unitary operation $\exp(-iAt)$, without having direct access to $\boxA$. In other words, the HHL algorithm uses Hamiltonian simulation for $A$ as a black box.

We improve the $\epsilon$-dependence of the HHL algorithm from $\poly(1/\epsilon)$ to $\poly(\log(1/\epsilon))$, keeping essentially the same dependence on the other parameters. We provide two algorithms for the QLSP, one based on decomposing an operator using its Fourier series and another based on a decomposition into Chebyshev polynomials. The Fourier approach has slightly worse dependence on $\log(1/\epsilon)$, but uses Hamiltonian simulation as a black box only and is therefore more generally applicable.

\begin{theorem}[Fourier approach]
\label{thm:Fourier}
The QLSP can be solved with $O(\kappa \sqrt{\log(\kappa/\epsilon)})$ uses of a Hamiltonian simulation algorithm that approximates $\exp(-iAt)$ for $t=O(\kappa\log(\kappa/\epsilon))$ with precision $O(\epsilon/(\kappa\sqrt{\log(\kappa/\epsilon)}))$.  Using the best known algorithm for Hamiltonian simulation \cite{BCK15}, this makes $O(d\kappa^2 \log^{2.5}(\kappa/\epsilon))$ queries to $\boxA$, makes $O(\kappa \sqrt{\log(\kappa/\epsilon)})$ uses of $\boxB$, and has gate complexity
$O(d \kappa^2 \log^{2.5}(\kappa/\epsilon) (\log N + \log^{2.5}(\kappa/\epsilon)) )$. 
\end{theorem}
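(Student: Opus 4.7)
My plan is to apply $A^{-1}$ directly to $|b\rangle$ by first writing $A^{-1}$ as a linear combination of time evolutions $e^{-iAt}$ via a Fourier-series approximation of $1/x$ on the spectrum of $A$, then realizing this combination through the LCU framework of \sec{LCU}, and finally paying for postselection with amplitude amplification. This sidesteps phase estimation, which is the source of the $\poly(1/\epsilon)$ bottleneck in \thm{HHL}. Since $\|A\|\le 1$ and the condition number is $\kappa$, every eigenvalue of $A$ lies in $D_\kappa := [-1,-1/\kappa]\cup[1/\kappa,1]$, so all approximations only need hold pointwise on $D_\kappa$.

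\emph{Step 1 (Fourier approximation).} I aim to exhibit coefficients $\alpha_m$ and times $t_m$, with $\max_m|t_m|\le T = O(\kappa\log(\kappa/\epsilon))$, such that $h(x) := \sum_m \alpha_m e^{-ix t_m}$ satisfies $|h(x)-1/x|\le O(\epsilon/\kappa)$ for all $x\in D_\kappa$, and such that the coefficient $\ell_1$-norm $\|\alpha\|_1 := \sum_m |\alpha_m|$ is at most $O(\kappa\sqrt{\log(\kappa/\epsilon)})$. The construction starts from the Gaussian-mollified integral representation
\[
\frac{1}{x}\;\approx\;\frac{-i}{\sqrt{2\pi}}\int_0^{T_y}\!\int_{-T_z}^{T_z} z\,e^{-z^2/2}\,e^{-ix y z}\,dz\,dy,
\]
taking $T_y = \Theta(\kappa\log(\kappa/\epsilon))$ (killing the oscillatory tail on $D_\kappa$), $T_z = \Theta(\sqrt{\log(\kappa/\epsilon)})$ (killing the Gaussian tail), and discretizing on grids with polynomially small spacings $\Delta y,\Delta z$. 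The Gaussian factor $z\,e^{-z^2/2}$ concentrates the coefficient mass onto a $\sqrt{\log(\kappa/\epsilon)}$-sized window of $z$-values, which is where the $\sqrt{\log}$ improvement in the $\ell_1$-norm originates.

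\emph{Steps 2 and 3 (LCU, amplitude amplification, and bookkeeping).} Feeding $h$ into the LCU lemma of \sec{LCU}, I prepare an ancilla in $\sum_m \sqrt{|\alpha_m|/\|\alpha\|_1}\,|m\rangle$, execute $\mathrm{SELECT} = \sum_m |m\rangle\langle m|\otimes e^{-iAt_m}$ by one controlled invocation of the Hamiltonian simulation subroutine of total time $O(T)$ (using the regular-grid structure of the $t_m$'s so that the controlled exponentials collapse into a single long simulation), and uncompute the ancilla. Postselecting $|0\rangle$ returns $h(A)|b\rangle/\|\alpha\|_1$. Since $\|A\|\le 1$ and $\vec{x} = A^{-1}\vec{b}$ give $\|A^{-1}|b\rangle\|\ge 1$, Step 1 yields $\|h(A)|b\rangle\|=\Omega(1)$ and thus per-round success amplitude $\Omega(1/\|\alpha\|_1)$. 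Fixed-point amplitude amplification then boosts the success probability above $1/2$ using $O(\|\alpha\|_1) = O(\kappa\sqrt{\log(\kappa/\epsilon)})$ rounds. A triangle-inequality argument on the eigendecomposition of $A$, combined with setting the simulation precision to $O(\epsilon/(\kappa\sqrt{\log(\kappa/\epsilon)}))$ so that simulation errors stay below $O(\epsilon)$ after multiplication by $\|\alpha\|_1$ and the round count, puts the normalized output within $\epsilon$ of $|x\rangle$. Plugging in the $\widetilde O(d\tau\log(1/\delta))$ sparse-Hamiltonian-simulation cost of \cite{BCK15} together with one use of $\boxB$ per round yields the stated query and gate complexities.

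\emph{Main obstacle.} The heart of the proof is Step 1: simultaneously achieving (i) $\|\alpha\|_1 = O(\kappa\sqrt{\log(\kappa/\epsilon)})$, (ii) frequency bound $O(\kappa\log(\kappa/\epsilon))$, and (iii) uniform error $O(\epsilon/\kappa)$ on $D_\kappa$. The four error contributions---$y$- and $z$-truncation together with $y$- and $z$-discretization---must each be bounded on all of $D_\kappa$ while the $\ell_1$-mass is kept sharp; without the Gaussian factor in $z$, the $\ell_1$-mass would inflate from $\sqrt{\log(\kappa/\epsilon)}$ to $\log(\kappa/\epsilon)$, losing the improvement over a naive single-integral Fourier approach. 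The remaining steps are routine applications of the LCU and amplitude-amplification machinery developed earlier in the paper.
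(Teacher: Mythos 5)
Your overall route is the same as the paper's: represent $1/x$ on $D_\kappa$ by the Gaussian-mollified double integral $\frac{i}{\sqrt{2\pi}}\int_0^\infty\d{y}\int_{-\infty}^{\infty}\d{z}\, z e^{-z^2/2}e^{-ixyz}$, truncate and discretize, implement the resulting linear combination of $e^{-iAt}$'s via the LCU machinery of \sec{LCU} (\cor{lcufunc}), and amplify. However, the quantitative heart of Step 1 --- the one place you yourself flag as the main obstacle --- is carried out with parameters that are inconsistent with the bounds you claim, and the stated mechanism for the $\sqrt{\log}$ saving is wrong. The coefficient $\ell_1$-norm of the discretized combination is $\Theta(T_y)\cdot\frac{1}{\sqrt{2\pi}}\int |z|e^{-z^2/2}\d z = \Theta(T_y)$: the $z$-Gaussian contributes only a constant to $\norm{\vec\alpha}_1$ no matter how you window it, so it cannot be "where the $\sqrt{\log}$ improvement originates." With your choice $T_y=\Theta(\kappa\log(\kappa/\epsilon))$ you would get $\norm{\vec\alpha}_1=\Theta(\kappa\log(\kappa/\epsilon))$, not $O(\kappa\sqrt{\log(\kappa/\epsilon)})$, and the largest evolution time would be $T_yT_z=\Theta(\kappa\log^{1.5}(\kappa/\epsilon))$, not $O(\kappa\log(\kappa/\epsilon))$; following your own bookkeeping this yields $O(\kappa\log(\kappa/\epsilon))$ uses of simulation and roughly $O(d\kappa^2\log^{3.5}(\kappa/\epsilon))$ queries --- strictly weaker than \thm{Fourier}.

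The missing idea is that the Gaussian mollifier makes the $y$-truncation error decay like a Gaussian, not like an oscillatory $1/(xT_y)$ tail: performing the $z$-integral first, the truncated function $g$ satisfies $|g(x)-1/x|\lesssim \frac{1}{|x|}e^{-(xy_J)^2/2}+\frac{2}{|x|}e^{-z_K^2/2}$ (this is \lem{FSfirstapprox}), so on $D_\kappa$ the much shorter cutoff $y_J=\Theta(\kappa\sqrt{\log(\kappa/\epsilon)})$ already gives error $\epsilon$, and then $\norm{\vec\alpha}_1=\Theta(y_J)=O(\kappa\sqrt{\log(\kappa/\epsilon)})$ while $\max_m|t_m|=y_Jz_K=O(\kappa\log(\kappa/\epsilon))$, exactly matching the theorem. (The subsequent discretization must also be controlled uniformly on $D_\kappa$; the paper does this via a Poisson-summation argument in \lem{FS1}, but since the query complexity is insensitive to the grid spacing this part is indeed routine.) Two minor points: uniform error $O(\epsilon)$ on $D_\kappa$ suffices --- you do not need $O(\epsilon/\kappa)$ --- because $|1/x|\ge 1$ there and \prop{statesclose}/\cor{lcufunc} convert operator error $\epsilon$ into state error $O(\epsilon)$; and per-invocation simulation precision $O(\epsilon/\norm{\vec\alpha}_1)$ is what is needed (errors add over the $O(\norm{\vec\alpha}_1)$ invocations), so "multiplying by $\norm{\vec\alpha}_1$ and the round count" overcounts, though only inside a logarithm.
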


The second approach uses the oracle for the entries of $A$ directly without using Hamiltonian simulation as a subroutine, achieving better dependence on $\epsilon$.

\begin{theorem}[Chebyshev approach]
\label{thm:Chebyshev}
The QLSP can be solved using $O(d\kappa^2 \log^{2}(d\kappa/\epsilon))$ queries to $\boxA$ and $ O(\kappa  \log(d\kappa /\epsilon))$ uses of $\boxB$, with gate complexity  $O(d\kappa^2 \log^{2}(d\kappa/\epsilon) (\log N + \log^{2.5}(d\kappa/\epsilon)))$.
\end{theorem}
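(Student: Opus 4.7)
The plan is to approximate $A^{-1}$ by a low-degree polynomial in $A$ built from Chebyshev polynomials, to implement each $T_k(A)$ on the state $|b\>$ using the Szegedy-style sparse-matrix quantum walk, and to combine the pieces with the linear combination of unitaries (LCU) framework developed earlier in the paper (see \sec{LCU}). Because $\|A\|=1$ and $\|A^{-1}\|\le\kappa$, the spectrum of $A$ lies in $D_\kappa := [-1,-1/\kappa]\cup[1/\kappa,1]$, so the scalar polynomial we construct only has to approximate $1/x$ on $D_\kappa$, and may behave arbitrarily in the forbidden band $(-1/\kappa,1/\kappa)\setminus\{0\}$.

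The first step is to construct a Chebyshev expansion $p(x)=\sum_{k=0}^D \alpha_k T_k(x)$ of degree $D = O(\kappa\log(\kappa/\epsilon))$ with $|p(x)-1/x|\le\epsilon/\kappa$ on $D_\kappa$, $\|p\|_\infty = O(\kappa)$ on $[-1,1]$, and coefficient $\ell_1$-norm $\sum_k|\alpha_k| = O(\kappa)$. A natural route is to smoothly damp the singularity at the origin---for instance by replacing $1/x$ with a function that agrees with it on $D_\kappa$ up to error $\epsilon/\kappa$ but is smooth and bounded by $O(\kappa)$ on all of $[-1,1]$---and then truncate the resulting uniformly convergent Chebyshev series once its coefficients fall below the target error. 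Proving the existence of such an approximation with both the claimed degree \emph{and} the claimed $\ell_1$-bound on coefficients is the main technical obstacle, since the degree controls the length of the quantum walk while the $\ell_1$-norm controls the amplitude of the desired state in the LCU output.

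The second step is to implement the LCU $\sum_k \alpha_k T_k(A)$. I would use the sparse-matrix quantum walk $W$ whose single step costs $O(d)$ queries to $\boxA$ and whose eigenphases $\pm\arccos(\lambda_j)$ are in one-to-one correspondence with the eigenvalues $\lambda_j$ of $A$; after conjugation with the standard isometry, $W^k$ acts as $T_k(A)$ on a subspace that contains $|b\>$. Wrapping this in the LCU framework, a PREPARE subroutine creates an ancilla state with amplitudes $\propto\sqrt{|\alpha_k|}$, and a SELECT subroutine, controlled on this ancilla, applies $W^k$ with the appropriate sign. Restricting the resulting circuit to the $|0\>$-ancilla block realizes $p(A)/(\sum_j|\alpha_j|)$ up to operator-norm error $\epsilon/\kappa$, using $O(D\cdot d)$ queries to $\boxA$ and one use of $\boxB$.

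Applying this circuit to a freshly prepared $|b\>$ register and postselecting on $|0\>$ of the ancilla yields $p(A)|b\>/(\sum_j|\alpha_j|)$, which after renormalization is $\epsilon$-close to $|x\>$ and carries postselection amplitude $\Omega(1/\kappa)$. Robust amplitude amplification with $O(\kappa)$ rounds boosts the success probability to $\Omega(1)$. Multiplying (amplification rounds)$\times$(polynomial degree)$\times$(walk cost) gives $O(\kappa)\cdot O(\kappa\log(\kappa/\epsilon))\cdot O(d)$ queries; tightening each step's error to $\epsilon/\poly(\kappa,\log)$ so that errors accumulated through amplification stay below $\epsilon$ contributes one further logarithmic factor, yielding the claimed $O(d\kappa^2\log^2(d\kappa/\epsilon))$ query count and the $O(\kappa\log(d\kappa/\epsilon))$ count of $\boxB$ uses. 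The extra $\log N + \log^{2.5}(d\kappa/\epsilon)$ factor in the gate complexity accounts for the arithmetic inside the walk step and for the approximate preparation of the $(D+1)$-dimensional PREPARE state.
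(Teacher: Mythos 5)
Your overall strategy (Chebyshev approximation of $1/x$ on the relevant spectral domain, implemented term-by-term via a quantum walk and combined with the LCU framework plus amplitude amplification) is the same as the paper's, but two of your key steps do not hold as stated. First, the walk primitive you invoke does not exist in this setting: the sparse-matrix walk built from $\boxA$ (with states normalized by $1/\sqrt{d}$) satisfies $T^\dag S T = A/d$, so within the invariant subspaces $W^k$ acts as $\chebt_k(A/d)$, \emph{not} $\chebt_k(A)$, and each walk step costs $O(1)$ queries, not $O(d)$. There is no known way to get eigenphases $\pm\arccos(\lambda_j)$ for the eigenvalues $\lambda_j$ of $A$ itself at a cost of $O(d)$ queries per step; achieving normalization $\norm{A}$ rather than $d\norm{A}_{\max}$ is precisely what this construction cannot do. Consequently the polynomial must approximate $1/x$ on $D_{\kappa d}$, not $D_\kappa$, which forces degree $O(d\kappa\log(d\kappa/\epsilon))$ rather than your $O(\kappa\log(\kappa/\epsilon))$; the final query count comes out the same because the $d$ enters through the degree (times $O(1)$ queries per step) instead of through the per-step cost, but your accounting rests on a subroutine the paper does not have, and the $1/d$ rescaling of the target function (the paper implements $\frac{1}{d}g(A/d)$) is also what keeps the coefficient $\ell_1$-norm, and hence the number of uses of $\boxB$, at $O(\kappa\log(d\kappa/\epsilon))$.

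Second, you explicitly defer the central approximation-theory lemma---existence of a truncated Chebyshev series with the claimed degree and $\ell_1$ coefficient bound---calling it ``the main technical obstacle.'' That is indeed the heart of the proof, and the paper settles it constructively: it multiplies $1/x$ by the taming factor $1-(1-x^2)^b$ with $b = O((\kappa d)^2\log(\kappa d/\epsilon))$, shows the resulting polynomial has an \emph{exact} Chebyshev expansion whose coefficients are binomial tail probabilities, and truncates at order $j_0 = O(\sqrt{b\log(b/\epsilon)}) = O(d\kappa\log(d\kappa/\epsilon))$ using a Chernoff bound, which simultaneously yields the degree bound and the $\ell_1$-norm bound $O(j_0/d)$. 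Without this (or an equivalent explicit construction), and with the walk misidentified as acting by $\chebt_k(A)$, the proposal does not yet constitute a proof of \thm{Chebyshev}, although the high-level architecture and the gate-complexity bookkeeping (per-step cost $O(\log N + \log^{2.5}(d\kappa/\epsilon))$, state preparation for the LCU ancilla) match the paper's.
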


Both our algorithms achieve $\poly(\log(1/\epsilon))$ dependence on error and have similar complexity up to logarithmic terms, but are incomparable.  The Fourier approach is more general, applying whenever the Hamiltonian $A$ can be efficiently simulated (even if it is not necessarily sparse), and has slightly better dependence on $d$.  
The Chebyshev approach is more efficient in its dependence on $\kappa$ and $\epsilon$, but applies only to sparse Hamiltonians.

Ambainis later improved the $\kappa$-dependence of the HHL algorithm from quadratic to nearly linear \cite{Amb12}, which is essentially optimal since the dependence on $\kappa$ cannot be made sublinear \cite{HHL09}.
Since Ambainis's approach crucially uses phase estimation, applying it to our algorithms would increase their $\epsilon$-dependence back to $\poly(1/\epsilon)$. By carefully modifying the technique, we simultaneously achieve nearly linear scaling in $\kappa$ and logarithmic dependence on $\epsilon$.

\begin{theorem}[Linear scaling in $\kappa$]
\label{thm:linear}
The QLSP can be solved by a gate-efficient algorithm that makes $O\big({d\kappa}\poly(\log(d\kappa/\epsilon))\big)$ queries to the oracles $\boxA$ and $\boxB$. 
\end{theorem}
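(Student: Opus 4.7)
The plan is to combine the Chebyshev-based algorithm of \thm{Chebyshev} with a phase-estimation-free variant of Ambainis's variable-time amplitude amplification (VTAA). The Ambainis speedup reduces the dependence on $\kappa$ from $\kappa^2$ to $\kappa$ by recognizing that, when applying $A^{-1}$ to $|b\>$, eigenvectors with large eigenvalues need much less amplification than those with eigenvalues near $1/\kappa$, and VTAA balances these costs across branches. The difficulty is that the original construction uses phase estimation to decide which eigenvalue bin a state lies in, which would reintroduce $\poly(1/\epsilon)$ dependence.

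First I would replace phase estimation with a smooth eigenvalue filter built from the linear-combination-of-unitaries framework of \sec{LCU}, applied to a truncated Chebyshev expansion that approximates the indicator function of an interval. Following the recipe underlying \thm{Chebyshev}, one can implement, for any threshold $\lambda_0 \in [1/\kappa,1]$, a unitary $U_{\lambda_0}$ that flags whether an eigenvalue of $A$ exceeds $\lambda_0$ with error $\epsilon'$ using $O\bigl(d\poly(\log(1/\epsilon'))/\lambda_0\bigr)$ queries to $\boxA$. The key features are that the cost scales as $1/\lambda_0$, not as $\kappa$, and the error scales polylogarithmically, not polynomially.

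Next I would partition the eigenvalue range $[1/\kappa,1]$ into $m = O(\log\kappa)$ dyadic bins $I_j = [2^{-(j+1)},2^{-j}]$ and build a variable-stopping-time algorithm $\mathcal{A}$: on input $|b\>$, run the filters $U_{2^{-j}}$ for $j=1,\ldots,m$, declaring that a branch "stops" at round $j$ once it is identified as lying in $I_j$, and then applying a localized Chebyshev inversion on that branch. The round-$j$ subroutine costs $T_j = O(2^j\cdot d\poly(\log(d\kappa/\epsilon)))$, and the stopping amplitudes satisfy $p_j = \Theta((2^j/\kappa)^2)$ since the HHL success amplitude on an eigenvalue $\lambda$ is $\Theta(1/(\kappa\lambda))$. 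Plugging these into Ambainis's VTAA master theorem, whose internals use no phase estimation but only the variable-stopping-time structure, yields total cost $O\bigl(T_{\max} + \sqrt{\sum_j p_j T_j^2}/\sqrt{\sum_j p_j}\bigr)$, and the telescoping $\sum_j (2^j/\kappa)^2(2^j)^2 = O(\kappa^2)$ combined with $\sum_j p_j = \Omega(1/\kappa^2)$ gives the advertised $O\bigl(d\kappa\poly(\log(d\kappa/\epsilon))\bigr)$ query bound to both $\boxA$ and $\boxB$. Gate efficiency is inherited from the LCU construction.

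The main obstacle will be ensuring that the smooth LCU filters can substitute cleanly for the projective eigenvalue gating that VTAA assumes. In Ambainis's original analysis the stopping events are orthogonal projectors, whereas our filters are only approximate and leave small coherences between adjacent bins. Handling this requires a hybrid argument that swaps each approximate filter for its ideal projective counterpart, accumulating errors additively across the $O(\log\kappa)$ rounds and across the $O(\kappa)$ outer VTAA iterations; because each filter achieves error $\epsilon/\poly(\log(d\kappa/\epsilon))$ at only polylogarithmic cost overhead, the accumulated error stays below $\epsilon$ while only multiplying the final complexity by additional polylogarithmic factors, keeping the bound of the form $d\kappa\poly(\log(d\kappa/\epsilon))$.
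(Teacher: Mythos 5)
Your proposal follows the same skeleton as the paper's \sec{VTAA} (dyadic eigenvalue bins, a variable-time algorithm that applies a bin-localized inverse, then \thm{VTAA}), but the one place where you deviate—replacing eigenvalue binning by smooth LCU Chebyshev filters so as to be ``phase-estimation-free''—is where the argument breaks. First, the premise is off: low-precision phase estimation does not reintroduce $\poly(1/\epsilon)$; distinguishing $|\lambda|\le \varphi_j$ from $|\lambda|\ge 2\varphi_j$ at precision $\varphi_j=2^{-j}$ with failure probability $\epsilon'$ costs only $O(2^j\log(1/\epsilon'))$ (this is exactly the paper's gapped phase estimation, \lem{GPE}). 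Second, and more seriously, the LCU machinery of \lem{LCU} and \cor{lcufunc} does not yield the coherent two-outcome filter your variable-time algorithm needs. Applying a Chebyshev approximation $f(A)$ of a thresholded step function via LCU produces $\frac{1}{\alpha}\ket{0^m}f(A)\ket{\psi}+\ket{\Psi^\perp}$, where the unflagged branch $\ket{\Psi^\perp}$ is \emph{not} of the form $\ket{1}\otimes g(A)\ket{\psi}$: the data register there is entangled with the ancillas in an uncontrolled way. A variable-time algorithm requires the ``not yet stopped'' branch to carry the input eigenvector intact so that later rounds can act on it—GPE has exactly this property, since it outputs $(\beta_0\ket{0}\ket{\gamma_0}+\beta_1\ket{1}\ket{\gamma_1})\ket{\theta}$ with the data register untouched in both branches—whereas your filter corrupts the continue branch. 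Without an additional mechanism to implement both $f(A)$ and its complement coherently (something outside this paper's toolkit), your construction does not define a variable-time quantum algorithm in the sense required by \thm{VTAA}.

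Two further gaps. Even with ideal projective bins, the bin record (your filter flags; the paper's clock register and GPE ancillas) remains entangled with the eigenvector, so the data register holds a mixture over bins rather than a state close to the pure $\ket{x}$; the paper must finish with the uncomputation $(\A')^\dagger$—rerunning the binning with the inversion replaced by a trivial flag flip, in reverse—to erase this which-bin information. Your hybrid argument only controls approximation error and never addresses this erasure. Finally, your VTAA bookkeeping conflates the stopping distribution with the success amplitude: the probability of stopping at round $j$ is essentially the weight of $\ket{b}$ on the $j$-th bin (the binning fires according to the eigenvalue, regardless of whether the inversion succeeds), while the factor $\Theta((2^j/\kappa)^2)$ belongs to $p_\mathrm{succ}$; moreover, with the numbers you state ($\sum_j p_j=\Omega(1/\kappa^2)$) the formula you quote gives $O(d\kappa^2)$ rather than $O(d\kappa)$. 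The correct accounting, as in the paper, is $t_\mathrm{avg}^2=\sum_k |c_k|^2\,\O(d/\lambda_k)^2$ and $p_\mathrm{succ}=\frac{1}{\alpha_{\max}^2}\sum_k |c_k|^2/\lambda_k^2$ with $\alpha_{\max}=\O(\kappa)$, so the factor $\sum_k|c_k|^2/\lambda_k^2$ cancels in $t_\mathrm{avg}/\sqrt{p_\mathrm{succ}}$ for \emph{every} input $\ket{b}$, which is what delivers $\O(d\kappa)$.
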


Additionally, if $\vec{b}$ is known to lie in an invariant subspace of $A$ of condition number $\kappa'<\kappa$, then $\kappa$ can be replaced by $\kappa'$ in any of our upper bounds (as in the HHL algorithm). In particular, this means our algorithm works even when $A$ is not invertible as long as $\vec{b}$ is known to lie outside the null space of $A$, i.e., when $\vec{x}=A^{-1}\vec{b}$ is well defined. This property is useful when the matrix $A$ is rectangular, as the reduction in \cite{HHL09} from rectangular to square matrices produces noninvertible matrices.

\subsection{High-level overview}

We now provide a high-level overview of our approach. The QLSP is equivalent to applying the (non-unitary) operator $A^{-1}$ to the state $|b\>$. Our general strategy is to represent $A^{-1}$, the operator we would like to perform, as a linear combination of unitaries we know how to perform. We then show how such a representation allows us to implement $A^{-1}$.

Our technique for implementing linear combinations of unitary operations arises from previous work on Hamiltonian simulation \cite{BCC+14,Kot14,BCC+15,BCK15} (see also \cite{SOGKL02,CW12} for previous related approaches).
As an example, consider implementing the operator $M = U_0 + U_1$, where $U_0$ and $U_1$ are unitaries with known quantum circuits. To implement $M$ on a state $|\psi\>$, we start with the state $|+\>|\psi\>$, where $|\pm\> := \frac{1}{\sqrt{2}}(|0\> \pm |1\>)$. We then perform the unitary $|0\>\<0|\otimes U_0 + |1\>\<1| \otimes U_1$, giving the state $\frac{1}{\sqrt{2}}(|0\>U_0|\psi\>+|1\>U_1|\psi\>)$. If we measure the first qubit in the $\{|+\>,|-\>\}$ basis and obtain the $|+\>$ outcome, then we prepare a state proportional to $M|\psi\>$. If we have the ability to create multiple copies of $|\psi\>$ or reflect about $|\psi\>$, then we can create the output state with high probability by repeating this process until we get the desired measurement outcome or by using amplitude amplification. As we describe in \sec{LCU}, this strategy for implementing a linear combination of unitaries works more generally.

However, it is unclear how to decompose $A^{-1}$ as a linear combination of easy-to-implement unitaries. Such a decomposition depends on what unitaries are used as elementary building blocks. Our first choice is to use the unitaries $\exp(-iAt)$, which can be implemented using any Hamiltonian simulation algorithm. We then have to represent $A^{-1}$ as $\sum_j \alpha_j \exp(-iAt_j)$ for some coefficients $\alpha_j$ and evolution times $t_j$. Since both sides of the equation are diagonal in the same basis, this is equivalent to representing $x^{-1}$ as a linear combination of $\sum_j \alpha_j \exp(-ixt_j)$.  The representation only needs to be correct for $x\in D_\kappa := [-1,-1/\kappa] \cup [1/\kappa,1]$ since we know the eigenvalues of $A$ fall in that range. For this range of $x$, we show how to approximate $x^{-1}$ as a linear combination of these unitaries in \sec{Fourier}. Our strategy for doing this broadly comprises the following steps. Since $x^{-1}$ is unbounded at the origin, we first ``tame'' the function by multiplying it with a function that is close to $1$ in the domain we care about but $0$ near the origin, so that the overall function is bounded. We then perform the Fourier transform to obtain an integral over various $\exp(-ixt)$. After making some approximations, we discretize the integral to obtain a finite sum over $\exp(-ixt)$.

Instead of using the unitaries $e^{-iAt}$, our second approach uses the operators $\chebt_n(A/d)$ as its building blocks, where $d$ is the sparsity of $A$ and $\chebt_n$ is the \th{n} Chebyshev polynomial of the first kind. These operators can be efficiently implemented using quantum walks \cite{Chi10,BC12}. Now the problem is to represent $x^{-1}$ as a linear combination $\sum_n \alpha_n T_n(x/d)$. Our strategy for obtaining this representation is similar to the previous case. We first tame the function by making it bounded near the origin. We thereby obtain a function that is exactly representable as a linear combination of Chebyshev polynomials, and we approximate this function by dropping low-weight terms, as described in \sec{Chebyshev}.
Combining these decompositions with the linear combination of unitaries strategy outlined above yields our main results, \thm{Fourier} and \thm{Chebyshev}. 

Lastly, we show how to decrease the $\kappa$-dependence of our algorithms to nearly linear while retaining the desired $\poly(\log(1/\epsilon))$ dependence on $\epsilon$. 
This improvement uses the observation that the complexity of our algorithms (and the HHL algorithm) is a product of two terms that depend on the eigenvalue $\lambda$ of $A$. Whereas the product of the maximum values of these two terms is quadratic in $\kappa$, the maximum of the product is only linear in $\kappa$. By introducing a technique called variable-time amplitude amplification, Ambainis exploited this observation to improve the $\kappa$-dependence of the HHL algorithm \cite{Amb12}. Since our approach deliberately avoids phase estimation, it is not immediately clear how to invoke a strategy based on resolving the eigenvalues of $A$. Nevertheless, we show that a careful application of low-precision phase estimation suffices to resolve the eigenvalues into buckets of exponentially increasing size, which allows us to apply the appropriate approximation of $A^{-1}$ for a given eigenvalue range. We then apply variable-time amplitude amplification on this algorithm, followed by additional post-processing to remove information left behind by phase estimation. In \sec{VTAA} we establish \thm{linear} and show how the $\kappa$-dependence of both algorithms can be made nearly linear, just as Ambainis showed for the original HHL algorithm.

Note that although we use the linear combination of unitaries approach (\sec{LCU}) to implement $A^{-1}$ using Fourier or Chebyshev expansions, the approach can be used generally to implement any function of $A$ that can be expressed as a linear combination of easy-to-implement unitaries. The problem of decomposing a function as an approximate linear combination of other functions, such as polynomials or trigonometric polynomials, is well studied in the field of approximation theory \cite{Che82,SV13}, and techniques from that literature might be applied in future applications of our techniques.  However, to the best of our knowledge, our specific results on approximation of the inverse by a Fourier or Chebyshev series are novel.  The closest work we are aware of shows how to approximate $A^{-1}$ as a linear combination of operators $\exp(-At)$ \cite[Ch.~12]{SV13}.  That expansion does not appear to be useful for our purposes (even taking $t$ imaginary) since it includes terms with long evolution times and large coefficients, resulting in high complexity.

\section{Implementing a linear combination of unitaries}
\label{sec:LCU}

We start by showing in \sec{framework} how to implement an operation $M$ that can be expressed as a linear combination of implementable unitaries. We then explain in \sec{application} how this primitive can be applied to solve the QLSP, given a suitable decomposition of $A^{-1}$.

\subsection{Framework}
\label{sec:framework}

A technique for implementing linear combinations of unitaries was introduced in some recent Hamiltonian simulation algorithms \cite{BCC+15,BCK15}. Since quantum simulation is unitary, $M$ is (nearly) unitary in the simulation algorithms based on this technique.  Furthermore, in Hamiltonian simulation we only have one copy of the input state. Under these circumstances, one can use oblivious amplitude amplification \cite{BCC+14} to implement $M$. In the QLSP, we can create multiple copies of the input state $|b\>$, so we do not require a tool like oblivious amplitude amplification (which would not work anyway when $M$ is far from unitary).

More precisely, our technique for implementing a linear combination of unitary operations is as follows.  Let $M = \sum_{i} \alpha_i U_i$ be a linear combination of unitary matrices $U_i$ with $\alpha_i >0$. 
We assume $\alpha_i>0$ without loss of generality since a phase can be subsumed into $U_i$. We show how to implement $M$ probabilistically using unitary operations $U$ and $V$ defined as follows.  
The operation $U := \sum_i |i\>\<i| \otimes U_i$ implements $U_i$ conditioned on the value of a control register. The operation $V$ maps $|0^m\>$ to $\frac{1}{\sqrt{\alpha}}\sum_i \sqrt{\alpha_i}|i\>$, where $\alpha := \norm{\vec{\alpha}}_1 = \sum_i \alpha_i$. Then, as shown in \cite[Lemma 2.1]{Kot14}, we can implement $M$ in the following sense. 

\begin{lemma}
\label{lem:lincomb}
Let $M = \sum_{i} \alpha_i U_i$ be a linear combination of unitaries $U_i$ with $\alpha_i >0$ for all $i$. Let $V$ be any operator that satisfies  $V|0^m\> := \frac{1}{\sqrt{\alpha}}\sum_i \sqrt{\alpha_i}|i\>$, where $\alpha := \sum_i \alpha_i$.
Then $W := V^{\dag}UV$ satisfies
\be
W|0^m\>|\psi\> =\frac 1 \alpha |0^m\>M|\psi\> + |\Psi^\perp\>
\ee
for all states $|\psi\>$, where $U := \sum_i |i\>\<i| \otimes U_i$ and $(|0^{m}\>\<0^{m}| \otimes \id)|\Psi^\perp\>=0$.
\end{lemma}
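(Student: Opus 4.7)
The plan is to unwind the definition of $W = V^\dagger U V$ by a direct computation and then project onto the $|0^m\rangle$ subspace of the ancilla register. Concretely, first I would apply $V$ to the ancilla to turn $|0^m\rangle|\psi\rangle$ into $\frac{1}{\sqrt{\alpha}}\sum_i \sqrt{\alpha_i}\,|i\rangle|\psi\rangle$. Then I would apply $U = \sum_i |i\rangle\langle i|\otimes U_i$, which by linearity produces $\frac{1}{\sqrt{\alpha}}\sum_i \sqrt{\alpha_i}\,|i\rangle\,U_i|\psi\rangle$. The remaining task is to understand the action of $V^\dagger$ on this state, and here I only need to know what happens on the $|0^m\rangle$ component of the output, because the definition of $|\Psi^\perp\rangle$ lets me lump everything else together.

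The cleanest way to extract that component is to compute $(\langle 0^m|\otimes \id)\,W|0^m\rangle|\psi\rangle$. Since $V|0^m\rangle = \frac{1}{\sqrt{\alpha}}\sum_j \sqrt{\alpha_j}|j\rangle$, taking the adjoint gives $\langle 0^m|V^\dagger = \frac{1}{\sqrt{\alpha}}\sum_j \sqrt{\alpha_j}\langle j|$. Pairing this with the post-$U$ state and using $\langle j|i\rangle = \delta_{ij}$ collapses the double sum to $\frac{1}{\alpha}\sum_i \alpha_i\,U_i|\psi\rangle = \frac{1}{\alpha}M|\psi\rangle$. Thus the projection of $W|0^m\rangle|\psi\rangle$ onto the $|0^m\rangle$ part of the ancilla register is exactly $\frac{1}{\alpha}|0^m\rangle\,M|\psi\rangle$.

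To finish, I would define
\begin{equation}
|\Psi^\perp\rangle \;:=\; W|0^m\rangle|\psi\rangle \;-\; \tfrac{1}{\alpha}\,|0^m\rangle\,M|\psi\rangle,
\end{equation}
which is tautologically the complementary component. Applying $|0^m\rangle\langle 0^m|\otimes \id$ to both sides and using the overlap calculation above shows $(|0^m\rangle\langle 0^m|\otimes\id)|\Psi^\perp\rangle = 0$, as required.

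There is essentially no obstacle here: the lemma is a bookkeeping statement about the ``select + prepare'' construction, and the whole content is that the coefficients $\sqrt{\alpha_i}$ in $V|0^m\rangle$ are chosen so that applying $V^\dagger$ at the end reassembles the $\alpha_i$ into $\sum_i \alpha_i U_i = M$ on the $|0^m\rangle$ branch. The only thing worth being careful about is that $V$ is only specified by its action on $|0^m\rangle$ and may be arbitrary on other inputs; this is precisely why we only get the $|0^m\rangle$-component cleanly and must put everything else into $|\Psi^\perp\rangle$.
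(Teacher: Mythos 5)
Your proof is correct: the direct computation of $(\<0^m|\otimes\id)V^\dag U V|0^m\>|\psi\>$, using $\<0^m|V^\dag = \frac{1}{\sqrt{\alpha}}\sum_j\sqrt{\alpha_j}\<j|$ and orthonormality to reassemble $\frac{1}{\alpha}M|\psi\>$, with $|\Psi^\perp\>$ defined as the residual, is exactly the standard argument. The paper itself gives no inline proof (it cites Lemma 2.1 of Kothari's thesis), and your argument matches that cited proof in approach, including the observation that $V$ need only be specified on $|0^m\>$.
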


The lemma can be generalized to the case where $M = \sum_i \alpha_i T_i$, where each operator $T_i$ is not necessarily unitary, but is a block of a unitary, i.e.,  there exists a $U_i$ for which $U_i|0^{t}\>|\phi\> = |0^{t}\>T_i|\phi\> + |\Phi^\perp\>$ for all states $|\phi\>$, where $t\geq 0$ is an integer and $|\Phi^\perp\>$ has no overlap on states with $|0^{t}\>$ in the first register. We consider this more general situation in \lem{LCU} below.

In \lem{lincomb}, the operator $W$ can be thought of  as a postselected or probabilistic implementation of $M$ in the sense that, if we measure the first $m$ qubits of $W|0^m\>|\psi\>$ and observe the output $|0^m\>$, the state of the second register is proportional to $M|\psi\>$.  This successful outcome occurs with probability $(\norm{M|\psi\>}/\alpha)^2$. 

In our application, since we have the ability to create copies of the input state, we can repeat this process $O((\alpha/\norm{M|\psi\>})^2)$ times until the measurement yields the desired outcome. Alternately, we can construct the state with high probability using amplitude amplification \cite{BHMT02}. Amplitude amplification requires us to reflect about the starting state $|\psi\>$, which in our application is $|b\>$. With two uses of the procedure $\boxB$ for preparing $|b\>$ (one performed in reverse), we can reflect about $|b\>$. Since amplitude amplification yields a quadratic speedup, we obtain the desired state after $O(\alpha / {\norm{M|\psi\>}})$ rounds in expectation. Combining amplitude amplification and a more general \lem{lincomb}, we get the following procedure for implementing (submatrices of) linear combinations of unitary operations.

\begin{lemma}[Non-Unitary LCU Lemma]
\label{lem:LCU}
Let $M = \sum_i \alpha_i T_i$ with $\alpha_i>0$ for some (not necessarily unitary) operators $\{T_i\}$.  Let $\{U_i\}$ be a set of unitaries such that 
\be 
\label{eq:T}
U_i|0^{t}\>|\phi\> = |0^{t}\>T_i|\phi\> + |\Phi_i^\perp\>
\ee 
for all states $|\phi\>$, where $t$ is a nonnegative integer and $(|0^{t}\>\<0^{t}| \otimes \id)|\Phi_i^\perp\> = 0$.  Given an algorithm $\boxB$ for creating a state $|b\>$, there is a quantum algorithm that exactly prepares the quantum state $M|b\>/\norm{M|b\>}$ with constant success probability making
$O(\alpha/\norm{M|b\>})$ uses of $\boxB$, $U$, and $V$ in expectation, where 
\be 
\quad U := \sum_i |i\>\<i| \otimes U_i, \quad V|0^m\> = \frac{1}{\sqrt{\alpha}}\sum_i \sqrt{\alpha_i}|i\>, \quad \mathrm{and} \quad \alpha := \sum_i \alpha_i,
\ee
and that outputs a bit indicating whether it was successful.
\end{lemma}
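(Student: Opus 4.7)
The plan is to proceed in two stages: first establish a ``one-shot'' version of the statement by generalizing \lem{lincomb} to blocks of unitaries, and then boost the success probability via amplitude amplification. Define $W := V^\dag U V$ acting on a combined ancilla register of $m$ qubits (for $V$) and $t$ qubits (for the blocks of each $U_i$) together with the system register holding $|b\>$. A direct computation should give
\be
W\,|0^m\>|0^t\>|b\> = \tfrac{1}{\alpha}\,|0^m\>|0^t\>\,M|b\> + |\Psi^\perp\>,
\ee
where $(|0^{m+t}\>\<0^{m+t}|\otimes\id)|\Psi^\perp\>=0$. I would verify this by expanding $V|0^m\>|0^t\>|b\> = \frac{1}{\sqrt{\alpha}}\sum_i \sqrt{\alpha_i}\,|i\>|0^t\>|b\>$, applying $U$ and using \eq{T} to produce $\frac{1}{\sqrt{\alpha}}\sum_i \sqrt{\alpha_i}\,|i\>(|0^t\>T_i|b\> + |\Phi_i^\perp\>)$, and then computing the overlap of $V^\dag$ applied to the first register with $|0^m\>$, which contributes $\sqrt{\alpha_i}/\sqrt{\alpha}$ for each $i$. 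This is exactly the calculation in \lem{lincomb}, extended in a cosmetic way so that the extra $t$-qubit register absorbs the non-unitarity of the $T_i$.

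Once this is in place, preparing $W|0^m\>|0^t\>|b\>$ and measuring the ancilla in the computational basis yields the outcome $|0^{m+t}\>$ with probability $\norm{M|b\>}^2/\alpha^2$, and, conditioned on that outcome, collapses the system register to $M|b\>/\norm{M|b\>}$. Naively, we could repeat the procedure and check the flag, at an expected cost of $O(\alpha^2/\norm{M|b\>}^2)$ calls to $\boxB$, $U$, and $V$.

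To improve this to $O(\alpha/\norm{M|b\>})$, I apply amplitude amplification \cite{BHMT02} to the routine $W(\id \otimes \boxB)$ acting on $|0^m\>|0^t\>|0^n\>$, with the ``good'' subspace being those states of the ancilla equal to $|0^{m+t}\>$. Amplitude amplification requires two reflections: reflection about $|0^{m+t}\>$ on the ancilla, which is a trivial Boolean check of the ancilla qubits; and reflection about the initial state $|0^m\>|0^t\>|0^n\>$ conjugated by $W(\id\otimes \boxB)$, i.e.\ about $W(\id\otimes \boxB)|0^{m+t+n}\>$. The reflection about $|0^{m+t+n}\>$ is a standard multi-controlled phase, and the $\boxB$ part requires one use of $\boxB$ and one of $\boxB^\dag$, so each amplification iteration uses $O(1)$ applications of $\boxB$, $U$, and $V$. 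Since $\norm{M|b\>}/\alpha$ is not known in advance, I would invoke the exponential-search variant of amplitude amplification from \cite{BHMT02}, which achieves an expected number of iterations $O(\alpha/\norm{M|b\>})$ without prior knowledge of the success probability and terminates with a definite success flag. The algorithm then outputs the measured flag together with the system register, succeeding with constant probability and, upon success, holding exactly the state $M|b\>/\norm{M|b\>}$.

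The main obstacle I expect is purely bookkeeping: carefully handling the combined $(m+t)$-qubit ancilla so that the ``block'' structure of each $T_i$ does not interfere with the amplitude amplification, and arguing that the exponential-search variant still yields the stated expected query complexity when the subroutine under amplification itself uses $O(1)$ queries to $\boxB$, $U$, and $V$. The core algebraic content is just the generalization of \lem{lincomb} to non-unitary blocks; everything else is a standard application of amplitude amplification.
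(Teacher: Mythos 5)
Your proposal is correct and follows essentially the same route as the paper: extend \lem{lincomb} so the extra $t$-qubit register absorbs the non-unitarity of the $T_i$, observe that measuring the ancilla of $W|0^{m+t}\>|b\>$ succeeds with probability $(\norm{M|b\>}/\alpha)^2$, and then apply the unknown-amplitude version of amplitude amplification from \cite{BHMT02} (with the reflection about the initial state implemented via one use each of $\boxB$ and $\boxB^\dag$) to reach the expected cost $O(\alpha/\norm{M|b\>})$. The bookkeeping concerns you flag are handled exactly as you anticipate, so there is no gap.
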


\begin{proof}
We start by implementing the linear combination $M' = \sum_i \alpha_i U_i$. Using \lem{lincomb}, we have 
\be 
W|0^m\>|\psi\> = \frac 1 \alpha |0^m\>M'|\psi\> + |\Psi^\perp\>,
\ee 
where $W=V^\dag UV$. Now consider the action of $W$ on $|\psi\> = |0^t\>|\phi\>$:
\begin{align}  
W|0^{m+t}\>|\phi\> &=\frac 1 \alpha |0^m\>\Big(\sum_i \alpha_i U_i\Big)|0^t\>|\phi\> + |\Psi^\perp\>\\
& = \frac 1 \alpha |0^m\>|0^t\> \Big(\sum_i \alpha_i T_i\Big)|\phi\> 
+ \frac 1 \alpha |0^m\>\Big(\sum_i \alpha_i |\Phi_i^\perp\>\Big)
+ |\Psi^\perp\>\\
& = \frac 1 \alpha |0^{m+t}\> M|\phi\> + |\Xi^\perp\>,
\end{align}
where $|\Xi^\perp\>$ satisfies $(|0^{m+t}\>\<0^{m+t}| \otimes \id)|\Xi^\perp\> = 0$.

Now since we want to create a state proportional to $M|b\>$, let us plug in $|\phi\> = |b\>$. For convenience,  let $r = m+t$, which gives
\be 
W|0^{r}\>|b\>  =\frac 1 \alpha |0^{r}\> M|b\> + |\Xi^\perp\> = \Big(\frac 1 \alpha\norm{M|b\>}\Big) |0^{r}\> \frac{M|b\>}{\norm{M|b\>}}+ |\Xi^\perp\>.
\ee 

In words, applying $W$ on $|0^{r}\>|b\>$ followed by a measurement on the first $r$ qubits will yield the desired state $M|b\>/\norm{M|b\>}$ with probability  $(\norm{M|b\>}/\alpha)^2$. 

Since $\boxB$ is an algorithm that creates the state $|b\>$, we can also use it to reflect about the state $|0^{r}\>|b\>$. Specifically, say $\boxB$ performs the map $\boxB|0^s\> = |b\>$, then the operator $ \boxB(\id - 2|0^{r+s}\>\<0^{r+s}|)\boxB^\dag$ reflects about the state $|0^{r}\>|b\>$.

Given an algorithm that creates  a desired state from an initial state with probability $q$, and the ability to reflect about the initial state, amplitude amplification \cite[Theorem 3]{BHMT02} is a procedure for creating the desired state with probability (say) $1/2$ that makes $O(1/\sqrt{q})$ uses of the algorithm and the reflection map in expectation. Note that amplitude amplification also works when the probability $q$ is unknown when we are concerned with expected costs, and hence we do not need an estimate of $\norm{M|b\>}$ before we begin. If we want an algorithm with a worst-case guarantee on cost, then we need to know an upper bound on the probability $q$. 

Using amplitude amplification we get an algorithm that creates the quantum state  $M|b\>/\norm{M|b\>}$ and makes 
$ O(\alpha / {\norm{M|b\>}})$ uses of $W=V^\dag UV$ and $\boxB$ in expectation.
\end{proof}

To be completely general, we have stated our results for an arbitrary set of unitaries $\{U_i\}$ and quantified costs in terms of uses of $U := \sum_i |i\>\<i| \otimes U_i$. When we apply these results, we will need to compute the complexity of implementing $U$ in terms of the complexities of implementing the $U_i$. The query complexity of implementing $U$ is precisely the maximum query complexity of implementing any of the $U_i$, which is optimal since $U$ cannot be easier to implement than any particular $U_i$. 

On the other hand, the gate complexity of $U$ is not easy to characterize in terms of the gate complexities of $U_i$.
In general, the different $U_i$ may be completely unrelated and the cost of implementing $U$ may be greater than the gate cost of the most expensive $U_i$. However, in our applications the matrices $\{U_i\}$ are related and are, in fact, powers of a single unitary $Y$. In this case the gate complexity of implementing $U$ behaves nicely, as we show below.

\begin{lemma}
\label{lem:c-U}
Let $U = \sum_{i=0}^N |i\>\<i|\otimes Y^i$, where $Y$ is a unitary with gate complexity $G$. 
Let the gate complexity of $Y^{2^{j}}$ be $G_j \leq 2^jG$.
Then the gate complexity of $U$ is $O(\sum_{j=0}^{\floor{\log N}} G_j)=O(NG)$. \end{lemma}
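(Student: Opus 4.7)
The plan is to exploit the binary representation of the exponent $i$ so that $U$ reduces to a product of $O(\log N)$ controlled operations involving the repeated squares of $Y$. Write $i = \sum_{j=0}^{\floor{\log N}} i_j\, 2^j$ with $i_j \in \{0,1\}$, so that
\be
Y^i = \prod_{j=0}^{\floor{\log N}} (Y^{2^j})^{i_j}.
\ee
For a control register spanning $\{0,1,\ldots,N\}$ encoded in $\ceil{\log(N+1)}$ qubits, this identity says that the action of $U$ on $|i\> \otimes |\psi\>$ can be realized by applying, for each $j$, the unitary $Y^{2^j}$ to the target register conditioned on the $j$th qubit of the control register being $1$.

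Thus $U$ decomposes as a product of $\floor{\log N}+1$ singly-controlled unitaries, the $j$th one being $\ctrl$-$Y^{2^j}$. Implementing such a controlled version requires only controlling each of the $G_j$ gates in a circuit for $Y^{2^j}$, which adds at most a constant multiplicative overhead per gate (controlled versions of any fixed $2$-qubit gate can be compiled into $O(1)$ two-qubit gates). Hence the gate complexity of the $j$th block is $O(G_j)$, and summing gives
\be
\text{gate complexity of } U \;=\; O\Bigl(\sum_{j=0}^{\floor{\log N}} G_j\Bigr).
\ee

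For the second equality, use the bound $G_j \leq 2^j G$ given in the hypothesis (which follows, for instance, from the trivial implementation $Y^{2^j} = Y\cdot Y \cdots Y$ using $2^j$ copies of $Y$). Then
\be
\sum_{j=0}^{\floor{\log N}} G_j \;\leq\; G \sum_{j=0}^{\floor{\log N}} 2^j \;\leq\; 2G\cdot 2^{\floor{\log N}} \;=\; O(NG),
\ee
completing the proof. There is no real obstacle here: the entire argument is the standard ``repeated squaring'' decomposition, and the only thing worth stating carefully is that controlling a $G_j$-gate circuit costs $O(G_j)$ two-qubit gates rather than something larger.
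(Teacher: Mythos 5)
Your proposal is correct and follows essentially the same argument as the paper: binary decomposition of the exponent, controlled applications of $Y^{2^j}$ on the $j$th control qubit with $O(G_j)$ overhead each, and summing $G_j \le 2^j G$ to get $O(NG)$. No substantive differences.
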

\begin{proof}
Let $n := \floor{\log N}$ and consider the unitary $Y^{2^j}$ for $j\in\{0,\ldots,n\}$, which has gate complexity $G_j$, which is at most ${2^j}G$. Hence the controlled version of this unitary, $\textrm{c-}Y^{2^j}$, controlled by a single qubit, has gate complexity $O(G_j)$. The unitary $U$ can then be implemented by a circuit that performs, for all $j\in\{0,\ldots,n\}$, the operation $\textrm{c-}Y^{2^j}$ on the second register controlled by the \th{j} qubit of the first register $|i\>$. If the first register is in state $|i\>$, the operation performed on the second register is exactly $Y^i$, due to the binary encoding of the integer $i$. The gate complexity of this circuit is $O(\sum_{j=0}^{\floor{\log N}} G_j)=O(\sum_{j=0}^{\floor{\log N}} 2^jG)=O(NG)$.
\end{proof}

We can implement gates of the form $U=\sum_{i=0}^N \sum_{j=0}^M |i, j\> \<i,j| \otimes Y^{ij}$ similarly. We use an additional register to compute the product of $i$ and $j$ and, conditional on this register, we apply $Y^{ij}$ as described in \lem{c-U}. Last, we uncompute the product in the additional register. The gate complexity in this case is $O(NMG)$.

\subsection{Application to QLSP}
\label{sec:application}

We now describe how this technique can be applied to solve the QLSP.  Suppose we can approximate $A^{-1}$ by a linear combination of operators $T_i$ that are either unitary or can be implemented by unitaries as in \eq{T}. Then we can implement this linear combination of operators, which is approximately $A^{-1}$, using \lem{LCU}. The following sections, \sec{Fourier} and \sec{Chebyshev}, establish that $A^{-1}$ can indeed be represented in such a way using two different choices for the operators $T_i$ that correspond to using a Fourier decomposition and a Chebyshev decomposition, respectively. 

For our application, we need to show that implementing an operator close to $A^{-1}$ yields a state close to the desired one. More precisely, we need to show that if two operators $C$ and $D$ are close, then the normalized states $C|\psi\>/\norm{C|\psi\>}$ and $D|\psi\>/\norm{D|\psi\>}$ are also close.

\begin{proposition}\label{prop:statesclose}
Let $C$ be a Hermitian operator with $\norm{C^{-1}} \le 1$ (i.e., the smallest eigenvalue of $C$ in absolute value is at least $1$) and let $D$ be an operator that satisfies $\norm{C-D} \leq \epsilon <1/2$. Then the states $|x\> := C|\psi\>/\norm{C|\psi\>}$ and $|\tilde{x}\> := D|\psi\>/\norm{D|\psi\>}$ satisfy $\norm{|x\> - |\tilde{x}\>} \leq 4 \epsilon$.
\end{proposition}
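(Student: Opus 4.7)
The plan is to reduce the claim to a standard ``closeness of normalized vectors'' estimate, exploiting the lower bound on $\|C|\psi\>\|$ provided by the hypothesis $\|C^{-1}\| \le 1$. Let $|\psi\>$ be a unit vector, and set $u := C|\psi\>$ and $v := D|\psi\>$.

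First I would observe that the hypothesis $\|C^{-1}\|\le 1$ implies $\|C\phi\| \ge \|\phi\|$ for every vector $\phi$: applying $C^{-1}$ to $C\phi$ gives $\|\phi\| = \|C^{-1}(C\phi)\| \le \|C^{-1}\|\,\|C\phi\| \le \|C\phi\|$. Hence $\|u\|\ge 1$. Next, from $\|C-D\|\le \epsilon$ we get $\|u-v\| = \|(C-D)|\psi\>\|\le \epsilon$, and by the reverse triangle inequality $\|v\| \ge \|u\| - \epsilon \ge 1-\epsilon > 1/2$ since $\epsilon < 1/2$.

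The main step is the standard identity
\be
\frac{u}{\|u\|} - \frac{v}{\|v\|} \;=\; \frac{u-v}{\|u\|} + v\left(\frac{1}{\|u\|} - \frac{1}{\|v\|}\right),
\ee
so that by the triangle inequality and $|\|u\|-\|v\|| \le \|u-v\|$,
\be
\left\| |x\> - |\tilde x\> \right\| \;\le\; \frac{\|u-v\|}{\|u\|} + \frac{|\|u\|-\|v\||}{\|u\|} \;\le\; \frac{2\|u-v\|}{\min(\|u\|,\|v\|)} \;\le\; \frac{2\epsilon}{1/2} \;=\; 4\epsilon,
\ee
which is the desired bound.

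There is no real obstacle here; the only thing to watch is to use $\epsilon < 1/2$ to guarantee that $\|v\|$ stays bounded away from $0$ so that the division is well defined and the constant works out to $4$. (In fact, using $\|u\| \ge 1$ directly in the denominator gives the tighter bound $2\epsilon$, so the factor of $4$ in the statement is comfortably loose.)
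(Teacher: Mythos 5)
Your proof is correct and follows essentially the same route as the paper's: a triangle-inequality split of the difference of normalized vectors through a hybrid intermediate term, with $\|C|\psi\>\|\ge 1$ and $\|D|\psi\>\|\ge 1-\epsilon$ controlling the denominators. The only (cosmetic) difference is which hybrid you insert---the paper uses $C|\psi\>/\|D|\psi\>\|$, bounding each term by $\epsilon/(1-\epsilon)\le 2\epsilon$, whereas your decomposition puts $\|C|\psi\>\|\ge 1$ in both denominators and, as your parenthetical notes, actually yields the sharper constant $2\epsilon$.
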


\begin{proof}
Without loss of generality we assume $\norm{|\psi\>}=1$. Using the triangle inequality, we get 
\be 
\label{eq:tri}
\norm{|x\> - |\tilde{x}\>} 
= \Norm{\frac{C|\psi\>}{\norm{C|\psi\>}} - \frac{D|\psi\>}{\norm{D|\psi\>}}}
\le \Norm{\frac{C|\psi\>}{\norm{C|\psi\>}} - \frac{C|\psi\>}{\norm{D|\psi\>}}} 
  + \Norm{\frac{C|\psi\>}{\norm{D|\psi\>}} - \frac{D|\psi\>}{\norm{D|\psi\>}}}.
\ee
Again using the triangle inequality, we have $\norm{C|\psi\>} \leq \norm{D|\psi\>} + \norm{(C-D)|\psi\>} \leq \norm{D|\psi\>} + \epsilon$, which yields
\be
\bigl|{\norm{D|\psi\>} - \norm{C|\psi\>}}\bigr| \leq \epsilon \qquad \mathrm{and} \qquad \norm{D|\psi\>} \geq \norm{C|\psi\>}-\epsilon\geq1-\epsilon,
\ee
where we used the fact that $\norm{C|\psi\>} \geq 1$ since the smallest eigenvalue of $C$ in absolute value is at least 1.
Then we can upper bound the first term of the right-hand side of \eq{tri} as follows: 
\be
\Norm{\frac{C|\psi\>}{\norm{C|\psi\>}} - \frac{C|\psi\>}{\norm{D|\psi\>}}} 
\leq \frac{\bigl|{\norm{D|\psi\>} - \norm{C|\psi\>}}\bigr|}{\norm{D|\psi\>}} \\
 \leq \frac{\epsilon}{\norm{D|\psi\>}} \leq \frac{\epsilon}{1-\epsilon} \leq 2\epsilon.
\ee
Analogously, we can bound the second term on the right-hand side of \eq{tri} as follows:
\be
\Norm{\frac{C|\psi\>}{\norm{D|\psi\>}} - \frac{D|\psi\>}{\norm{D|\psi\>}}}
\leq \frac{\bigl|{\norm{C|\psi\>} - \norm{D|\psi\>}}\bigr|}{\norm{D|\psi\>}} 
 \leq \frac{\epsilon}{\norm{D|\psi\>}} \leq \frac{\epsilon}{1-\epsilon} \leq 2\epsilon.
\ee
Since both terms are at most $2\epsilon$, we have $\norm{|x\> - |\tilde{x}\>} \leq 4\epsilon$.
\end{proof}

Finally, we state a corollary that captures how we apply \lem{LCU} to implement $A^{-1}$.  This allows us to focus on approximating $1/x$ by a suitable linear combination of other functions.
We say that functions $f$ and $g$ are $\epsilon$-close on a domain $D \subseteq \R$ if $|f(x)-g(x)| \le \epsilon$ for all $x \in D$.

\begin{corollary}\label{cor:lcufunc}
Let $A$ be a Hermitian operator with eigenvalues in a domain $D \subseteq \R$.  Suppose the function $f\colon D \to \R$ satisfies $|f(x)| \geq 1$ for all $x\in D$ and is $\epsilon$-close to $\sum_i \alpha_i T_i$ on $D$ for some $\epsilon \in (0,1/2)$, coefficients $\alpha_i>0$, and functions $T_i\colon D \to \C$.  Let $\{U_i\}$ be a set of unitaries such that
\be 
U_i|0^{t}\>|\phi\> = |0^{t}\>T_i(A)|\phi\> + |\Phi_i^\perp\>
\ee
for all states $|\phi\>$, where $t$ is a nonnegative integer and $(|0^{t}\>\<0^{t}| \otimes \id)|\Phi_i^\perp\> = 0$.  Given an algorithm $\boxB$ for creating a state $|b\>$, there is a quantum algorithm that prepares a quantum state $4\epsilon$-close to $f(A)|b\>/\norm{f(A)|b\>}$, succeeding with constant probability, that makes an expected
$O(\alpha/\norm{f(A)\ket{b}}) = O(\alpha)$ uses of $\boxB$, $U$, and $V$, where 
\be 
\quad U := \sum_i |i\>\<i| \otimes U_i, \quad V|0^m\> := \frac{1}{\sqrt{\alpha}}\sum_i \sqrt{\alpha_i}|i\>, \quad \mathrm{and} \quad \alpha := \sum_i \alpha_i,
\ee
and outputs a bit indicating whether it was successful.  Furthermore, this algorithm can be modified to make $O(\alpha)$ uses of $\boxB$, $U$, and $V$ in the worst case.
\end{corollary}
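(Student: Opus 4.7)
The plan is to combine Lemma~\ref{lem:LCU} with Proposition~\ref{prop:statesclose} via functional calculus. Define the operator $M := \sum_i \alpha_i T_i(A)$ on the Hilbert space on which $A$ acts. Since $A$ is Hermitian with spectrum contained in $D$, the spectral theorem gives $\|g(A)\| = \sup_{\lambda \in \sigma(A)} |g(\lambda)|$ for any bounded function $g$ on $D$; applying this with $g := f - \sum_i \alpha_i T_i$ upgrades the pointwise hypothesis $|f(x) - \sum_i \alpha_i T_i(x)| \le \epsilon$ to the operator-norm bound $\|f(A) - M\| \le \epsilon$.

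Next I would invoke Proposition~\ref{prop:statesclose} with $C := f(A)$ and the perturbed operator $M$. The condition $|f(x)| \ge 1$ on $D$, again via the spectral theorem, implies $\|f(A)^{-1}\| \le 1$, so the hypotheses of the proposition are met, yielding
\[
\Norm{\frac{M|b\>}{\|M|b\>\|} - \frac{f(A)|b\>}{\|f(A)|b\>\|}} \le 4\epsilon.
\]
Thus it suffices to prepare $M|b\>/\|M|b\>\|$. For this I would apply Lemma~\ref{lem:LCU} directly to $M = \sum_i \alpha_i T_i(A)$, using the $U_i$ supplied by hypothesis as block-encodings of the $T_i(A)$ with the common ancilla width $t$. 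The lemma then produces the desired state with constant success probability using an expected $O(\alpha/\|M|b\>\|)$ uses of $\boxB$, $U$, and $V$, along with a success flag; the triangle inequality then makes the output $4\epsilon$-close to $f(A)|b\>/\|f(A)|b\>\|$.

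Finally, I would convert $\alpha/\|M|b\>\|$ into $O(\alpha)$: since $|f(x)| \ge 1$ on $D$ and $|b\>$ is a unit vector, $\|f(A)|b\>\| \ge 1$; combined with $\|M - f(A)\| \le \epsilon < 1/2$, the triangle inequality gives $\|M|b\>\| \ge 1 - \epsilon > 1/2$. Hence $\alpha/\|M|b\>\| = O(\alpha)$, matching the expected-complexity claim. For the worst-case version promised in the last sentence of the corollary, the same bound $\|M|b\>\| \ge 1/2$ provides an \emph{a priori} lower bound $1/(4\alpha^2)$ on the success probability of a single round of the LCU procedure, which is the information needed to run amplitude amplification with a deterministic round count rather than a Las Vegas one, yielding $O(\alpha)$ queries in the worst case. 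I do not anticipate a real obstacle: the only mild subtlety is that the $T_i$ may be complex-valued so $M$ is generally non-Hermitian, but this affects neither the functional-calculus bound nor the application of Lemma~\ref{lem:LCU}, which is stated for arbitrary (possibly non-unitary) $T_i$.
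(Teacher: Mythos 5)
Your proposal is correct and follows essentially the same route as the paper: apply \lem{LCU} to $M=\sum_i\alpha_i T_i(A)$, use the spectral theorem to convert pointwise $\epsilon$-closeness and $|f|\ge 1$ into the hypotheses of \prop{statesclose}, and conclude the $4\epsilon$ bound and the $O(\alpha)$ expected cost. Your worst-case argument (using the a priori lower bound $\norm{M|b\>}\ge 1-\epsilon$ on the success amplitude to fix the number of amplification rounds) is a harmless variant of the paper's device of truncating the Las Vegas algorithm at a constant multiple of its known expected running time.
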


\begin{proof}
By \lem{LCU}, we can exactly prepare the state $\sum_i \alpha_i T_i(A)|b\>/\norm{\sum_i \alpha_i T_i(A)|b\>}$ with constant success probability and with the stated resource requirements.  
Since the functions $f$ and $\sum_i \alpha_i T_i$ are $\epsilon$-close on a domain that includes the spectrum of $A$, we have $\norm{f(A)-\sum_i \alpha_i T_i(A)} < \epsilon$.
Since $|f(x)| \geq 1$ for all $x\in D$, the smallest eigenvalue of $f(A)$ in absolute value is at least $1$ and hence by \prop{statesclose}, the output state is $4\epsilon$-close to $f(A)|b\>/\norm{f(A)|b\>}$, as claimed.
Furthermore, $|f(x)| \ge 1$ implies $\norm{f(A)\ket{b}} \ge 1$, so $\alpha/\norm{f(A)\ket{b}} = O(\alpha)$.
Since $\alpha$ is known, by running the algorithm (say) ten times longer than its expected running time, we obtain a bounded-error algorithm that makes $O(\alpha)$ uses of $\boxB$, $U$, and $V$ in the worst case.
\end{proof}

In the following two sections, we apply this lemma to the function $f(x)=1/x$ on the domain $D_\kappa := [-1,-1/\kappa] \cup [1/\kappa,1]$, on which $f(x)$ satisfies $|f(x)|\geq 1$.  Then we have $\norm{f(A)|b\>}^{-1} \leq 1$, so the number of uses of $\boxB$, $U$, and $V$ is $O(\alpha)$.

\section{Fourier approach}
\label{sec:Fourier}

We now describe the Fourier approach, which is based on an approximation of $1/A$ as a linear combination of unitaries $e^{-iAt_i}$, $t_i \in \mathbb R$.  These unitaries can be implemented using any Hamiltonian simulation method.  For sparse $A$, we use the method of \cite{BCK15}.  Our quantum algorithm for the QLSP then uses \cor{lcufunc} to prepare a quantum state that is $\epsilon$-close to $|x\>$.
To that end, we establish the following Fourier expansion of the function $1/x$ on the domain $D_\kappa$ (proved in \sec{Fourier_gate}, where we give the explicit algorithm).

\begin{lemma}
\label{lem:FS1}
Let the function $h(x)$ be defined as 
\begin{align}
\label{eq:discrete0}
  h(x) :=
  \frac{i}{\sqrt{2\pi}} \sum_{j=0}^{J-1} \Delta_y \sum_{k=-K}^K \Delta_z \, 
  z_k e^{-z_k^2/2} e^{-i x y_j z_k},
\end{align}
where $y_j := j\Delta_y$, $z_k := k\Delta_z$, 
 for some fixed $J=\Theta(\frac{\kappa}{\epsilon} \log(\kappa/\epsilon))$, $K=\Theta(\kappa \log(\kappa/\epsilon))$, $\Delta_y = \Theta(\epsilon/ \sqrt{\log(\kappa/\epsilon)})$, and $\Delta_z = \Theta((\kappa \sqrt{\log(\kappa/\epsilon)})^{-1})$. Then $h(x)$
is $\epsilon$-close to $1/x$ on the domain $D_\kappa$.
\end{lemma}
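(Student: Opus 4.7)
The plan is to start from an exact integral representation of $1/x$ in terms of Gaussian oscillatory integrals, and then bound the error incurred by (i) truncating the inner $z$-integral, (ii) truncating the outer $y$-integral, and (iii) replacing each truncated integral by its Riemann sum with the prescribed step sizes. The underlying identity is
\be
  \frac{1}{x} \;=\; \frac{i}{\sqrt{2\pi}} \int_0^\infty \int_{-\infty}^{\infty} z \, e^{-z^2/2} \, e^{-ixyz} \, \d z\,\d y,
\ee
which I would verify by first evaluating the inner Gaussian integral in closed form, $\int_{-\infty}^{\infty} z e^{-z^2/2} e^{-ixyz}\,\d z = -i\sqrt{2\pi}\, xy\, e^{-(xy)^2/2}$, and then noting that $\int_0^\infty xy\, e^{-(xy)^2/2}\,\d y = \sgn(x)/|x| = 1/x$ via the substitution $u = xy$ (valid for both signs of $x$ since $D_\kappa$ is symmetric about the origin).

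Next I would truncate and discretize, controlling four error terms in sequence. First, truncating the $z$-integral to $[-Z,Z]$ with $Z := K\Delta_z = \Theta(\sqrt{\log(\kappa/\epsilon)})$ costs at most $\int_{|z|>Z} |z| e^{-z^2/2}\,\d z \cdot Y = O(Y e^{-Z^2/2}) = O(\epsilon)$ per unit $y$, and integrated over $y \in [0,Y]$ stays $O(\epsilon)$. Second, truncating the $y$-integral to $[0,Y]$ with $Y := J\Delta_y = \Theta(\kappa \sqrt{\log(\kappa/\epsilon)})$ leaves the tail $\int_Y^\infty xy\,e^{-(xy)^2/2}\,\d y = \frac{1}{|x|}\int_{|x|Y}^\infty u e^{-u^2/2}\,\d u \le \frac{1}{|x|}\, e^{-(|x|Y)^2/2}$; since $|x|\ge 1/\kappa$, we have $|x|Y \ge \sqrt{\log(\kappa/\epsilon)}$ and this tail is $O(\kappa\cdot(\epsilon/\kappa)) = O(\epsilon)$.

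Third, for the discretization in $z$ I would apply Poisson summation to the Schwartz function $z \mapsto z\,e^{-z^2/2} e^{-ixyz}$: the trapezoidal-rule error equals the sum over nonzero Fourier modes $\omega_m = 2\pi m/\Delta_z$ of $-i\sqrt{2\pi}(xy+\omega_m) e^{-(xy+\omega_m)^2/2}$. Since $1/\Delta_z = \Theta(\kappa\sqrt{\log(\kappa/\epsilon)})$ and $|xy|\le Y$, each nonzero mode has $|xy+\omega_m|=\Omega(\sqrt{\log(\kappa/\epsilon)})$ with a large constant, giving super-polynomially small total contribution, comfortably $O(\epsilon)$ after multiplying by $\Delta_y$ and summing over $j$. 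Fourth, discretizing the $y$-integral of the smooth function $g(y) := -ixy\,e^{-(xy)^2/2}$ (after the $z$-truncation) by a Riemann sum of spacing $\Delta_y$ on $[0,Y]$ contributes an error $O(\Delta_y \cdot \sup_{y\in[0,Y]} |g'(y)|\cdot Y)$; since $|g'(y)| = O(|x|\,(1+|x|^2 y^2)e^{-(xy)^2/2}) = O(|x|) = O(1)$, this gives $O(\Delta_y Y) = O(\epsilon)$. Combining the four error terms, each of size $O(\epsilon)$, yields the lemma after adjusting constants.

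The main obstacle is the third step: bounding the $z$-discretization uniformly in the oscillation frequency $|xy|$, which ranges up to $Y \sim 1/\Delta_z$. A naive Taylor-series bound on Riemann-sum error would produce a term proportional to $\Delta_z\cdot xy$ that is not small. The Poisson-summation/Fourier argument (exploiting that the integrand is the product of a Gaussian with a pure phase) circumvents this and is what makes the rather coarse spacing $\Delta_z = \Theta(1/(\kappa\sqrt{\log(\kappa/\epsilon)}))$ sufficient; verifying that the worst-case shifted Gaussian tail $e^{-(xy+\omega_m)^2/2}$ remains $\poly(\epsilon/\kappa)$ for all $x\in D_\kappa$, $|y|\le Y$, and $m\neq 0$ is the one place where the constants hidden in $\Theta(\cdot)$ must be chosen carefully.
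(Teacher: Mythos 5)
Your overall strategy (the exact Gaussian double-integral representation of $1/x$, truncation in $z$ and $y$, then Poisson summation to control the $z$-discretization) parallels the paper's proof, which rests on the same truncation bound (\lem{FSfirstapprox}) and the same Poisson-summation identity (\lem{psf}). The genuine gap is in your fourth step, the $y$-discretization. You bound the Riemann-sum error by $O(\Delta_y \cdot \sup_y|g'(y)| \cdot Y)$ with $\sup_y|g'(y)|=O(1)$ and then assert $O(\Delta_y Y)=O(\epsilon)$; but with the lemma's parameters $\Delta_y=\Theta(\epsilon/\sqrt{\log(\kappa/\epsilon)})$ and $Y=J\Delta_y=\Theta(\kappa\sqrt{\log(\kappa/\epsilon)})$, one has $\Delta_y Y=\Theta(\kappa\epsilon)$, so the bound you actually obtain is $O(\kappa\epsilon)$ --- too weak by a factor of $\kappa$ (it would only justify the much larger choice $J=\Theta(\kappa^2\log(\kappa/\epsilon)/\epsilon)$, not the stated $J=\Theta(\frac{\kappa}{\epsilon}\log(\kappa/\epsilon))$). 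The step is salvageable: since $g'(y)=x(1-(xy)^2)e^{-(xy)^2/2}$, the total variation $\int_0^Y |g'(y)|\,\d y=\int_0^{|x|Y}|1-u^2|e^{-u^2/2}\,\d u=O(1)$ uniformly in $x\in D_\kappa$, and the sharper Riemann-sum bound $\Delta_y\int_0^Y|g'|=O(\Delta_y)=O(\epsilon)$ suffices; the paper instead avoids the issue entirely by summing the geometric series over $j$ exactly and comparing $\frac{\Delta_y}{1-e^{-ix\Delta_y z_k}}$ with $\frac{1}{ixz_k}$ via the elementary bound $\bigl|\frac{1}{1-e^{-iu}}-\frac{1}{iu}\bigr|<1$, which gives a per-$k$ error of $O(\Delta_y)$.

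A smaller omission: Poisson summation (your step 3) relates the \emph{infinite} lattice sum $\sum_{k\in\mathbb{Z}}$ to the integral over $\R$, whereas $h(x)$ contains only $|k|\le K$; you still need the routine Gaussian-tail estimate comparing the finite and infinite $k$-sums (the paper handles this as a separate term), and likewise a uniform-closeness statement letting you pass from $y$-discretizing the $z$-sum to $y$-discretizing its continuum limit $g(y)$. These are easy additions, but as written step 4 does not establish $\epsilon$-closeness with the stated parameters, so the proposal needs the correction above before it matches the lemma.
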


We start by showing how to represent $1/x$ as a weighted double integral of $e^{-i x t}$ and then approximate the integrals by finite sums.
Given any odd function $f\colon \R \to \R$ satisfying $\int_0^\infty \d{y} \, f(y)=1$, we have $1/x = \int_0^\infty \d{y} \, f(xy)$ for $x\neq 0$.
To achieve good performance, we would like both $f$ and its Fourier transform to decay rapidly.  Here we choose $f(y) = y e^{-y^2/2}$ (although other choices are possible). As can be established using the Gaussian integral
$\int_{-\infty}^\infty \d{x}\, e^{-(x+c)^2}=\sqrt{\pi}$ for all $c \in \C$ (which we use throughout this section), this function satisfies
\begin{align}
	f(y) = \frac{i}{\sqrt{2 \pi}} \int_{-\infty}^\infty \d{z} \, z e^{-z^2/2} e^{-i y z}
  \label{eq:fteigen}
\end{align}
(i.e., it is an eigenfunction of the Fourier transform of eigenvalue $-i$, where the Fourier transform of a function $F\colon \R \to \C$ is the function $\hat F\colon \R \to \C$ given by $\hat F(k) := \frac{1}{\sqrt{2\pi}} \int_{-\infty}^\infty \d{x} \, F(x) e^{-i k x}$), so
\begin{align}
\label{eq:FSapproach}
  \frac{1}{x} 
  = \frac{i}{\sqrt{2 \pi}} \int_0^\infty \d{y} \int_{-\infty}^\infty \d{z} \,
    z e^{-z^2/2} e^{-i x y z}.
\end{align}
While this integral representation suffices to determine the query complexity, we must give an approximation as a finite sum (as in \lem{FS1}) to provide an explicit algorithm.
We can approximate \eq{FSapproach} by a finite sum by restricting to a finite range of integration (as established in \lem{FSfirstapprox}) and discretizing the integral.

In the remainder of this section, we analyze this approximation and thereby establish \thm{Fourier}.  We consider query complexity in \sec{Fourier_query} and then analyze gate complexity in \sec{Fourier_gate}.

\subsection{Query complexity}
\label{sec:Fourier_query}

The query complexity of this approach is determined by the query complexity of $U$ in \cor{lcufunc} (since the operation $V$ requires no queries to implement).
In turn, the query complexity of $U$ depends on the simulation precision and evolution times that are required to obtain an $\epsilon$-approximation of $1/A$.  To determine this, it suffices to understand the error introduced by truncating \eq{FSapproach} to a finite range of integration.

\begin{lemma}
\label{lem:FSfirstapprox}
The function
\begin{align}
  g(x) := 
  \frac{i}{\sqrt{2 \pi}} \int_0^{y_J} \d{y} \int_{-z_K}^{z_K} \d{z} \, 
  z e^{-z^2/2} e^{-i x y z}
\end{align}
is $\epsilon$-close to $1/x$ on the domain $D_\kappa$ for some $y_J = \Theta(\kappa \sqrt{\log(\kappa/\epsilon)})$ and $z_K=\Theta(\sqrt{\log(\kappa/\epsilon)})$.
\end{lemma}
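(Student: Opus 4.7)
The plan is to split the truncation error into two stages via an auxiliary function that truncates only the outer $y$-integral:
$$g_1(x) := \frac{i}{\sqrt{2\pi}} \int_0^{y_J} \d y \int_{-\infty}^\infty \d z\, z e^{-z^2/2} e^{-ixyz}.$$
Then $|g(x) - 1/x| \le |g(x) - g_1(x)| + |g_1(x) - 1/x|$, and I would bound each piece by $\epsilon/2$ with the stated choices of $y_J$ and $z_K$.

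For $|g_1(x) - 1/x|$, the Fourier eigenfunction identity \eq{fteigen} collapses the inner $z$-integral in closed form to $xy e^{-(xy)^2/2}$, so $g_1(x) = \int_0^{y_J} xy e^{-(xy)^2/2} \d y$. The substitution $u = xy$ (valid for either sign of $x$, since $-e^{-u^2/2}$ is even in $u$) gives $g_1(x) = \frac{1}{x}\bigl(1 - e^{-(xy_J)^2/2}\bigr)$, whence $|g_1(x) - 1/x| = \frac{1}{|x|} e^{-(xy_J)^2/2} \le \kappa e^{-(y_J/\kappa)^2/2}$ on $D_\kappa$. Taking $y_J = \Theta(\kappa\sqrt{\log(\kappa/\epsilon)})$ forces this to be at most $\epsilon/2$.

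For $|g_1(x) - g(x)| = \bigl|\frac{i}{\sqrt{2\pi}} \int_0^{y_J}\d y \int_{|z|>z_K}\d z\, z e^{-z^2/2} e^{-ixyz}\bigr|$, the key step — and the one that genuinely needs thought — is to swap the order of integration (justified by Fubini since the $y$-range is finite and the Gaussian is integrable). The inner $y$-integral then evaluates to $(e^{-ixy_Jz}-1)/(-ixz)$, which is bounded in absolute value by $2/|xz|$. The $|z|$ factor from the integrand cancels the $|z|$ in the denominator, leaving $\frac{2}{\sqrt{2\pi}\,|x|}\int_{|z|>z_K} e^{-z^2/2}\d z$; the standard Gaussian tail estimate $\int_{z_K}^\infty e^{-z^2/2}\d z \le e^{-z_K^2/2}/z_K$ together with $|x| \ge 1/\kappa$ yields $|g_1(x) - g(x)| \le \frac{4\kappa}{\sqrt{2\pi}\,z_K} e^{-z_K^2/2}$, which is $\epsilon/2$ for $z_K = \Theta(\sqrt{\log(\kappa/\epsilon)})$.

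The main obstacle is the asymmetry between the two integrals: only the $z$-integrand carries a Gaussian, so a naive uniform-in-$y$ bound on the $z$-tail would leave a divergent $y$-integral. Swapping the order and exploiting the oscillatory cancellation of the inner $y$-integral is what produces the crucial $1/|x|$ factor that makes the tail bound integrable and that explains why the final parameter scaling involves $\log(\kappa/\epsilon)$ inside a square root. Once both pieces are controlled, the triangle inequality finishes the argument on all of $D_\kappa$.
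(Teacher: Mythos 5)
Your proof is correct and follows essentially the same route as the paper's: both arguments perform the $y$-integration in closed form to extract the oscillatory factor bounded by $2/|xz|$, identify the $y$-truncation error with the Gaussian Fourier transform term $\tfrac{1}{|x|}e^{-(xy_J)^2/2}$, and control the $z$-truncation by a Gaussian tail estimate, using $|x|\ge 1/\kappa$ to fix $y_J$ and $z_K$. Your intermediate function $g_1$ merely reorganizes the paper's single decomposition into two stages; the bounds obtained are the same up to constants.
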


\begin{proof}
Performing the integral over $y$ first, which does not change the value by Fubini's theorem, we have 
\begin{align}
  g(x) = \frac{1}{\sqrt{2\pi}x} \int_{-z_K}^{z_K} \d{z} \, 
  e^{-z^2/2} (1-e^{- i x y_J z}),
\end{align}
where we used the identity $\int_{a}^b \d{y}\, e^{cy}= \frac{1}{c}(e^{cb}-e^{ca})$.
Therefore
\begin{align}
  \biggl|g(x) - \frac{1}{x}\biggr|
  &= \biggl|g(x) - 
     \frac{1}{\sqrt{2\pi}x} \int_{-\infty}^\infty \d{z} \, e^{-z^2/2}\biggr| \\
  &= \frac{1}{\sqrt{2\pi}|x|}\biggl|-\int_{-\infty}^{\infty} \d{z} \, e^{-z^2/2} e^{- i x y_J z} 
  + \biggl(\int_{-\infty}^{-z_K} + \int_{z_K}^{\infty}\biggr) \d{z} \, 
  e^{-z^2/2} (e^{- i x y_J z}+1) \biggr| \\
  &\le \frac{1}{\sqrt{2\pi}|x|} \biggl| \int_{-\infty}^{\infty} \d{z} \, e^{-z^2/2} e^{-i x y_J z} \biggr|
   + \frac{4}{\sqrt{2\pi}|x|} \int_{z_K}^\infty \d{z} \, e^{-z^2/2} \\
  &= \frac{1}{|x|} e^{-(x y_J)^2/2}
  + \frac{4}{\sqrt{2\pi}|x|} \int_{z_K}^\infty \d{z} \, e^{-z^2/2} \\
  &\le \frac{1}{|x|} e^{-(x y_J)^2/2} + \frac{2}{|x|} e^{-z_K^2/2} ,
\label{eq:FSfirstapprox}
\end{align}
where in the last step we used the bound
\begin{align}
	\frac{1}{\sqrt{2\pi}}\int_{z_K}^\infty \d{z} \, e^{-z^2/2} 
  \le \frac{1}{2} e^{-z_K^2/2},
	\label{eq:erfbound}
\end{align}
which follows from the upper bound $\int_{x}^\infty \d{t}\, e^{-t^2} \leq \frac{\sqrt{\pi}}{2}e^{-x^2}$~\cite[7.1.13]{AS64}.  Since $|x| \ge 1/\kappa$, there exist $y_J = \Theta(\kappa \sqrt{\log(\kappa/\epsilon)})$ and $z_K=\Theta(\sqrt{\log(\kappa/\epsilon)})$
such that \eq{FSfirstapprox} is at most $\epsilon$.
\end{proof}

To apply \cor{lcufunc}, we discretize this integral, approximating $g(x)$ by the function $h(x)$ defined in \lem{FS1}.
In particular, $y_J$ and $z_K$ are as in the definition of $g(x)$ in \lem{FSfirstapprox}.
By taking $\Delta_y$ and $\Delta_z$ sufficiently small, $h(x)$ can approximate $g(x)$ arbitrarily closely.  Since the query complexity resulting from \cor{lcufunc} does not depend on the number of terms in the linear combination, we can make the discretization error arbitrarily small and neglect its contribution.  This shows the query complexity part of \thm{Fourier}, as follows.

\begin{proof}[Proof of \thm{Fourier} (query complexity)]
We implement $h(A)$ using \cor{lcufunc}.  The $L_1$ norm of the coefficients of this linear combination is
\begin{align}
  \alpha 
  &= \frac{1}{\sqrt{2\pi}} \sum_{j=0}^{J-1} \Delta_y \sum_{k=-K}^K \Delta_z \, |z_k| e^{-z_k^2/2}
  = \Theta(y_J) ,
\end{align}
where we used the fact that, for $\Delta_z \ll 1$,
\begin{align}
  \sum_{k=-K}^K \Delta_z \, |z_k| e^{-z_k^2/2}
  \approx \int_{-z_K}^{z_K} \d{z} \, |z| e^{-z^2/2}
  \le 2 \int_{0}^{\infty} \d{z} \, z e^{-z^2/2}
  = 2
\end{align}
(here the error in the approximation is negligible since we can take $\Delta_z$ arbitrarily small without affecting the query complexity).  By \lem{FSfirstapprox}, we have $\alpha = O(\kappa \sqrt{\log(\kappa/\epsilon)})$.  The evolution times of the Hamiltonian simulations that appear in the linear combination are $t=O(y_J z_K) = O(\kappa \log(\kappa/\epsilon))$.  Since we invoke Hamiltonian simulation $\Theta(\alpha)$ times, the overall error is at most $\epsilon$ provided each simulation has error $\epsilon' = O(\epsilon/\alpha) = O(\epsilon/(\kappa\sqrt{\log(\kappa/\epsilon)}))$.

The overall query complexity is the number of uses of Hamiltonian simulation times the query complexity of each simulation. Using the Hamiltonian simulation algorithm in \cite[Lemma 9]{BCK15}, the query complexity of simulating a $d$-sparse Hamiltonian for time $t$ with error at most $\epsilon'$ is
$O(d \norm{A}_{\max} t \log(\norm{A} t/\epsilon')/\log\log(\norm{A} t/\epsilon')) = O(dt \log(t/\epsilon'))$, since for the QLSP, we have $\norm{A}_{\max} \le \norm{A} \le 1$.
Thus the total query complexity is
\begin{align}
O(\alpha dt \log(t/\epsilon')) = O\Bigl(\kappa \sqrt{\log(\kappa/\epsilon)} d \kappa \log(\kappa/\epsilon) \log\Bigl(\frac{\kappa^2}{\epsilon}\log^{1.5}(\kappa/\epsilon)\Bigr)\Bigr)= O(d \kappa^2 \log^{2.5}( \kappa/\epsilon)).
\end{align}

Finally, by \cor{lcufunc}, the number of uses of $\boxB$ is $O(\alpha) = O(\kappa \sqrt{\log(\kappa/\epsilon)})$.
\end{proof}

\subsection{Gate complexity}
\label{sec:Fourier_gate}

The gate complexity of this approach is given by the gate complexity of the unitaries in \cor{lcufunc}, namely quantum simulation and the state preparation map $V$, times the number of amplitude amplification steps. The cutoffs $J$ and $K$ (which determine $\Delta_y$ and $\Delta_z$) affect the gate complexity as they determine the number of coefficients appearing in the approximation of $1/A$ by $h(A)$.

Our analysis uses the following identity.

\begin{lemma}
\label{lem:psf}
Let $\omega \in \mathbb R$ and $\Delta_z>0$. Then
\begin{align}
\label{eq:psf}
\sum_{k=-\infty}^{\infty} e^{-(\omega +2 \pi k/\Delta_z)^2/2}= \frac{1}{\sqrt{2 \pi}} \sum_{k=-\infty}^{\infty} \Delta_z \, e^{-z_k^2/2} e^{-i \omega z_k}    
\end{align}
where $z_k := k \Delta_z$.
\end{lemma}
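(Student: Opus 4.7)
}
The plan is to recognize the identity as an instance of the Poisson summation formula applied to a shifted Gaussian, with sampling spacing $2\pi/\Delta_z$. First I would recall Poisson summation in the normalization used by the paper, namely $\hat F(k) = \frac{1}{\sqrt{2\pi}}\int_{-\infty}^\infty F(x)e^{-ikx}\,\d x$: for a Schwartz function $F$ and any $\Delta>0$,
\begin{equation}
\sum_{n\in\mathbb Z} F(n\Delta) \;=\; \frac{\sqrt{2\pi}}{\Delta}\sum_{k\in\mathbb Z} \hat F\!\left(\frac{2\pi k}{\Delta}\right).
\end{equation}
(This can be verified quickly by checking the Gaussian case $F(x)=e^{-x^2/2}$, where $\hat F=F$, against known tabulated values; no serious analytic subtlety arises because every function in sight is Schwartz.)

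Next I would set $F(y):=e^{-(\omega+y)^2/2}$ and $\Delta:=2\pi/\Delta_z$, so that the left-hand side of \eq{psf} is exactly $\sum_n F(n\Delta)$. A translation-then-modulation computation gives the Fourier transform
\begin{equation}
\hat F(k) \;=\; \frac{1}{\sqrt{2\pi}}\int_{-\infty}^\infty e^{-(\omega+y)^2/2}e^{-iky}\,\d y \;=\; e^{i\omega k}e^{-k^2/2},
\end{equation}
using the standard Gaussian integral $\int_{-\infty}^\infty e^{-(x+c)^2}\,\d x=\sqrt{\pi}$ already invoked elsewhere in the section. Substituting into Poisson yields
\begin{equation}
\sum_{k\in\mathbb Z} e^{-(\omega+2\pi k/\Delta_z)^2/2} \;=\; \frac{\Delta_z}{\sqrt{2\pi}}\sum_{k\in\mathbb Z} e^{-z_k^2/2}e^{i\omega z_k},
\end{equation}
where $z_k=k\Delta_z$.

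Finally, since the sum runs over all of $\mathbb Z$ and the Gaussian factor $e^{-z_k^2/2}$ is invariant under $k\mapsto -k$, reindexing turns $e^{i\omega z_k}$ into $e^{-i\omega z_k}$, matching the stated right-hand side of \eq{psf}. No step here is a genuine obstacle; the only item that requires some care is fixing the factors of $\sqrt{2\pi}$ consistently with the paper's Fourier convention, which is why I would verify the Gaussian case explicitly before invoking the general formula.
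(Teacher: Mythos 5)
Your proof is correct and follows essentially the same route as the paper: both invoke the Poisson summation formula for a shifted Gaussian (Schwartz, so no analytic issues), the only cosmetic difference being that you absorb the dilation into a rescaled Poisson formula while the paper builds the scaling $2\pi x/\Delta_z$ directly into the function, and your final reindexing $k\mapsto -k$ to fix the sign of the phase is a valid and standard step.
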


\begin{proof}
Equation \eq{psf} is a case of the Poisson summation formula (see for example \cite[Section 11.11]{HN01}), which  states that if $f\colon \R \to \C$ is a Schwartz function with Fourier transform $\hat{f}\colon \R \to \C$, then
\begin{align}
\label{eq:Poisson}
\sum_{k=-\infty}^\infty f(k) =\sqrt{2 \pi} \sum_{k=-\infty}^\infty \hat f(2 \pi k).
\end{align}
Here $f$ is called a Schwartz function if, for all nonnegative integers $m$ and $n$, $\sup_{x \in \R} |x^m \tfrac{\d^n}{\d{x^n}} f(x)|$ is at most some constant (that can depend on $m$ and $n$).

In our case, $f(x)=e^{-(\omega +2 \pi x/\Delta_z)^2/2}$, with the Fourier transform $\hat{f}(y) = \frac{\Delta_z}{2 \pi} e^{-(y \Delta_z)^2/8 \pi^2} e^{-i y \Delta_z \omega/2\pi}$.
Then the left-hand side of \eq{Poisson} coincides with the left-hand side of \eq{psf}, and it is easy to check that the right-hand side of \eq{Poisson} coincides with the right-hand side of \eq{psf}.  It is well known that a Gaussian is a Schwartz function, so the identity follows.
\end{proof}

Now we quantify how well $h(x)$ in \eq{discrete0} approximates $1/x$.

\begin{proof}[Proof of \lem{FS1}]
We have $\Delta_y=y_J/J=\Theta(\epsilon/\sqrt{\log(\kappa/\epsilon)})$ and $\Delta_z=z_K/K=\Theta(1/\kappa\sqrt{\log(\kappa/\epsilon)})$.

Performing the geometric sum over $j$, we have
\begin{align}
  h(x) = \frac{i \Delta_y}{\sqrt{2\pi}} \sum_{k=-K}^K \Delta_z \, z_k e^{-z_k^2/2}
  \frac{1- e^{-i x y_J z_k}}{1 - e^{-i x \Delta_y z_k}} .
\end{align}
Using the triangle inequality, we have the bound
\begin{align}
	\Bigl|h(x)-\frac{1}{x}\Bigr|
	&\le \biggl|\frac{i \Delta_y}{\sqrt{2\pi}} \sum_{k=-K}^K \Delta_z \, z_k e^{-z_k^2/2}
  \frac{1- e^{-i x y_J z_k}}{1 - e^{-i x \Delta_y z_k}} - \frac{1}{\sqrt{2\pi}x} \sum_{k=-K}^K \Delta_z \, e^{-z_k^2/2}
  (1- e^{-i x y_J z_k})\biggr| \label{eq:FS1a}\\
&\quad + \biggl|\frac{1}{\sqrt{2\pi}x} \sum_{k=-K}^K \Delta_z \, e^{-z_k^2/2}
  (1- e^{-i x y_J z_k}) - \frac{1}{\sqrt{2\pi}x} \sum_{k=-\infty}^\infty \Delta_z \, e^{-z_k^2/2}
  (1- e^{-i x y_J z_k})\biggr| \label{eq:FS1b}\\
&\quad + \biggl|\frac{1}{\sqrt{2\pi}x} \sum_{k=-\infty}^\infty \Delta_z \, e^{-z_k^2/2}
  (1- e^{-i x y_J z_k}) - \frac{1}{\sqrt{2\pi}x} \sum_{k=-\infty}^\infty \Delta_z \,
  e^{-z_k^2/2}\biggr| \label{eq:FS1c}\\
&\quad + \biggl|\frac{1}{\sqrt{2\pi}x} \sum_{k=-\infty}^\infty \Delta_z \,
  e^{-z_k^2/2} - \frac{1}{x}\biggr|. \label{eq:FS1d}
\end{align}
We show that each term on the right-hand side is $O(\epsilon)$.

Since
$|\frac{1}{1-e^{-ix}}-\frac{1}{ix}| < 1$ for all $x \in [-1,1]$ (as is easily verified by plotting the left-hand side) and since $|x \Delta_y z_K|  = O(\epsilon) < 1$ for sufficiently small $\epsilon$, the term on the right-hand side of \eq{FS1a} is upper bounded by
\begin{align}
  \sqrt{\frac{2}{\pi}} \Delta_y \sum_{k=-K}^K \Delta_z \, |z_k| e^{-z_k^2/2}
  &\le \sqrt{\frac{2}{\pi}} \Delta_y \sum_{k=-K}^K \Delta_z \, e^{-z_k^2/4} \label{eq:FS1a1} \\
  &\le 2\sqrt{\frac{2}{\pi}} \Delta_y \int_0^\infty \d{z} \, e^{-z^2/4} \\
  &= 2\sqrt{2} \Delta_y = O(\epsilon).
\end{align}
Here in \eq{FS1a1} we used the fact that $|x| < e^{x^2/4}$ for all $x \in \R$.

The term in \eq{FS1b} is
\begin{align}
  \biggl|\frac{1}{\sqrt{2\pi}x} \sum_{|k|>K} \Delta_z \,
  e^{-z_k^2/2} (1- e^{-i x y_J z_k})\biggr|
  &\le \sqrt{\frac{2}{\pi}} \frac{2}{|x|} \sum_{k=K+1}^\infty \Delta_z \, e^{-z_k^2/2} \\
  &\le \sqrt{\frac{2}{\pi}} \frac{2}{|x|} \int_{z_K}^\infty \d{z} \, e^{-z^2/2} \\
  &\le \frac{2}{|x|} e^{-z_K^2/2} = O(\epsilon)
\end{align}
where we upper bounded the sum of a decreasing sequence by the corresponding integral and applied \eq{erfbound}.

Using \lem{psf} with $\omega=xy_J$, the term in \eq{FS1c} is
\begin{align}
  \frac{1}{\sqrt{2\pi} |x|} 
  \sum_{k=-\infty}^\infty \Delta_z \, e^{-z_k^2/2} e^{-i x y_J z_k}
  &= \frac{1}{|x|} \sum_{k=-\infty}^\infty e^{-(x y_J  +2 \pi k/\Delta_z )^2/2}.
\end{align}
For $k \ne 0$, we have $|x y_J + 2 \pi k/\Delta_z| \ge |k|(2\pi/\Delta_z - y_J)$.  By choosing $K$ sufficiently large, we can ensure that $2\pi/\Delta_z \ge y_J$.
Thus we see that \eq{FS1c} is at most
\begin{align}
	\frac{1}{|x|} \biggl( e^{-(x y_J)^2/2} + 2 \sum_{k=1}^\infty e^{-k(2\pi/\Delta_z - y_J)^2} \biggr)
	&= \frac{1}{|x|} \biggl( e^{-(x y_J)^2/2} + \frac{2}{e^{(2\pi/\Delta_z - y_J)^2}-1} \biggr) = O(\epsilon).
\end{align}

Finally, by \lem{psf} with $\omega=0$, we have
\begin{align}
	\frac{1}{\sqrt{2\pi}} \sum_{k=-\infty}^{\infty} \Delta_z \, e^{-z_k^2/2}
  &= \sum_{k=-\infty}^{\infty} e^{-(2 \pi k/\Delta_z)^2/2} \\
  &\le 1 + 2 \sum_{k=1}^\infty e^{-2\pi^2 k/\Delta_z^2} \\
  &= 1 + \frac{2}{e^{2\pi^2/\Delta_z^2}-1} \\
  &= 1 + O(\epsilon \, e^{-\kappa^2 \log\kappa}),
\end{align}
so \eq{FS1d} is also $O(\epsilon)$, completing the proof.
\end{proof}

We can now determine the gate complexity of the Fourier approach to the QLSP.

\begin{proof}[Proof of \thm{Fourier} (gate complexity)]
To apply \cor{lcufunc}, we must implement a unitary $V$ that maps the $m$-qubit state $\ket{0^m}$, where $m = O(\log(JK))$, to 
\begin{align}
  \frac{1}{(2 \pi)^{\frac{1}{4}}\sqrt{\alpha}} \sum_{j=0}^{J-1} \sqrt{\Delta_y} \sum_{k=-K}^K \sqrt{\Delta_z |z_k|} e^{-z_k^2/2} |j,k\> 
= \frac{\sqrt{\Delta_y}\sqrt{\Delta_z}}{(2 \pi)^{\frac{1}{4}}\sqrt{\alpha}} \Biggl( \sum_{j=0}^{J-1}|j\> \Biggr) \otimes  \Biggl(\sum_{k=-K}^K \sqrt{|z_k|} e^{-z_k^2/2} |k\>\Biggr).
\end{align}
From the query complexity part of the proof of \thm{Fourier}, we have $\alpha = O(\kappa\sqrt{\log(\kappa/\epsilon)})$. The operation $V$ can be implemented in two steps, preparing the superpositions over $|j\>$ and $|k\>$ independently since this is a product state. First we use $O(\log(J))$ Hadamard gates to prepare $\sum_{j=0}^{J-1} |j\>/\sqrt{J}$ (assuming for simplicity that $J$ is a power of $2$). 
Then we use $O(K)$ gates to prepare the corresponding superposition over $|k\>$ \cite[Sec IV]{SBM06}. 
Using the values of $J$ and $K$ from \lem{FS1}, the gate complexity of $V$ is
$O(\log(J)+K) = O(\log (\kappa \log(\kappa/\epsilon)/\epsilon) + \kappa \log(\kappa/\epsilon)) = O(\kappa \log(\kappa/\epsilon))$.

\cor{lcufunc} also requires the unitary
\begin{align}
  U = i \sum_{j=0}^{J-1} \sum_{k=-K}^K |j,k\>\<j,k| \otimes \sgn(k) e^{-iA y_j z_k}.
\end{align}
We now use the approach of \lem{c-U} (specifically, in the form discussed immediately following the proof of the lemma) and the Hamiltonian simulation algorithm of \cite[Lemma 10]{BCK15} to implement $U$ within precision $\epsilon'$. We can implement $U$ if we can implement the unitary $\sum_{j=0}^{J-1} \sum_{k=-K}^K |j,k\>\<j,k| \otimes Y^{jk}$, where $Y=e^{-iA\Delta_y\Delta_z}$. In the notation of \lem{c-U}, the gate cost of implementing this unitary is at most the sum of the costs of implementing $Y^{2^{r}}$ for $r=0$ to $r=\log JK = O(\log(\kappa/\epsilon))$. Since we would like to approximate $U$ to error $\epsilon'$, it suffices to implement the unitaries $Y^{2^{r}}$ to error $\bar{\epsilon}=O(\epsilon'/\log JK)=O(\epsilon'/\log(\kappa/\epsilon))$.

From \cite[Lemma 10]{BCK15}, we know that the gate complexity of simulating a $d$-sparse Hamiltonian $A$ for time $t$ with error $\bar{\epsilon}$ is
\begin{equation}\label{eq:HS}
O\left((d \norm{A}_{\max} t) (\log N + \log^{2.5}(\norm{A}t/\bar{\epsilon})) ({\log(\norm{A} t/\bar{\epsilon})})\right),
\end{equation}
where we have dropped a $\log\log(\norm{A}t/\bar{\epsilon})$ term that appeared in the denominator.
Lemma 10 of \cite{BCK15} does not explicitly handle the case where the evolution time $t$ is so short that some of these expressions are smaller than 1. In our application, we would like to simulate powers of the unitary $Y=e^{-iA\Delta_y\Delta_z}$, where $\Delta_y\Delta_z = \Theta(\epsilon/\kappa)$, which is indeed a very short evolution time. For such short times, the expression in \eq{HS} should have the first term replaced by $(d\norm{A}_{\max}t+1)$, and all logarithms should be prevented from dropping below 1, i.e., we treat every logarithm as a maximum of its original expression and 1.

Now we want to compute the gate cost of implementing $Y$ and higher powers of $Y$. The largest power involved is $Y^{JK}$, where $JK = \frac{\kappa^2}{\epsilon} \log^2 (\kappa/\epsilon)$. Hence the longest time for which Hamiltonian simulation is performed is $\tau:= y_Jy_K =  O(\kappa \log(\kappa/\epsilon))$. Since we can always use $\tau$ as an upper bound for $t$, we see that the cost of simulating $A$ for any time $t\leq \tau$ to error $\bar{\epsilon}$ is at most $O((dt+1) (\log N + \log^{2.5}(\tau/\bar{\epsilon})) \log(\tau/\bar{\epsilon}))$, since for the QLSP, we have $\norm{A}_{\max} \le \norm{A} \le 1$.

Hence the sum of costs of implementing $Y^{2^{r}}$ for $r=0$ to $r=\log JK = O(\log(\kappa/\epsilon))$ can be computed by first computing the sum of $(dt+1)$ over the range, i.e., the sum $\sum_{r=0}^{\log JK} (d\Delta_y\Delta_z 2^r+1)$. This sum is $O(d y_J z_K + \log JK) = O(d\kappa \log(\kappa/\epsilon))$.

Thus the sum of gate complexities of $Y^{2^{r}}$ is $O(d\kappa \log(\kappa/\epsilon) (\log N + \log^{2.5}(\tau/\bar{\epsilon})) \log(\tau/\bar{\epsilon}))$. Substituting the values of $\tau$, $\bar{\epsilon}$ and using $\epsilon' \le \epsilon$, the operation $U$ has gate complexity $O(d\kappa\log^2(\kappa/\epsilon ')(\log N+\log^{2.5}(\kappa/\epsilon ')))$.  

Note that the gate complexity of $V$ is dominated by that of $U$. Since we invoke Hamiltonian simulation $O(\alpha)$ times, with $\alpha=O(\kappa \sqrt{\log(\kappa/\epsilon)})$ (see the proof of the query complexity part of \thm{Fourier}),
we can choose $\epsilon'=O(\epsilon/\alpha)$ for overall error at most $\epsilon$.
Thus the overall gate complexity of the method is
\begin{align}
  O(d \kappa^2 \log^{2.5}(\kappa/\epsilon) (\log N + \log^{2.5}(\kappa/\epsilon)) ) 
\end{align}
as claimed.
\end{proof}

\section{Chebyshev approach}
\label{sec:Chebyshev}

We now describe our second approach to the QLSP.  This approach uses a Chebyshev expansion to implement $A^{-1}$ without appealing directly to Hamiltonian simulation.  Instead of building the function $1/x$ as a linear combination of terms of the form $e^{-itx}$ (as in \sec{Fourier}), we approximate it by a linear combination of terms of the form $\chebt_n(x)$, where $\chebt_n$ is the \th{n} Chebyshev polynomial of the first kind.  We implement such terms using a quantum walk that has previously been applied to Hamiltonian simulation \cite{Chi10,BC12,BCK15}. With that goal, we establish the following decomposition of $1/x$ as a linear combination of Chebyshev polynomials.

\begin{lemma}
Let $g(x)$ be defined as 
\begin{equation}
  g(x) := 4 \sum_{j=0}^{j_0} (-1)^j \left[\frac{\sum_{i=j+1}^b \binom{2b}{b+i}}{2^{2b}}\right] \chebt_{2j+1}(x),
\end{equation}
where $j_0 = \sqrt{b\log(4b/\epsilon)}$ and $b = \kappa^2 \log(\kappa/\epsilon)$. Then $g(x)$ is $2\epsilon$-close to $1/x$ on $D_\kappa$.
\end{lemma}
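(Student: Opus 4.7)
The plan is to identify $g(x)$ as a truncation of an \emph{exact} Chebyshev expansion of the proxy function $h(x) := (1-(1-x^2)^b)/x$, which in turn is within $\epsilon$ of $1/x$ on $D_\kappa$. Accordingly, I split the proof into three steps: (i) show that the untruncated series $\tilde g(x) := 4\sum_{j=0}^{b-1}(-1)^j c_j \chebt_{2j+1}(x)$, with $c_j := \sum_{i=j+1}^b \binom{2b}{b+i}/2^{2b}$, equals $h(x)$ exactly; (ii) bound $|h(x)-1/x|$ on $D_\kappa$; (iii) bound the truncation error $|g(x)-\tilde g(x)|$.

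For step (i), which is the technical heart of the proof, I would set $x=\cos\theta$, $z=e^{i\theta}$, so that $\chebt_{2j+1}(x)=\tfrac12(z^{2j+1}+z^{-2j-1})$. The crucial identity is the generating function
\[
\sum_{i=-b}^b \frac{\binom{2b}{b+i}}{2^{2b}}(-z^2)^i = \sin^{2b}\theta,
\]
which follows from $\sum_i\binom{2b}{b+i}w^i = w^{-b}(1+w)^{2b}$ evaluated at $w=-z^2$ together with $1-z^2 = -2ie^{i\theta}\sin\theta$. Writing $p_i:=\binom{2b}{b+i}/2^{2b}$ and interchanging the $i,j$ summations in $\sum_{j=0}^{b-1}c_j(-z^2)^j$ turns the inner sum into a geometric series summing to $(1-(-z^2)^i)/(1+z^2)$; since $1+z^2 = 2e^{i\theta}\cos\theta$, this factors out $1/(2e^{i\theta}\cos\theta)$. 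Doing the same for the $z^{-1}$ half, combining via $p_i = p_{-i}$ and the identity above, and using $z/(1+z^2)=z^{-1}/(1+z^{-2})=1/(2\cos\theta)$, everything collapses to $\tilde g(x) = (1-\sin^{2b}\theta)/\cos\theta = h(x)$.

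For step (ii), on $D_\kappa$ we have $|h(x)-1/x|=|(1-x^2)^b/x|\le \kappa(1-1/\kappa^2)^b\le \kappa e^{-b/\kappa^2}=\epsilon$, using $b=\kappa^2\log(\kappa/\epsilon)$.

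For step (iii), since $|\chebt_n|\le 1$ on $[-1,1]$, the triangle inequality bounds the truncation error by $4\sum_{j=j_0+1}^{b-1}c_j$. Recognizing $c_j = \Pr[Y\ge b+j+1]$ for $Y\sim\mathrm{Bin}(2b,1/2)$, Hoeffding's inequality gives $c_j\le e^{-(j+1)^2/b}$. Bounding the tail sum (e.g.\ by comparing with a geometric series via $(j+1)^2\ge (j_0+1)(j+1)$ for $j\ge j_0$) yields $\sum_{j\ge j_0+1}c_j \le \epsilon/4$ for $j_0=\sqrt{b\log(4b/\epsilon)}$, so $|g(x)-\tilde g(x)|\le\epsilon$. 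Combining with step (ii) yields $|g(x)-1/x|\le 2\epsilon$ on $D_\kappa$, as claimed. The main obstacle is step (i): guessing the right proxy function $(1-(1-x^2)^b)/x$ and pushing the generating-function / geometric-sum manipulation through cleanly; once that is in hand, the remaining tail and Hoeffding estimates are standard.
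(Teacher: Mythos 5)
Your proposal is correct and follows essentially the same route as the paper: tame $1/x$ by the proxy $(1-(1-x^2)^b)/x$, show that this proxy is exactly the full odd-Chebyshev series with the cumulative binomial coefficients, and truncate at $j_0$ using a Chernoff/Hoeffding tail bound, giving error $\epsilon+\epsilon=2\epsilon$ on $D_\kappa$. The only (minor) difference is how the exact expansion is verified: the paper matches coefficients of $\cos(2r\theta)$ after expanding $\sin^{2b}\theta$, while you interchange the order of summation and sum a geometric series in $-z^2$ with $z=e^{i\theta}$ --- both are valid derivations of the same identity.
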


The Chebyshev polynomials of the first kind are defined as follows: $\chebt_0(x) =1$, $\chebt_1(x) = x$, and $\chebt_{n+1}(x) = 2x\chebt_n(x)-\chebt_{n-1}(x)$. They are also the unique polynomials satisfying $\chebt_n(\cos \theta) = \cos n\theta$.

This section is organized as follows.  In \sec{Chebyshev_atoms} we define a quantum walk for any given Hamiltonian and express this walk (and its powers) in terms of Chebyshev polynomials.  In \sec{Chebyshev_query}, we present an expansion of $1/x$ in terms of Chebyshev polynomials and use this to establish the query complexity of the Chebyshev approach to the QLSP.  In \sec{Chebyshev_gate}, we upper bound the gate complexity of this approach.

\subsection{A quantum walk for any Hamiltonian}
\label{sec:Chebyshev_atoms}

Let $A$ be a $d$-sparse $N \times N$ Hamiltonian with $\norm{A}_{\max} \le 1$.  We now define a quantum walk corresponding to $A$ and express its action in terms of Chebyshev polynomials.

The quantum walk is defined using a set of states $\{|\psi_j\> \in \C^{2N} \otimes \C^{2N}: j \in [N]\}$ defined as 
\begin{align}
\label{eq:psi_j}
  |\psi_j\> := |j\> \otimes \frac{1}{\sqrt d} \sum_{k \in [N]: A_{jk}\neq 0} \Bigl(  \sqrt{A_{jk}^*}|k\> + \sqrt{1-|A_{jk}|} \, |k+N\>\Bigr).
\end{align}
Note that the square root in \eq{psi_j} is potentially ambiguous when $A_{jk}$ is complex. Our results hold for any consistent choice of square root that ensures $\sqrt{\smash[b]{A_{jk}^*}} \bigl(\sqrt{\smash[b]{A_{jk}}}\bigr)^* = A_{jk}^*$. For more detail, see the discussion preceding \cite[eq.~14]{BC12}. We assume in \eq{psi_j} that the oracle returns exactly $d$ nonzero entries for a given $j$. This is without loss of generality as we can modify the oracle to treat some zero entries as nonzero entries to make up the difference. On these additional values of $k$, the oracle will return the value of $A_{jk}$ to be $0$, but these entries will contribute to the state in \eq{psi_j} due to the term $\sqrt{1-|A_{jk}|} \, |k+N\>$.

The quantum walk now occurs in the Hilbert space $\C^{2N} \otimes \C^{2N}$.  We define an isometry $T$ from $\C^{N}$ to the walk space by
\begin{align}
  T := \sum_{j \in [N]} |\psi_j\>\<j|.
\end{align}
We define a swap operator on the walk space by $S|j,k\>=|k,j\>$ for all $j,k \in [2N]$.  Finally, the 
walk operator is $W := S(2TT^\dag - \id)$. 
As discussed in \cite{BC12,BCK15}, the walk operator $W$ can be implemented using $O(1)$ queries to $\boxA$, and this implementation is gate-efficient as we discuss in \sec{Chebyshev_gate}.
(For the details of the implementation, see Lemma 10 of \cite{BCK15}.  In that treatment, the states are defined using an ancilla qubit instead of $N$ additional basis states, but the convention in \eq{psi_j} is easily recovered by the efficient---and efficiently invertible---mapping from $\C^N \otimes \C^2 \to \C^{2N}$ defined by $|k,0\> \mapsto |k\>$ and $|k,1\> \mapsto |k+N\>$ for all $k\in [N]$.)

We now analyze the structure of this walk. It will be convenient to consider the matrix $H :=A/d$ in the following. Note that since $\norm{A}_{\max}\leq 1$ and $A$ is $d$-sparse, we have that $\norm{H}\leq 1$.

\begin{lemma}\label{lem:walk}
Let $\ket{\lambda}$ be an eigenvector of $H := A/d$ with eigenvalue $\lambda \in (-1,1)$.  Within the invariant subspace $\spn\{T|\lambda\>, ST|\lambda\>\}$, the walk operator $W$ has the block form
\[
  \begin{pmatrix}
	\lambda & -\sqrt{1-\lambda^2} \\
	\sqrt{1-\lambda^2} & \lambda
	\end{pmatrix}
\]
where the first row/column corresponds to the state $T|\lambda\>$. For an eigenvector $\ket{\lambda}$ with eigenvalue $|\lambda|=1$, we have $WT|\lambda\>=\lambda  T |\lambda\>$.
\end{lemma}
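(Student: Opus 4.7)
The proof plan rests on two identities about the isometry $T$, from which the two-dimensional block form follows by direct calculation.

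First I would establish the key identities $T^\dag T = \id_N$ and $T^\dag S T = H$. The first is immediate because $\<\psi_j|\psi_k\> = \delta_{jk}\<j|k\> \cdot (\text{normalization})$, the first register forcing orthogonality and the second register being normalized by the $A_{jk}$/$1-|A_{jk}|$ split. For the second, expanding $\<\psi_j|S|\psi_k\>$ one sees that only the ``$|l\>$'' parts of the second register (with $l<N$) can match the first register $|j\>$ after the swap, and a short computation using the square-root convention $\sqrt{A_{jk}^*}(\sqrt{A_{jk}})^* = A_{jk}^*$ together with Hermiticity $A_{kj}^* = A_{jk}$ yields $\<\psi_j|S|\psi_k\> = A_{jk}/d = H_{jk}$.

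Next I would exploit these identities to compute the action of $W = S(2TT^\dag - \id)$ on the two candidate basis vectors. Using $T^\dag T|\lambda\> = |\lambda\>$, one finds $WT|\lambda\> = S(2T-T)|\lambda\> = ST|\lambda\>$. The basis $\{T|\lambda\>, ST|\lambda\>\}$ is not orthonormal; the overlap is $\<\lambda|T^\dag ST|\lambda\> = \<\lambda|H|\lambda\> = \lambda$, so for $|\lambda|<1$ I Gram--Schmidt to the orthonormal pair
\[
  |v_1\> := T|\lambda\>, \qquad
  |v_2\> := \frac{ST|\lambda\>-\lambda\, T|\lambda\>}{\sqrt{1-\lambda^2}}.
\]
Rewriting the equation $WT|\lambda\> = ST|\lambda\>$ in this basis gives the first column $(\lambda,\sqrt{1-\lambda^2})^{\top}$.

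For the second column, I would compute $WST|\lambda\>$ using $S^2=\id$ and $T^\dag ST|\lambda\> = \lambda|\lambda\>$:
\[
  W S T|\lambda\> = S(2TT^\dag - \id)ST|\lambda\> = 2\lambda\, ST|\lambda\> - T|\lambda\>.
\]
Substituting into $W|v_2\> = (WST|\lambda\>-\lambda WT|\lambda\>)/\sqrt{1-\lambda^2}$ and re-expressing in the $\{|v_1\>,|v_2\>\}$ basis produces $-\sqrt{1-\lambda^2}|v_1\> + \lambda|v_2\>$, matching the second column. This simultaneously verifies invariance of $\spn\{T|\lambda\>,ST|\lambda\>\}$ under $W$.

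Finally, for $|\lambda|=1$, Cauchy--Schwarz is saturated in $\<\lambda|T^\dag ST|\lambda\>=\lambda$ (since both $T|\lambda\>$ and $ST|\lambda\>$ are unit vectors), forcing $ST|\lambda\>=\lambda T|\lambda\>$; combined with $WT|\lambda\>=ST|\lambda\>$ this gives $WT|\lambda\>=\lambda T|\lambda\>$ as claimed. The main obstacle is the bookkeeping in step one: matching the two halves of the $|\psi_j\>$ ansatz after the swap and handling the complex square-root convention carefully so that Hermiticity of $A$ cleanly produces $H_{jk}$; once that identity is in hand, the rest is a two-by-two computation.
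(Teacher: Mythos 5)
Your proposal is correct and follows essentially the same route as the paper's proof: establish $T^\dag T=\id$ and $T^\dag S T=H$, note $WT|\lambda\>=ST|\lambda\>$, and compute $W$ on the orthonormal pair $\{T|\lambda\>, (ST|\lambda\>-\lambda T|\lambda\>)/\sqrt{1-\lambda^2}\}$, which is exactly the paper's $\{T|\lambda\>,|{\perp_\lambda}\>\}$ basis. Your only additions are spelling out the verification of the two isometry identities (which the paper simply asserts) and justifying the $|\lambda|=1$ case via saturation of Cauchy--Schwarz, both of which are fine.
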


\begin{proof}
Observe that $T^\dag T = \id$ and $T^\dag S T = H$.
Let $|{\perp_\lambda}\>$ denote a state orthogonal to $T|\lambda\>$ in $\spn\{T|\lambda\>, ST|\lambda\>\}$, satisfying
\begin{align}\label{eq:perpdef}
	ST|\lambda\> = \lambda T|\lambda\> + \sqrt{1-\lambda^2} |{\perp_\lambda}\>,
\end{align}
which is well defined since $|\lambda|<1$. Then we have
\begin{align}
	WT|\lambda\>
	&= S(2TT^\dag - 1)T|\lambda\> \\
	&= ST|\lambda\> \label{eq:WTlameqSTlam} \\
	&= \lambda T|\lambda\> + \sqrt{1-\lambda^2}|{\perp_\lambda}\>.
	\label{eq:wtlambda}
\end{align}
Furthermore,
\begin{align}
	\sqrt{1-\lambda^2} W|{\perp_\lambda}\>
	&= WST|\lambda\> - \lambda WT|\lambda\> \\
	&= S(2TT^\dag-1)ST|\lambda\> - \lambda S(2TT^\dag-1)T|\lambda\> \\
	&= \lambda ST|\lambda\> - T|\lambda\> \\
	&= (\lambda^2-1) T|\lambda\> + \lambda \sqrt{1-\lambda^2} |{\perp_\lambda}\>,
\end{align}
so dividing by $\sqrt{1-\lambda^2}$, which is nonzero, we have
\begin{align}
	W|{\perp_\lambda}\> = -\sqrt{1-\lambda^2} T|\lambda\> + \lambda |{\perp_\lambda}\>.
	\label{eq:wperplambda}
\end{align}
Combining \eq{wtlambda} and \eq{wperplambda}, we see that $W$ has the stated form. When $|\lambda|=1$, \eq{perpdef} simplifies to $ST|\lambda\>=\lambda T|\lambda\>$, so \eq{WTlameqSTlam} gives $WT|\lambda\>=\lambda  T |\lambda\>$, as desired.
\end{proof}

\begin{lemma}\label{lem:power}
	For any $\lambda \in [-1,1]$, let
	\[
	  W =
	  \begin{pmatrix}
		\lambda & -\sqrt{1-\lambda^2} \\
		\sqrt{1-\lambda^2} & \lambda
		\end{pmatrix}.
	\]
	Then for any positive integer $n$,
	\[
	  W^n =
	  \begin{pmatrix}
    \chebt_n(\lambda) & -\sqrt{1-\lambda^2} \, \chebu_{n-1}(\lambda) \\
    \sqrt{1-\lambda^2} \, \chebu_{n-1}(\lambda) & \chebt_n(\lambda)
    \end{pmatrix}
	\]
where $\chebt_n$ is the \th{n} Chebyshev polynomial of the first kind and $\chebu_n$ is the \th{n} Chebyshev polynomial of the second kind.
\end{lemma}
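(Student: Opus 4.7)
The cleanest route is a trigonometric substitution. Since $\lambda \in [-1,1]$, write $\lambda = \cos\theta$ for some $\theta \in [0,\pi]$, so that $\sqrt{1-\lambda^2} = \sin\theta$. Then
\[
W = \begin{pmatrix} \cos\theta & -\sin\theta \\ \sin\theta & \cos\theta \end{pmatrix}
\]
is the standard rotation matrix by angle $\theta$, so its $n$-th power is simply the rotation by $n\theta$:
\[
W^n = \begin{pmatrix} \cos(n\theta) & -\sin(n\theta) \\ \sin(n\theta) & \cos(n\theta) \end{pmatrix}.
\]
Now I would invoke the defining trigonometric identities $\chebt_n(\cos\theta) = \cos(n\theta)$ and $\chebu_{n-1}(\cos\theta) = \sin(n\theta)/\sin\theta$ (the latter for $\sin\theta\neq 0$). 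Substituting $\cos(n\theta) = \chebt_n(\lambda)$ and $\sin(n\theta) = \sqrt{1-\lambda^2}\,\chebu_{n-1}(\lambda)$ directly yields the claimed block form.

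For the boundary case $\lambda = \pm 1$ where $\sin\theta = 0$, the substitution above is degenerate, but the claim is immediate by inspection: the matrix $W$ becomes $\lambda I$, so $W^n = \lambda^n I$, while on the right-hand side the off-diagonal entries vanish because of the prefactor $\sqrt{1-\lambda^2} = 0$, and $\chebt_n(\pm 1) = (\pm 1)^n = \lambda^n$ on the diagonal. So the identity holds on all of $[-1,1]$.

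There is essentially no obstacle here beyond recognizing the rotation structure; the only point requiring any thought is the $\chebu$ identity, which can be derived from the product-to-sum formula $\sin(n\theta) = \sin\theta\cos((n-1)\theta) + \cos\theta\sin((n-1)\theta)$ combined with $\chebu_{n-1}(\cos\theta)\sin\theta = \sin(n\theta)$ used inductively. As an alternative avoiding trigonometry, one could induct on $n$: compute $W \cdot W^n$ using the inductive hypothesis and verify the resulting entries match via the mixed Chebyshev identities $\chebt_{n+1}(\lambda) = \lambda\,\chebt_n(\lambda) - (1-\lambda^2)\chebu_{n-1}(\lambda)$ and $\chebu_n(\lambda) = \chebt_n(\lambda) + \lambda\,\chebu_{n-1}(\lambda)$. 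This is a bit fussier but purely algebraic.
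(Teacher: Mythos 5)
Your argument is correct, and it takes the alternate route rather than the paper's primary one: the paper proves this lemma by induction on $n$, multiplying $W^n$ by $W$ and simplifying via the mixed recurrences $\chebt_{n+1}(\lambda) = \lambda\,\chebt_n(\lambda) - (1-\lambda^2)\,\chebu_{n-1}(\lambda)$ and $\chebu_n(\lambda) = \chebt_n(\lambda) + \lambda\,\chebu_{n-1}(\lambda)$ --- exactly the identities you cite in your closing remark as the ``fussier'' alternative --- and then notes, without detail, that one could instead substitute $\lambda = \cos\theta$ and use the trigonometric characterizations. Your main proof fills in that second route: recognizing $W$ as the rotation by $\theta$ (with $\sqrt{1-\lambda^2} = \sin\theta$ valid since $\theta \in [0,\pi]$), composing rotations to get $W^n$, and invoking $\chebt_n(\cos\theta) = \cos(n\theta)$ and $\sin\theta\,\chebu_{n-1}(\cos\theta) = \sin(n\theta)$. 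The rotation argument is arguably the more conceptual one --- the Chebyshev structure appears for a reason rather than by verification --- at the small cost of treating $\lambda = \pm 1$ separately, which you do correctly ($W = \lambda I$, so $W^n = \lambda^n I$, matching $\chebt_n(\pm 1) = (\pm 1)^n$ with vanishing off-diagonal entries). The paper's induction avoids any degenerate case but requires the two mixed first-/second-kind identities as external inputs. (Minor terminology: the formula $\sin(n\theta) = \sin\theta\cos((n-1)\theta) + \cos\theta\sin((n-1)\theta)$ is the angle-addition formula, not a product-to-sum formula, but this does not affect the argument.)
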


\begin{proof}
This follows by straightforward induction.  The base case ($n=1$) follows from the identities $\chebt_1(\lambda) = \lambda$ and $\chebu_0(\lambda)=1$.  Assuming the claim holds for a given value of $n$, we have
\begin{align}
	W^n W
	&= \begin{pmatrix}
	   \lambda\,\chebt_n(\lambda) - (1-\lambda^2)\,\chebu_{n-1}(\lambda) & 
	   -\sqrt{1-\lambda^2}\,(\chebt_n(\lambda)+\lambda\,\chebu_{n-1}(\lambda)) \\
	   \sqrt{1-\lambda^2}\,(\lambda\,\chebu_{n-1}(\lambda)+\chebt_n(\lambda)) &
	   -(1-\lambda^2)\,\chebu_{n-1}(\lambda)+\lambda\,\chebt_n(\lambda)
     \end{pmatrix} \\
  &= \begin{pmatrix}
     \chebt_{n+1}(\lambda) & -\sqrt{1-\lambda^2} \, \chebu_n(\lambda) \\
     \sqrt{1-\lambda^2} \, \chebu_n(\lambda) & \chebt_{n+1}(\lambda)
     \end{pmatrix}
\end{align}
where we used the identities
\begin{align}
	\chebt_{n+1}(\lambda) 
	&= \lambda \, \chebt_n(\lambda) - (1-\lambda^2) \, \chebu_{n-1}(\lambda) \\
	\chebu_n(\lambda)
	&= \chebt_n(\lambda) + \lambda \, \chebu_{n-1}(\lambda)
\end{align}
relating Chebyshev polynomials of the first and second kinds. This lemma can alternately be proved using the substitution $\lambda = \cos \theta$ and the trignometric relations between Chebyshev polynomials.
\end{proof}

From \lem{walk} and \lem{power}, we get that for any eigenvector $|\lambda\>$ of $H$ with $|\lambda|<1$, 
\begin{equation}
W^n T|\lambda\> = \chebt_n(\lambda) T|\lambda\> +  \sqrt{1-\lambda^2} \, \chebu_{n-1}(\lambda) |{\perp_\lambda}\>,
\end{equation}
where $|{\perp_\lambda}\>$ is defined through \eq{perpdef}. This equation can also be extended to eigenvectors $|\lambda\>$ with $|\lambda|=1$ by observing that $\chebt_n(\lambda) = \lambda^n$ when $\lambda\in\{-1,+1\}$ and by defining $|{\perp_\lambda}\>=0$.

Since the eigenvectors $|\lambda\>$ of $H$ span all of $\C^N$, and the pairs $\{T|\lambda\>,|{\perp_\lambda}\>\}$ form an invariant subspace of $W$, we have that for any $|\psi\> \in \C^N$, 
\begin{equation}
W^n T|\psi\> = T \chebt_n(H)|\psi\> + |{\perp_\psi}\>,
\end{equation}
where $|{\perp_\psi}\>$ is an unnormalized state that is orthogonal to $\spn\{T|j\>:j\in[N]\}$.

Now consider a unitary circuit implementation of the isometry $T$. Since $T$ maps $\C^N$ to $\C^{2N} \times \C^{2N}$, a unitary implementation would map $|0^m\>|\psi\>$ to $T|\psi\>$ for any $|\psi\> \in \C^N$, with $m = \ceil{\log 2N} + 1$.
By applying this unitary, followed by $W^n$, followed by the inverse of this unitary, we can implement
\begin{equation}
\label{eq:ChebH}
|0^m\>|\psi\> \mapsto |0^m\> \chebt_n(H)|\psi\> + |\Phi^\perp\>,
\end{equation}
where $|\Phi^\perp\>$ is an unnormalized state satisfying $\Pi|\Phi^\perp\>=0$, where $\Pi := |0^{m}\>\<0^{m}| \otimes \id$.

Finally, we note that since the walk operator $W$ (and $T$) can be implemented using $O(1)$ queries to the oracle $\boxA$, as described in \cite{BC12,BCK15}, the operation in \eq{ChebH} can be performed using $O(n)$ queries.

\subsection{Query complexity}
\label{sec:Chebyshev_query}

The Hamiltonian simulation algorithm of \cite{BCK15} approximates $e^{-iHt}$ by a linear combination of the matrices $W^n$, where $W$ is the quantum walk defined above.  From the form of $W^n$ in terms of Chebyshev polynomials, we can interpret the algorithm as decomposing $e^{-iHt}$ as a linear combination of the Chebyshev polynomials $\chebt_n(H)$.

Applying a similar approach to linear systems, we aim to approximate $H^{-1}$, where $H = A/d$, by a linear combination of $\chebt_n(H)$. Since $H$ is Hermitian, it suffices to find a good approximation of $f(x)=1/x$ by a linear combination of $\chebt_n(x)$ over a domain containing the spectrum of $H$. Since the eigenvalues of $A$ lie in $D_\kappa := [-1,-1/\kappa]\cup[1/\kappa,1]$, the eigenvalues of $H$ lie in $D_{\kappa d}$. 
(Although the eigenvalues of $H$ actually lie in $[-1/d,-1/(\kappa d)]\cup[1/(\kappa d),1/d]$, we do not use this fact.)  Once we express $f(x)=1/x$ as a linear combination of $\chebt_n(x)$ over the domain $D_{\kappa d}$, the same linear combination yields an approximation for $H^{-1}$, according to \cor{lcufunc}.

We now analyze the approximation of $1/x$ by a Chebyshev series.
We start by approximating the function $1/x$ with a function that is bounded at the origin. We accomplish this by multiplying $1/x$ with a function that is close to identity on $D_{\kappa d}$ and very small near the origin. The function $1-(1-x^2)^b$ has this property for large enough $b$.

\begin{lemma}
\label{lem:taming}
The function 
\begin{equation}\label{eq:taming}
  f(x) := \frac{1-(1-x^2)^b}{x}
\end{equation}
is $\epsilon$-close to $1/x$ on the domain $D_{\kappa d}$ for any integer $b \geq (\kappa d)^2 \log(\kappa d/\epsilon)$.
\end{lemma}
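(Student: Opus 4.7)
The plan is to compute the pointwise difference $f(x) - 1/x$ in closed form and then bound it uniformly on the domain $D_{\kappa d}$. Directly from the definition \eq{taming} we get
\be
f(x) - \frac{1}{x} = -\frac{(1-x^2)^b}{x},
\ee
so the task reduces to showing that $(1-x^2)^b/|x| \leq \epsilon$ whenever $x \in D_{\kappa d} = [-1,-1/(\kappa d)] \cup [1/(\kappa d),1]$.

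On $D_{\kappa d}$ we have $1/(\kappa d) \leq |x| \leq 1$, so $x^2 \in [1/(\kappa d)^2, 1]$ and the factor $(1-x^2)^b$ is maximized at the extreme points where $|x| = 1/(\kappa d)$. At these extreme points $1/|x| = \kappa d$, so
\be
\left|f(x) - \frac{1}{x}\right| \leq \kappa d\,\bigl(1-1/(\kappa d)^2\bigr)^b.
\ee
Applying the elementary inequality $1-y \leq e^{-y}$ with $y = 1/(\kappa d)^2$, this is in turn at most $\kappa d \cdot e^{-b/(\kappa d)^2}$.

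Finally, plugging in $b \geq (\kappa d)^2 \log(\kappa d/\epsilon)$ makes the exponential factor at most $\epsilon/(\kappa d)$, which yields the desired bound $\bigl|f(x) - 1/x\bigr| \leq \epsilon$ on all of $D_{\kappa d}$. Nothing here is subtle: the only ``choice'' is the taming factor $1-(1-x^2)^b$, which by construction is exponentially close to $1$ on $D_{\kappa d}$ while vanishing to order $2b$ at the origin, so the main (and really only) step is the two-line calculation above verifying that the stated threshold on $b$ is sufficient.
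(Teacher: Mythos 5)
Your proof is correct and takes essentially the same route as the paper: both arguments use monotonicity to reduce to the worst point $|x|=1/(\kappa d)$, apply $1-y \le e^{-y}$ to get $(1-1/(\kappa d)^2)^b \le e^{-b/(\kappa d)^2} \le \epsilon/(\kappa d)$, and absorb the factor $1/|x| \le \kappa d$ to conclude $\bigl|f(x)-1/x\bigr| \le \epsilon$. No gaps.
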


\begin{proof}
For $b>0$, on the domain $D_{\kappa d}$ the numerator $1-(1-x^2)^b$ increases monotonically toward $1$ as $|x|$ increases.  Thus, over $D_{\kappa d}$, the numerator differs most from the constant function $1$ at $x=1/(\kappa d)$, where the difference is
\begin{equation}
\(1-\frac{1}{(\kappa d)^2}\)^b \le e^{-b/(\kappa d)^2} \leq \frac{\epsilon}{\kappa d}.
\end{equation}
Therefore we have
\be 
  \biggl|\frac{1-(1-x^2)^b}{x}-\frac{1}{x}\biggr| 
  \le \frac{\epsilon}{\kappa d |x|}
  \le \epsilon.
\qedhere
\ee 
\end{proof}

We now express the function $f(x)$ as a linear combination of Chebyshev polynomials. Since $f(x)$ is a polynomial of degree $2b-1$, it can be exactly represented as a linear combination of Chebyshev polynomials of order at most $2b-1$.

\begin{lemma}
\label{lem:exact}
Over the domain $[-1,1]$, the function $f(x)$ defined in \eq{taming} can be exactly represented by a linear combination of Chebyshev polynomials of order at most $2b-1$:
\begin{equation}
  f(x) = \frac{1-(1-x^2)^b}{x} = 4 \sum_{j=0}^{b-1} (-1)^j \left[\frac{\sum_{i=j+1}^b \binom{2b}{b+i}}{2^{2b}}\right] \chebt_{2j+1}(x). 
\label{eq:exact}
\end{equation}
\end{lemma}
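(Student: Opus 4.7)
The plan is to substitute $x = \cos\theta$ and verify the claim as a trigonometric identity on $[0,\pi]$, using $\chebt_n(\cos\theta) = \cos(n\theta)$. Since both sides of \eq{exact} are polynomials in $x$ of degree at most $2b-1$ (the left-hand side being a polynomial because $1-(1-x^2)^b$ vanishes at $x=0$), agreement on the infinitely many points of $[-1,1]$ will establish the polynomial identity.

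First I would compute the Fourier cosine expansion of $\sin^{2b}\theta$ by applying the binomial theorem to $\sin^{2b}\theta = (-1)^b(e^{i\theta}-e^{-i\theta})^{2b}/4^b$ and pairing conjugate terms, obtaining
\begin{equation}
\sin^{2b}\theta = \frac{1}{4^b}\binom{2b}{b} + \frac{2}{4^b}\sum_{m=1}^{b}(-1)^m \binom{2b}{b+m}\cos(2m\theta).
\end{equation}
Combined with $\sum_{m=-b}^{b}\binom{2b}{b+m} = 4^b$, this yields
\begin{equation}
1 - \sin^{2b}\theta = \frac{2}{4^b}\sum_{m=1}^{b}\binom{2b}{b+m} - \frac{2}{4^b}\sum_{m=1}^{b}(-1)^m \binom{2b}{b+m}\cos(2m\theta).
\end{equation}

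Rather than dividing by $\cos\theta$ (which vanishes at $\theta=\pi/2$), I would multiply the claimed identity \eq{exact} through by $\cos\theta$ and match it against the expression above. Expanding the right-hand side with the product-to-sum formula $2\cos\theta\cos((2j+1)\theta) = \cos(2j\theta) + \cos((2j+2)\theta)$ shows that the coefficient of $\cos(2m\theta)$ for $1 \le m \le b$ reduces to $2(-1)^{m-1}\alpha_{m-1} + 2(-1)^m \alpha_m$, where $\alpha_j := \sum_{i=j+1}^{b}\binom{2b}{b+i}/4^b$ (with the convention $\alpha_b = 0$), and the constant term is $2\alpha_0$. The key observation is the telescoping identity $\alpha_m - \alpha_{m-1} = -\binom{2b}{b+m}/4^b$, which collapses each coefficient to $-2(-1)^m\binom{2b}{b+m}/4^b$, exactly matching the Fourier expansion above; the constant term $2\alpha_0$ matches by the same computation.

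Having established the trig identity after multiplication by $\cos\theta$, I conclude by dividing back through (valid on $[0,\pi]\setminus\{\pi/2\}$, i.e., $x \in [-1,1]\setminus\{0\}$), then extending by continuity to all of $[-1,1]$, and finally invoking polynomial identity to finish. The main obstacle is really just the coefficient bookkeeping in the product-to-sum step, but the telescoping structure of $\alpha_m$ makes this routine once spotted; no new analytic input is required.
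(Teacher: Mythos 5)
Your proposal is correct and follows essentially the same route as the paper: substituting $x=\cos\theta$, multiplying through by $\cos\theta$, expanding $\sin^{2b}\theta$ via the binomial theorem, applying the product-to-sum formula, and matching the coefficients of $\cos(2m\theta)$, where your telescoping identity $\alpha_m-\alpha_{m-1}=-\binom{2b}{b+m}/4^b$ is exactly the collapse the paper performs term by term (with your convention $\alpha_b=0$ neatly absorbing the paper's separate $r=b$ case). The coefficient checks are all accurate, so no changes are needed.
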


\begin{proof}
First note that $f(x)$ is well-defined as $x \to 0$ because its numerator is a polynomial with no constant term.

Since $x\in[-1,1]$, we can make the substitution $x=\cos(\theta)$. Thus we aim to prove
\begin{equation}
  f(\cos(\theta))
  = \frac{1-(1-\cos^2(\theta))^b}{\cos(\theta)} 
  = 4 \sum_{j=0}^{b-1} (-1)^j \left[\frac{\sum_{i=j+1}^b \binom{2b}{b+i}}{2^{2b}}\right] \chebt_{2j+1}(\cos(\theta)).
\end{equation}
Using $\chebt_n(\cos \theta) = \cos(n\theta)$, this is equivalent to showing
\begin{equation}
{1-(\sin(\theta))^{2b}} = 4 \sum_{j=0}^{b-1} (-1)^j \left[\frac{\sum_{i=j+1}^b \binom{2b}{b+i}}{2^{2b}}\right] \cos((2j+1)\theta)\cos(\theta).
\label{eq:newid}
\end{equation}

By the identity
\begin{align}
  \sin(\theta)^{2b}
  &= \frac{1}{(2i)^{2b}} \sum_{k=0}^{2b} \binom{2b}{k} (e^{i \theta})^k (-e^{-i\theta})^{2b-k} \\
  &= \frac{(-1)^b}{2^{2b}} \sum_{k=0}^{2b} \binom{2b}{k} (-1)^k e^{i (2k-2b) \theta} \\
  &= \frac{1}{2^{2b}} \binom{2b}{b} + \frac{2}{2^{2b}} \sum_{k=0}^{b-1} \binom{2b}{k} (-1)^{b-k} \cos((2b-2k)\theta),
\end{align}
we have
\begin{align}
  {1-(\sin(\theta))^{2b}} = 1 - \( \frac{1}{2^{2b}} \binom{2b}{b} + \frac{2}{2^{2b}} \sum_{k=0}^{b-1} (-1)^{b-k} \binom{2b}{k}\cos((2b-2k)\theta) \).
\end{align}
Using the product-to-sum formula for cosines ($2\cos \theta \cos \phi = \cos(\theta+\phi)+\cos(\theta-\phi)$), the right-hand side of \eq{newid} simplifies to
\begin{align}
  2 \sum_{j=0}^{b-1} (-1)^j \left[\frac{\sum_{i=j+1}^b \binom{2b}{b+i}}{2^{2b}}\right] \(\cos((2j+2)\theta)+\cos(2j\theta)\).
\end{align}
It remains to check that the coefficients of $\cos(2r\theta)$ are equal on both sides for all $r \in \{0,1,\ldots,b\}$. (Note that both sides are linear combinations of such terms.)

First, the coefficient of the $r=0$ term on the left-hand side is $1 - \frac{1}{2^{2b}} \binom{2b}{b}$. The same coefficient on the right-hand side is $2^{1-2b} \sum_{i=1}^b \binom{2b}{b+i} = 2^{-2b} \(\sum_{i=0}^{2b} \binom{2b}{i}-\binom{2b}{b}\) = 1 - \frac{1}{2^{2b}} \binom{2b}{b}$.

When $r \in [b]$, the coefficient of $\cos(2r\theta)$ on the left-hand side is
\begin{equation}
 - \( \frac{2}{2^{2b}} (-1)^{r} \binom{2b}{b-r} \).
\end{equation}

When $r \in [b-1]$, the coefficient of $\cos(2r\theta)$ on the right-hand side is
\begin{equation}
  2 (-1)^{r-1} \left[\frac{\sum_{i=r}^b \binom{2b}{b+i}}{2^{2b}}\right] -
  2 (-1)^r \left[\frac{\sum_{i=r+1}^b \binom{2b}{b+i}}{2^{2b}}\right] 
  = 2(-1)^{r-1} \left[ \frac{\binom{2b}{b+r}}{2^{2b}}\right],
\end{equation}
which is the same as the left-hand side.

Finally, when $r=b$, the coefficient of $\cos(2r\theta)$ on the right-hand side is
\begin{equation}
  2 (-1)^{b-1} \left[\frac{\sum_{i=b}^b \binom{2b}{b+i}}{2^{2b}}\right] = 2 (-1)^{b-1} \left[\frac{1}{2^{2b}}\right],
\end{equation}
which is also the same as the left-hand side.
\end{proof}

Although $f(x)$ is exactly representable as a linear combination of Chebyshev polynomials of order at most $2b-1$, observe that the coefficients corresponding to the higher order terms are very small. Thus we can truncate the series expansion for $f(x)$ while still remaining $\epsilon$-close to $f(x)$.

\begin{lemma}
\label{lem:approx}
The function $f(x)$ in \eq{taming} can be $\epsilon$-approximated by a linear combination of Chebyshev polynomials of order $O(\sqrt{b\log(b/\epsilon)})$ by truncating the series in \eq{exact} at $j_0 = \sqrt{b\log(4b/\epsilon)}$. In other words,
\begin{equation}
  g(x) := 4 \sum_{j=0}^{j_0} (-1)^j \left[\frac{\sum_{i=j+1}^b \binom{2b}{b+i}}{2^{2b}}\right] \chebt_{2j+1}(x)
\label{eq:approx}
\end{equation}
is $\epsilon$-close to $f(x)$ on $[-1,1]$.
\end{lemma}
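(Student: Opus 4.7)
The plan is to bound the truncation error directly. Since $f(x)$ has the exact Chebyshev expansion from \lem{exact} and $g(x)$ is obtained by dropping the terms with $j > j_0$, we have
\begin{align}
|f(x) - g(x)| = \biggl|4 \sum_{j=j_0+1}^{b-1} (-1)^j \biggl[\frac{\sum_{i=j+1}^b \binom{2b}{b+i}}{2^{2b}}\biggr] \chebt_{2j+1}(x)\biggr|.
\end{align}
First I would apply the triangle inequality and use the standard bound $|\chebt_n(x)| \le 1$ for all $x \in [-1,1]$, yielding
\begin{align}
|f(x) - g(x)| \le \frac{4}{2^{2b}} \sum_{j=j_0+1}^{b-1} \sum_{i=j+1}^b \binom{2b}{b+i}.
\end{align}

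The heart of the argument is a tail bound on the inner binomial sum. Viewing $\sum_{i=j+1}^{b} \binom{2b}{b+i}/2^{2b}$ as the probability that a $\mathrm{Binomial}(2b,1/2)$ random variable exceeds its mean by more than $j+1$, Hoeffding's inequality gives
\begin{align}
\frac{1}{2^{2b}}\sum_{i=j+1}^b \binom{2b}{b+i} \le e^{-(j+1)^2/b}.
\end{align}
Substituting this bound yields $|f(x)-g(x)| \le 4\sum_{j=j_0+1}^{b-1} e^{-(j+1)^2/b}$.

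To finish, I would note that with the choice $j_0 = \sqrt{b\log(4b/\epsilon)}$, every term in the remaining sum satisfies $e^{-(j+1)^2/b} \le e^{-j_0^2/b} = \epsilon/(4b)$. Since there are fewer than $b$ terms, the total is at most $4 \cdot b \cdot \epsilon/(4b) = \epsilon$, giving the claim. The only subtle step is the binomial tail bound; although Hoeffding is clean, one might prefer an elementary proof via $\binom{2b}{b+i+1}/\binom{2b}{b+i} \le 1$ and comparison with a geometric series, or a direct Chernoff-type argument. Either way, this step is routine, so I do not expect a genuine obstacle---the main thing to keep track of is that $j_0$ is chosen just large enough to absorb the $b$ factor from the number of surviving terms.
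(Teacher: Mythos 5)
Your proposal is correct and follows essentially the same route as the paper: the paper also bounds the dropped terms using $|\chebt_n(x)|\le 1$ together with a Chernoff-type tail bound on the binomial sum (the paper states it as $\frac{1}{2^{2b}}\sum_{i=j+1}^b \binom{2b}{b+i} \le e^{-j^2/b}$) and then multiplies by the at most $b$ surviving terms to get $4b\,e^{-j_0^2/b}=\epsilon$. Your use of Hoeffding's inequality with exponent $(j+1)^2/b$ is just a marginally tighter form of the same step.
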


\begin{proof}
The expression in square brackets in \eq{exact} is the probability of seeing more than $b+j$ heads on flipping $2b$ fair coins, which is negligible for large $j$.  In particular, the Chernoff bound gives 
\begin{align}
	\frac{1}{2^{2b}} \sum_{i=j+1}^b \binom{2b}{b+i} \le e^{-j^2/b}.
\end{align}
Thus we have
\begin{align}
	|f(x)-g(x)|
	&= 4 \left| \sum_{j=j_0+1}^{b-1} (-1)^j \left[\frac{\sum_{i=j+1}^b \binom{2b}{b+i}}{2^{2b}}\right] \chebt_{2j+1}(x) \right| \\
	&\le 4 \sum_{j=j_0+1}^{b-1} e^{-j^2/b} |\chebt_{2j+1}(x)| 
	\le 4b e^{-j_0^2/b} 
	= \epsilon,
\end{align}
where we used the fact that $|\chebt_n(x)|\le 1$ for $x \in [-1,1]$.
\end{proof}

We now establish the query complexity part of \thm{Chebyshev}.

\begin{proof}[Proof of \thm{Chebyshev} (query complexity)]
We implement $A^{-1}$ using \cor{lcufunc} by expressing it in terms of $g(H)$. Since $\norm{H^{-1} - g(H)} \leq \epsilon$, $\norm{A^{-1} - \frac{1}{d}g(H)} \leq \epsilon/d \leq \epsilon$. \cor{lcufunc} makes $O(\alpha)$ uses of $U$, $V$, and $\boxB$. The value of $\alpha$ for the linear combination $\frac{1}{d}g(H)$ is 
\begin{align}
  \alpha =
  \frac{4}{d} \sum_{j=0}^{j_0} \left[\frac{\sum_{i=j+1}^b \binom{2b}{b+i}}{2^{2b}}\right]
  &\le \frac{4 j_0}{d}
  \label{eq:chebyl1norm}
\end{align}
because the term in square brackets is a probability and hence is at most 1.
Since $V$ costs no queries to implement, it suffices to understand the cost of implementing $U$.
 
The highest order of a Chebyshev polynomial used in \eq{approx} is $O(j_0) = O(d \kappa \log(d\kappa/\epsilon))$. Since the cost of implementing the map \eq{ChebH} is proportional to $n$ as discussed in \sec{Chebyshev_atoms}, the cost of implementing the unitary $U$ in \lem{LCU} is $O(j_0)$.  Thus the total query complexity is $O(\alpha j_0) = O(d\kappa^2 \log^{2}(d\kappa/\epsilon))$ 
Finally, the total number of applications of $\boxB$ is $O(\alpha) = O(\kappa \log(\kappa d/\epsilon))$.
\end{proof}

\subsection{Gate complexity}
\label{sec:Chebyshev_gate}

To upper bound the gate complexity of this approach, we must consider the implementation of the operator $U$ (which consists of steps of the walk $W$) and the operator $V$ in \cor{lcufunc}.

\begin{proof}[Proof of \thm{Chebyshev} (gate complexity)]
We begin with the gate cost of $W$. As discussed in the proof of \cite[Lemma 10]{BCK15}, a step of the quantum walk can be performed up to error at most $\epsilon'$ with gate complexity $O(\log N +\log^{2.5}(\kappa d/\epsilon'))$.
Thus the gate cost of implementing the map $U$, which involves a power of $W$ as large as $j_0$, is $O(j_0(\log N +\log^{2.5}(\kappa d j_0/\epsilon))) = O(j_0(\log N +\log^{2.5}(\kappa d/\epsilon)))$ using \lem{c-U}.

Implementing the map $V$ involves creating a quantum state in a space of dimension $j_0$. As shown in \cite[Sec IV]{SBM06}, any such state can be created with $O(j_0)$ two-qubit gates. Since the gate cost of $V$ is less than the gate cost of $U$, we can neglect this in our calculations.

Thus we see that the gate complexity exceeds the query complexity by a multiplicative factor of $O(\log N +\log^{2.5}(\kappa d/\epsilon))$, as claimed.
\end{proof}

\section{Improved dependence on condition number}
\label{sec:VTAA}

In this section we reduce the $\kappa$-dependence of both our algorithms from quadratic to nearly linear, establishing \thm{linear}. We use a tool called variable-time amplitude amplification, a low-precision version of phase estimation, and results from \sec{Fourier} or \sec{Chebyshev} (either approach can be used to obtain the results of this section).

To see why this improvement might even be possible, observe that the quadratic dependence on $\kappa$ has two contributions. First, the complexity of performing the linear combination corresponding to $A^{-1}$ (even with very low probability) depends linearly on $\kappa$. More precisely, this refers to the cost of implementing the controlled unitary $U$ in \cor{lcufunc}: in the Fourier approach this cost depends on the length of time for which we simulate a Hamiltonian, and in the Chebyshev approach it depends on the highest-order Chebyshev polynomial used. In both approaches, this cost is close to linear in $\kappa$.
The second contribution comes from amplifying the probability of $A^{-1}$ being successfully applied. As described in \cor{lcufunc}, this contribution is proportional to the quantity $\alpha$, whose dependence on $\kappa$ is also approximately linear in both approaches.

Observe that if we were promised that the state $|b\>$ were only a superposition of eigenvectors of $A$ with singular values that are all close to $1$, the problem would be much easier. As noted in \sec{results}, if we know that $|b\>$ lives in a subspace of $A$ with low condition number $\kappa'$, we can replace $\kappa$ with $\kappa'$ in our upper bounds. If $|b\>$ lives in the subspace of eigenvectors of $A$ with singular values close to $1$, then $\kappa' = O(1)$. 
This suggests that the difficult case is when the state $|b\>$ is a superposition of eigenvectors with singular values close to $1/\kappa$. However, if we were promised that this holds, then we could again improve the complexity of our algorithms. This is because the second contribution in \cor{lcufunc}, proportional to $\alpha$, only arises because we need to amplify the success probability of our implementation of $A^{-1}$. If $|b\>$ were simply a superposition of eigenvectors with singular value close to $1/\kappa$, then this step would be considerably less expensive. Quantitatively, observe that when we apply \cor{lcufunc}, the expected complexity is $O(\alpha/\norm{A^{-1}|b\>})$, which would be $O(\alpha/\kappa)=
\poly(\log(d \kappa/\epsilon))$ in this case, bringing down the cost by a factor of $\kappa$.
Thus at both extremes we can improve the $\kappa$-dependence of our algorithm. The reason our algorithm has nearly quadratic dependence on $\kappa$ is that we need to handle both cases simultaneously. 

In the HHL algorithm, a similar situation also arises. Ambainis exploited this to reduce the complexity of the HHL algorithm using a technique called variable-time amplitude amplification (VTAA)~\cite{Amb12}, a generalization of amplitude amplification. VTAA amplifies the success probability of a quantum algorithm $\A$ with variable stopping times using a sequence of concatenated amplitude amplification steps. If some branches of the computation performed by $\mathcal A$ run longer than others,  VTAA can perform significantly better than standard amplitude amplification.

\subsection{Tools}

We now describe the tools required to construct our algorithm: variable-time amplitude amplification (\thm{VTAA}), a low-precision variant of phase estimation we call gapped phase estimation (\lem{GPE}), and procedures for implementing approximations of $A^{-1}$  that are accurate for different ranges of eigenvalues (\lem{Ainvapprox}). In this section we use the notation $\O(\cdot)$ to ignore factors of $\log(d\kappa/\epsilon)$, i.e., we write  $f(d,\kappa,\epsilon) = \O(g(d,\kappa,\epsilon))$ to indicate $f(d,\kappa,\epsilon) = O(g(d,\kappa,\epsilon)\poly(\log(d\kappa/\epsilon))))$.

\paragraph{Variable-time amplitude amplification} We start by formalizing the notion of a quantum algorithm with variable stopping times. We want to capture the intuitive idea that an algorithm $\A$ has $m$ potential stopping times $t_1,t_2,\ldots,t_m$. At time $t_j$, the algorithm can indicate that it wants to stop by setting the \th{j} qubit of a special clock register to $1$, where the algorithm starts with all clock qubits set to $0$. To enforce that the algorithm actually halts when it indicates that it has, we require that subsequent operations do not affect branches of the computation that have halted in previous steps. We formalize this in the following definition.

\begin{definition}[Variable-time quantum algorithm; cf.\ Section 3.3 of \cite{Amb12}]
Let $\A$ be a quantum algorithm on a space $\H$ that starts in the state $|{0}\>_\H$, the all zeros state in $\H$. We say $\A$ is a \emph{variable-time quantum algorithm} if the following conditions hold:
\begin{enumerate}[itemsep=0pt]
\item $\A$ can be written as the product of $m$ algorithms, $\A = \A_m \A_{m-1} \cdots \A_1$.
\item $\H$ can be written as a product $\H=\H_C\otimes\H_A$, where 
$\H_C$ is a product of $m$ single qubit registers denoted $\H_{C_1}, \H_{C_2}, \ldots, \H_{C_m}$.
\item Each $\A_j$ is a controlled unitary that acts on the registers $\H_{C_j} \otimes \H_A$ controlled on the first 
$j-1$ qubits of $\H_C$ being set to $0$.
\end{enumerate}
\end{definition}

In this definition the $m$ potential stopping times of $\A$ correspond to when the segments $\A_j$ end. The last condition enforces that the algorithm in step $j$ does not perform any operation in branches of the computation that have halted in previous steps. 

Now consider a variable-time quantum algorithm that prepares a state $|\psi_\mathrm{succ}\>$ probabilistically. More precisely, the algorithm has a single-qubit flag register that is measured at the end of the algorithm: if we observe the outcome $1$ on measuring this register, we have successfully prepared the state $|\psi_\mathrm{succ}\>$. Let $p_\mathrm{succ}$ denote the probability of obtaining the desired outcome. If the algorithm's complexity is $t_m$, then simply repeating the algorithm $O(1/p_\mathrm{succ})$ times will yield an algorithm that creates the state $|\psi_\mathrm{succ}\>$ with high probability and that has complexity $O(t_m/p_\mathrm{succ})$. Using standard amplitude amplification, we can do the same with complexity only $O(t_m/\sqrt{p_\mathrm{succ}})$. Variable-time amplitude amplification now allows us to achieve the same with even lower cost if the average stopping time of the algorithm is smaller than its maximum running time, as shown by Ambainis \cite[Theorem 1]{Amb12}. We now specialize the result to use our definition of a variable-time quantum algorithm.

\begin{theorem}[Variable-time amplitude amplification]
\label{thm:VTAA}
Let $\A = \A_m \A_{m-1} \cdots \A_1$ be a variable-time quantum algorithm on the space $\H=\H_C \otimes\H_F \otimes \H_W$. If $|{0}\>_\H$ denotes the all zeros state in $\H$ and $t_j$ denotes the query complexity of the algorithm $\A_j\A_{j-1} \cdots \A_1$, we define
\begin{equation}
p_j = \norm{\Pi_{C_j}\A_j\cdots\A_1|{0}\>_\H}^2 \qquad \mathrm{and} \qquad  t_\mathrm{avg} = \sqrt{\sum_{j=1}^m p_j t_j^2}
\end{equation}
to be the probability of halting at step $j$ and the root-mean-square average query complexity of the algorithm, respectively, 
where $\Pi_{C_j}$ denotes the projector onto $|1\>$ in $\H_{C_j}$.
Additionally, let the success probability of the algorithm and the corresponding output state be denoted
\begin{equation}
p_\mathrm{succ} = \norm{\Pi_F\A_m\cdots\A_1|{0}\>_\H}^2
\quad \mathrm{and} \quad
|\psi_\mathrm{succ}\> = \frac{\Pi_F\A_m\cdots\A_1|{0}\>_\H}
{\norm{\Pi_F\A_m\cdots\A_1|{0}\>_\H}},
\end{equation}
where $\Pi_F$ projects onto $|1\>$ in $\H_F$. Then there exists a quantum algorithm with query complexity
\begin{equation}
O\biggl(\Bigl( t_m + \frac{t_\mathrm{avg}}{\sqrt{p_\mathrm{succ}}} \Bigr) \poly(\log t_m)  \biggr)
\end{equation}
that produces the state $|\psi_\mathrm{succ}\>$ with high probability, and outputs a bit indicating whether it was successful. 
(Here we assume $t_\mathrm{avg}$ is known; alternatively, we can use any known upper bound on $t_\mathrm{avg}$.)
The resulting algorithm is gate-efficient if the algorithm $\A$ is gate-efficient.
\end{theorem}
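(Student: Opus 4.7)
The plan is to follow Ambainis's argument based on nested amplitude amplification applied to the segments $\A_j$. First, I would introduce the partial algorithms $\mathcal{B}_j := \A_j \A_{j-1} \cdots \A_1$ and track how the amplitude in the ``unhalted'' subspace (the subspace on which all projectors $\Pi_{C_1},\ldots,\Pi_{C_j}$ act as $|0\>\<0|$) shrinks as $j$ grows. Because each $\A_j$ is controlled on the first $j-1$ clock qubits being $0$, once a branch halts it is invariant under all subsequent $\A_k$, so all further computation effectively acts only on the unhalted piece. Let $P_{\ge j} := \sum_{k \ge j} p_k$ denote the probability that the algorithm has not halted before stage $j$; after $\mathcal{B}_{j-1}$, the unhalted fraction is exactly $P_{\ge j}$.

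Next I would construct amplified segments $\tilde{\A}_j$ recursively. Given $\tilde{\mathcal{B}}_{j-1} := \tilde{\A}_{j-1} \cdots \tilde{\A}_1$, whose unhalted branch (after appropriate renormalization bookkeeping) has $\Omega(1)$ weight, I apply amplitude amplification to the composition of $\tilde{\mathcal{B}}_{j-1}$ with the reflection about the unhalted subspace to boost the unhalted amplitude to a constant, then apply $\A_j$ once to obtain $\tilde{\mathcal{B}}_j$. Each round of this level-$j$ amplification costs the query cost $T_{j-1}$ of $\tilde{\mathcal{B}}_{j-1}$ plus the cost $t_j - t_{j-1}$ of the new segment, and the number of rounds needed is $O(1/\sqrt{P_{\ge j}/P_{\ge j-1}})$, so the recursion is
\be
T_j = O\!\left(\frac{T_{j-1} + (t_j - t_{j-1})}{\sqrt{P_{\ge j}/P_{\ge j-1}}}\right).
\ee
Unrolling this recursion and using $P_{\ge j}/P_{\ge j-1} \le 1$ together with Cauchy--Schwarz yields $T_m = O(\sqrt{\sum_j p_j t_j^2}\cdot\poly(\log t_m)) = O(t_\mathrm{avg}\cdot\poly(\log t_m))$; here the $\poly(\log t_m)$ overhead absorbs both the constant in each amplification step and the cost of estimating the intermediate probabilities $P_{\ge j}$ to sufficient precision via amplitude estimation (necessary because we do not know these probabilities a priori). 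A final outer amplitude amplification on the success flag, applied to $\tilde{\mathcal{B}}_m$, contributes the $1/\sqrt{p_\mathrm{succ}}$ factor, while the additive $t_m$ term accounts for running the full algorithm once end-to-end to check the flag at the outer level.

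The main obstacle is twofold. First, one must verify that the recursive amplitude amplification at each level preserves the halted branches undisturbed, which is where the controlled structure of $\A_j$ is used crucially: the reflections used inside amplitude amplification are designed to act only on the unhalted subspace, so halted amplitude is neither destroyed nor rotated into the wrong phase. Second, the bound $T_m = O(t_\mathrm{avg}\,\poly(\log t_m))$ requires a careful telescoping; Ambainis's key trick is to bound each $T_j$ in the form $\sqrt{\sum_{k\le j} p_k t_k^2/P_{\ge j}}$ by induction, at which point the outer amplification step of cost $T_m/\sqrt{p_\mathrm{succ}}$ yields exactly $t_\mathrm{avg}/\sqrt{p_\mathrm{succ}}$. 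Gate-efficiency follows because every auxiliary reflection used is built from $O(1)$ applications of $\A_j$, $\A_j^\dagger$, and controlled reflections on the clock register $\H_C$, each of which has polylogarithmic gate overhead.
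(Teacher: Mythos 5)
First, a point of reference: the paper does not prove this theorem at all---it is Ambainis's result \cite[Theorem 1]{Amb12}, restated and specialized to the paper's definition of a variable-time algorithm---so your attempt is really a reconstruction of Ambainis's proof rather than something to compare against an in-paper argument. Judged on its own terms, your sketch captures the right skeleton (interleaved recursive amplitude amplification of the not-yet-halted branch, a cost recursion unrolled via Cauchy--Schwarz to get $t_\mathrm{avg}$, amplitude estimation to handle unknown intermediate probabilities, a final outer amplification), but it has a genuine gap at exactly the point you wave away. You assert that the reflections inside each level of amplification ``act only on the unhalted subspace, so halted amplitude is neither destroyed nor rotated.'' That is not true: amplitude amplification at level $j$ uses the reflection about the current state $\tilde{\mathcal{B}}_{j-1}|0\>$, which has support on both the halted and unhalted components, and the Grover rotation necessarily changes the \emph{relative magnitude} of the halted part versus the unhalted part (that is the whole point of amplifying). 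What is preserved is only the direction within each component. Consequently, every intermediate amplification shrinks all previously halted branches---including the already-successful ones, which carry flag $F=1$---relative to the still-running branch, and these rescaling factors differ across stopping times and compound over levels. The final state of your modified algorithm therefore has its success amplitudes re-weighted in favor of late-stopping branches, so the normalized projection onto the flag is \emph{not} $|\psi_\mathrm{succ}\>$; in the QLSP application of \sec{VTAA} this would fail to produce $A^{-1}|b\>/\norm{A^{-1}|b\>}$. Handling this distortion---e.g., by choosing the subspace amplified at intermediate stages so that it contains the already-successful branches as well as the running branch, or by tracking the (known) amplification angles and undoing the distortion conditioned on the clock register---is the crux of Ambainis's construction and is missing from your argument.

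A secondary issue is the cost bookkeeping at the outer level: once the intermediate amplifications have boosted the running branch, the success probability of $\tilde{\mathcal{B}}_m$ is no longer $p_\mathrm{succ}$, so you cannot simply multiply $T_m$ by $1/\sqrt{p_\mathrm{succ}}$; the correct statement requires an induction hypothesis of the form you gesture at (bounding $T_j$ in terms of $\sum_{k\le j} p_k t_k^2$ and the conditional probability of still running), and as written your candidate invariant $T_j \approx \sqrt{\sum_{k\le j}p_k t_k^2/P_{\ge j}}$ is not even consistent in the extreme case where nothing halts before stage $m$. Neither the recursion solution nor the Cauchy--Schwarz/telescoping step is actually carried out, so even granting the state-fidelity fix, the claimed bound $O((t_m + t_\mathrm{avg}/\sqrt{p_\mathrm{succ}})\poly(\log t_m))$ is asserted rather than derived.
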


\paragraph{Gapped phase estimation}
Our algorithm also requires the following simple variant of phase estimation, where the goal is to determine whether an 
eigenphase $\theta \in [-1,1]$ of a unitary $U$ satisfies $0 \le | \theta| \le \varphi$ or  $2\varphi \le | \theta| \le 1$.
The proof is straightforward, but we include it for completeness.

\begin{lemma}[Gapped phase estimation]
\label{lem:GPE}
Let $U$ be a unitary operation with eigenvectors $\ket{\theta}$ satisfying $U\ket{\theta} = e^{  i \theta}\ket{\theta}$, and assume that $\theta \in [-1,1]$.
Let $\varphi \in (0,1/4]$ and let $\epsilon > 0$.
Then there is a unitary 
procedure $\GPE(\varphi,\epsilon)$ making $O(\frac{1}{\varphi}\log \frac{1}{\epsilon})$ queries to $U$ that, on input $\ket 0_C \ket {0}_P \ket{\theta} $, prepares a state $(\beta_0 \ket{0}_C \ket{\gamma_0}_P + \beta_1 \ket{1}_C \ket{\gamma_1}_P) \ket \theta$, where $|\beta_0|^2+|\beta_1|^2=1$, such that
\begin{itemize}[itemsep=0pt,topsep=2pt]
	\item if $0 \le |\theta| \le \varphi$ then $|\beta_1| \le \epsilon $ and
	\item if $2 \varphi \le |\theta| \le 1$ then $|\beta_0| \le \epsilon$.
\end{itemize}
Here $C$ and $P$ are registers of 1 and $\ell$ qubits, respectively, where $\ell = O(\log(1/\varphi) \log(1/\epsilon))$.
\end{lemma}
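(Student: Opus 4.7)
The plan is to run a sufficiently amplified phase estimation on $U$ with the eigenstate $|\theta\>$ carried along, and then write a single threshold bit into the flag register $C$ based on the phase estimate stored in $P$. Because phase estimation only applies controlled powers of $U$ to $|\theta\>$, that register passes through unchanged and factors out of the final state exactly as the lemma demands; the only nontrivial work is the amplitude analysis of the flag bit.

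Concretely, I would perform $k=\Theta(\log(1/\epsilon))$ independent rounds of textbook phase estimation on $U$, each using $n=\Theta(\log(1/\varphi))$ ancilla qubits with a QFT readout, so that in each round the outcome $\tilde\theta_r$ lies within $\varphi/2$ of $\theta$ with probability at least $8/\pi^2$. I pack these $k$ outputs into the register $P$ of total size $\ell=kn=O(\log(1/\varphi)\log(1/\epsilon))$, then use reversible classical logic on a scratch space to compute the median $\tilde\theta:=\mathrm{med}(\tilde\theta_1,\ldots,\tilde\theta_k)$, apply the comparator $[\,|\tilde\theta|>\tfrac{3}{2}\varphi\,]$ to write the bit into $C$, and uncompute the scratch. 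The raw estimates are left in $P$ as the arbitrary states $|\gamma_0\>,|\gamma_1\>$ that the lemma permits.

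For correctness, the flag bit $f(\vec y)$ is a function only of the contents of $P$, so the final state has exactly the stated form with $|\beta_b|^{2}=\sum_{\vec y:\,f(\vec y)=b}|c_{\vec y}|^{2}$, where $c_{\vec y}$ are the phase-estimation amplitudes. If $|\theta|\le\varphi$, then every $\vec y$ whose median satisfies $|\tilde\theta-\theta|\le\varphi/2$ has $|\tilde\theta|\le\tfrac{3}{2}\varphi$, hence $f(\vec y)=0$; by a Chernoff bound on the $k$ independent rounds, the total squared amplitude on the complementary set is at most $\epsilon^{2}$, yielding $|\beta_1|\le\epsilon$. The case $2\varphi\le|\theta|\le1$ is symmetric, because then $|\tilde\theta|\ge\tfrac{3}{2}\varphi$ whenever $|\tilde\theta-\theta|\le\varphi/2$. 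Each of the $k$ rounds uses $O(1/\varphi)$ queries to $U$, for a total of $O(\tfrac{1}{\varphi}\log\tfrac{1}{\epsilon})$, plus $\mathrm{poly}(\ell)$ reversible gates for the median and comparator.

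The main subtlety is the tail behaviour of phase estimation: a single round of the textbook algorithm has only constant success probability with Dirichlet-kernel tails rather than exponential ones, so simply increasing the number of PE qubits to push the failure probability to $\epsilon^{2}$ would cost $\Theta(1/(\varphi\epsilon))$ queries and ruin the bound. The median-of-$k$ construction sidesteps this by turning a constant-bias estimator into an exponentially concentrated one while keeping the per-round query cost at $O(1/\varphi)$, and it matches the stated $\ell=O(\log(1/\varphi)\log(1/\epsilon))$ exactly.
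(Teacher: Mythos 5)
Your proposal is correct and follows essentially the same route as the paper's proof: run standard phase estimation with precision $\varphi/2$ (constant success probability, $O(1/\varphi)$ queries, $O(\log(1/\varphi))$ ancillas per round), repeat $O(\log(1/\epsilon))$ times, and combine the rounds by a median/majority vote written coherently into the flag qubit, with a Chernoff bound giving the amplitude guarantee. Your treatment is in fact slightly more explicit than the paper's about why the output state factors in the stated form and about the amplitude-versus-probability bookkeeping, but the underlying argument is the same.
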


\begin{proof}
Standard phase estimation can give an estimate of $\theta$ with precision $\varphi/2$ (which suffices to distinguish between the two cases), succeeding with probability greater than $1/2$ \cite[Appendix C]{CEMM98}, using $O(1/\varphi)$ queries to $U$ and $O(\log(1/\varphi))$ ancillary qubits for the estimate of the phase.  
To boost the success probability, we simply repeat the procedure $O(\log (1/{\epsilon}))$ times.
We construct $\GPE(\varphi,\epsilon)$ by taking a majority vote and encoding the result in the register $C$.  By a standard Chernoff bound, this suffices to ensure error probability at most $\epsilon$.
\end{proof}

\paragraph{Approximating $A^{-1}$} Our algorithm $\mathcal A$ also requires a procedure to implement various approximations of $A^{-1}$ that are accurate for different ranges of eigenvectors. If we are promised that the input state $\ket \psi$ is a superposition
of eigenstates of $A$ with eigenvalues in the range $[1,-\lambda] \cup [\lambda,1]$ for some given
$\lambda \in (0,1]$, we can use the results
of \sec{Fourier} or \sec{Chebyshev} to implement a version of $A^{-1}$ accurate for this range of eigenvalues more efficiently than in the general case.

\begin{lemma}
\label{lem:Ainvapprox}
For any $\varepsilon>0$ and $\lambda>0$, there exists a unitary  $W(\lambda,\varepsilon)$ with query complexity  $O(\frac{d}{\lambda}\log^2(\frac{d\kappa}{\varepsilon}))$ satisfying
\begin{align}
W(\lambda,\varepsilon)
\ket 0_F \ket{0}_Q \ket{\psi}_I
=
\frac{1}{\alpha_{\max}} \ket 1_F \ket{0}_Q h(A) \ket{\psi}_I + \ket 0_F \ket{\Psi^\perp}_{QI}
\end{align}
where $\alpha_{\max} = \O(\kappa)$ is a constant independent of $\lambda$,  $\ket{\Psi^\perp}_{QI}$ is an unnormalized quantum state on registers $Q$ and $I$ orthogonal to $|0\>_Q$, and $\norm{h(A)|\psi\>-A^{-1}|\psi\>} \le \varepsilon$ for any $|\psi\>$ that is a superposition of eigenstates of $A$ with eigenvalues in the range $[-1,-\lambda] \cup [\lambda,1]$.  Here $F$ and $I$ are registers of $1$ and $\log_2 N$ qubits, respectively.
\end{lemma}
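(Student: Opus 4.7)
The plan is to specialize the Chebyshev construction of \sec{Chebyshev} to the subspace on which the approximation must be accurate, and then to modify the LCU primitive so that the effective normalization constant is the same $\alpha_{\max}$ regardless of $\lambda$. Since $|\psi\>$ is supported only on eigenvectors of $A$ with eigenvalues in $[-1,-\lambda]\cup[\lambda,1]$, the eigenvalues of $H = A/d$ on the relevant subspace lie in $[-1/d,-\lambda/d]\cup[\lambda/d,1/d]$. I would therefore rerun the analysis of \lem{taming}, \lem{exact}, and \lem{approx} with the role of $\kappa d$ played by $d/\lambda$: take $b_\lambda = \Theta((d/\lambda)^2\log(d/(\lambda\varepsilon)))$ and truncation index $j_\lambda = \Theta(\sqrt{b_\lambda \log(b_\lambda/\varepsilon)}) = O((d/\lambda)\log(d/(\lambda\varepsilon)))$, obtaining a polynomial
\begin{equation*}
g_\lambda(x) := 4\sum_{j=0}^{j_\lambda}(-1)^j \biggl[\frac{\sum_{i=j+1}^{b_\lambda}\binom{2b_\lambda}{b_\lambda+i}}{2^{2b_\lambda}}\biggr]\chebt_{2j+1}(x),
\end{equation*}
such that the operator $h(A) := \frac{1}{d}g_\lambda(H)$ satisfies $\norm{h(A)|\psi\>-A^{-1}|\psi\>}\le\varepsilon$ for every $|\psi\>$ supported on the specified subspace.

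Next I would realize $h(A)$ as a linear combination of unitaries by the method used in the proof of \thm{Chebyshev}. Each term $\chebt_{2j+1}(H)$ appears, as a block, in a power of the quantum walk operator $W$ from \sec{Chebyshev_atoms}, so the controlled selection $U$ over all $j_\lambda$ terms can be implemented within $O(j_\lambda)$ queries to $\boxA$. Writing $\frac{1}{d}g_\lambda(x) = \sum_i \beta_i^{(\lambda)} T_i(x)$ with $\beta_i^{(\lambda)}>0$ (any signs absorbed into the unitaries via $\chebt_n(-x) = (-1)^n\chebt_n(x)$), the $L_1$ norm of the coefficients, exactly as in \eq{chebyl1norm}, satisfies $\alpha(\lambda) := \sum_i \beta_i^{(\lambda)} \le 4 j_\lambda/d = O((1/\lambda)\log(d/(\lambda\varepsilon))) = \O(1/\lambda)$.

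To make the prefactor in front of $h(A)$ independent of $\lambda$, I would pad the LCU with a single ``dummy'' term. Fix $\alpha_{\max} = \O(\kappa)$ to be any upper bound on $\alpha(1/\kappa)$; then $\alpha(\lambda)\le\alpha_{\max}$ for every $\lambda\ge 1/\kappa$. Extend the index set by a symbol $\star$ with coefficient $\beta_\star^{(\lambda)} := \alpha_{\max}-\alpha(\lambda)$, and assign it a unitary $U_\star$ that maps $|0^t\>|\phi\>$ entirely into states orthogonal to $|0^t\>$ (for instance, one that flips a dedicated ancilla of the block register). Since this dummy contribution vanishes on the success block, the padded LCU still implements $h(A)$, but now with normalization $\alpha_{\max}$ rather than $\alpha(\lambda)$. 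Applying \lem{LCU} to this padded linear combination yields
\begin{equation*}
V^\dag U V \, |0^{m+t}\>|\psi\>_I = \frac{1}{\alpha_{\max}}|0^{m+t}\>h(A)|\psi\>_I + |\Psi^{\prime\perp}\>,
\end{equation*}
where $|\Psi^{\prime\perp}\>$ has no overlap with $|0^{m+t}\>$ on the ancilla register. Grouping these ancillas into $Q$, introducing a fresh flag $F$ initialized to $|0\>$, and applying a multi-controlled NOT that flips $F$ conditioned on $Q=|0^{m+t}\>$ (a gate-efficient operation) gives exactly the announced output $\frac{1}{\alpha_{\max}}|1\>_F|0\>_Q h(A)|\psi\>_I + |0\>_F|\Psi^\perp\>_{QI}$.

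The main obstacle I anticipate is controlling the gate complexity of the state-preparation unitary $V$ for the padded LCU, whose amplitudes $\sqrt{\beta_i^{(\lambda)}/\alpha_{\max}}$ depend on $\lambda$ through the truncated binomial weights together with the padding term $\alpha_{\max}-\alpha(\lambda)$. Standard techniques for preparing arbitrary amplitude distributions (computing partial sums of the coefficients in a scratch register and performing controlled rotations) handle such $\lambda$-dependent weights with polylogarithmic overhead in the number of terms, but one still has to verify that the arithmetic needed to evaluate $\alpha(\lambda)$ and the binomial ratios to sufficient precision fits inside the claimed resource budget. Once this bookkeeping is verified, the cost is dominated by the single invocation of the Chebyshev LCU, giving query complexity $O((d/\lambda)\log(d/(\lambda\varepsilon)))$, which fits within the stated bound $O((d/\lambda)\log^2(d\kappa/\varepsilon))$ after using $\lambda\ge 1/\kappa$ to bound $\log(d/(\lambda\varepsilon)) \le \log(d\kappa/\varepsilon)$.
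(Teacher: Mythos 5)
Your proposal is correct and takes essentially the same approach as the paper: specialize the Chebyshev (or Fourier) LCU construction to the eigenvalue cutoff $\lambda$, giving query cost $O(\frac{d}{\lambda}\log^2(\frac{d\kappa}{\varepsilon}))$ and coefficient norm $\alpha(\lambda)\le\alpha_{\max}=\O(\kappa)$, and then enforce the $\lambda$-independent prefactor $1/\alpha_{\max}$. The only (inessential) difference is the mechanism for that last step: the paper applies a rotation $\ket 1_F \mapsto \frac{\alpha(\lambda,\varepsilon)}{\alpha_{\max}}\ket 1_F + \sqrt{1-\alpha(\lambda,\varepsilon)^2/\alpha_{\max}^2}\,\ket 0_F$ conditioned on $\ket 0_Q$, whereas you pad the linear combination with a dummy term of weight $\alpha_{\max}-\alpha(\lambda)$ whose block on the success subspace vanishes — the two constructions are equivalent.
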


\begin{proof}
Using either the Fourier or Chebyshev approach, by choosing an appropriately accurate series expansion of $A^{-1}$ we can perform the map
\begin{align}
\label{eq:1/Aoutput}
\ket 0_F \ket{0}_Q \ket{\psi}_I
\mapsto 
\frac 1 \alpha \ket 1_F \ket{0}_Q h(A) \ket{\psi}_I + \ket 0_F \ket{\Psi^\perp}_{QI}.
\end{align}

Using the Fourier approach, the query complexity of the map in \eq{1/Aoutput} is $O((d/\lambda) \log^2(1/\lambda \varepsilon))$.
Using the Chebyshev approach, the query complexity is $O((d /\lambda) \log(d /\lambda \varepsilon))$. In either case, the complexity of the operation in \eq{1/Aoutput} is $O(\frac{d}{\lambda}\log^2(\frac{d\kappa}{\varepsilon}))$.

However, the value of $\alpha$ in \eq{1/Aoutput} will depend on $\lambda$. 
The constant $\alpha$ can be determined from the $L_1$ norm in the approximation of $A^{-1}$ by a sum of unitaries. Under the promise that the input state is a superposition of eigenvectors with eigenvalues in $[-1,-\lambda] \cup [\lambda,1]$, the value of $\alpha$ depends on  $\lambda$ and $\varepsilon$, so we denote it $\alpha=\alpha(\lambda,\varepsilon)$.

Since we require transformations for which the factor $1/\alpha$ appearing in \eq{1/Aoutput} is the same independent of $\lambda$, we choose the largest value of $\alpha$ for our range of eigenvalues, which is $\alpha_{\max} := \alpha(1/\kappa,\varepsilon)$. This value is $\alpha_{\max} = O(\kappa \sqrt{\log(\kappa/ \varepsilon)})$ or $\alpha_{\max} = O(\kappa \log(d \kappa/ \varepsilon))$ for the Fourier or Chebyshev approaches, respectively, which is $\O(\kappa)$ in either case. 

Since $\alpha(\lambda,\varepsilon) \le \alpha_{\max}$, and $\alpha(\lambda,\varepsilon)$ can be computed efficiently, if we implement \eq{1/Aoutput} with a smaller value of $\alpha$ than needed, it is always possible to make the implementation worse and increase the value of $\alpha$. In particular, the algorithm for preparing \eq{1/Aoutput} with $\alpha$ replaced by the fixed value $\alpha_{\max}$ is the algorithm described in \sec{Fourier} or \sec{Chebyshev} (which implements \eq{1/Aoutput}) followed by an additional step that transforms
\begin{align}
	\ket 1_F \mapsto 
	\frac{\alpha(\lambda,\varepsilon)}{\alpha_{\max}} \ket 1_F 
	+\sqrt{1 - \frac{\alpha(\lambda,\varepsilon)^2}{\alpha_{\max}^2}} \ket 0_F,
\end{align}
conditional on $\ket 0_Q$. This gives us a map $W(\lambda,\varepsilon)$ that performs the transformation in \eq{1/Aoutput} with the same value $\alpha_{\max}$ in the denominator independent of $\lambda$, as described in the  lemma.
\end{proof}

\subsection{Algorithm}

We now describe a variable-time quantum algorithm $\A$ built as a sequence of steps $\mA_1,\ldots,\mA_m$ with $m := \ceil{\log_2 \kappa}+1$, so the algorithm is $\mathcal A= \mA_m \cdots \mA_1$.
The algorithm $\A$ uses the following registers:
\begin{itemize}[noitemsep,topsep=2pt] 
	\item an $m$-qubit clock register $C$, labeled  $C_1,\ldots,C_m$, used to determine a region the eigenvalue belongs to (i.e., to store the result of GPE);
	\item a single-qubit flag register $F$ to indicate whether the approximation of $A^{-1}$ was successfully implemented;
	\item a $(\log_2 N)$-qubit input register $I$, initialized to $\ket b$, that finally contains the output state;
	\item a register $P$, divided into registers $P_1, P_2, \ldots, P_m$, to be used as ancilla for GPE; and
	\item a register $Q$ to be used as ancilla in the implementation of $A^{-1}$.
\end{itemize}
The corresponding Hilbert spaces are denoted
$\mathcal H_C$, $\mathcal H_F$, $\mathcal H_I$, $\mathcal H_P$, and $\mathcal H_{Q}$, respectively.
All registers are initialized in $\ket 0$ except for register $I$, which is initialized in $\ket b$.
When we write $\ket 0_X$ we mean that all qubits of register $X$ are in $\ket 0$.

\paragraph{Algorithm $\A_j$} 
We now describe the algorithm $\A_j$, which forms a part of the variable-time algorithm $\A$. In the algorithm below, each call to GPE (\lem{GPE}) uses the unitary operation $U:=e^{i A}$.  Note that this unitary satisfies the assumption of \lem{GPE} as the norm of $A$ is at most 1. 
For all $j \in [m]$, let $\varphi_j := 2^{-j}$, and let  $\varepsilon=\epsilon/(m \alpha_{\max})$. 

Finally, we define $\mA_j$ as the product of the following two unitary operations:
\begin{enumerate}[noitemsep,topsep=2pt]
	\item Conditional on first $j-1$ qubits of $\H_C$ being $|0\>$, apply $\GPE (\varphi_j,\varepsilon)$ on the input state in $I$ using $C_j$ as the output qubit and additional fresh qubits from $P$ as ancilla
	(denoted $P_j$).
	\label{step:GPE}
	\item Conditional on $C_j$ (the outcome of the previous step) being $\ket 1_{C_j}$, apply $W( \varphi_j , m \varepsilon)$ on the input state in $I$ using $F$ as the flag register and register $Q$ as ancilla.
	\label{step:W}
\end{enumerate}

\paragraph{Final algorithm}
Before describing the final algorithm, we define another sequence of algorithms $\A'=\A'_m \cdots \A'_1$, similar to $\A$. The only difference between $\A_j$ and $\A'_j$ is that instead of applying the operator $W$ in \step{W} of  $\A_j$, algorithm $\A'_j$ applies the following trivial operation $W'$:
\begin{align}
\label{eq:triv}
W'\ket 0_F \ket{0}_Q \ket{\psi}_I
=\ket 1_F \ket{0}_Q  \ket{\psi}_I. 
\end{align}
The {final} algorithm of this section is as follows. First, apply VTAA to $\mathcal A$ to produce a normalized version of the state output by $\mathcal A$ that has register $F$ in $\ket 1_F$. Then apply the unitary $(\A')^{\dagger}$ to this state. The resulting state in register $I$ is
$\frac{A^{-1} \ket b}{\norm{A^{-1} \ket b}}$
up to error $O(\epsilon)$, as we prove in the next section.

\subsection{Correctness}

To prove that the algorithm works correctly, it is useful to first analyze its behavior on an eigenstate $|\lambda\>$ of $A$, which satisfies $A|\lambda\> = \lambda |\lambda\>$ for some $\lambda \in [-1,1]$. Then the action of $\mathcal A$ on a general state $|b\>$ follows from linearity. 

Let $j\in[m]$ be  such that $\varphi_j < |\lambda| \leq 2\varphi_j$ (recall that $\varphi_j := 2^{-j}$). Such a $j$ must exist because the largest value of $|\lambda|$ is $1 = 2\varphi_1$, and the smallest value of $|\lambda|$ is $1/\kappa$, which satisfies $\varphi_m = 1/2^{\ceil{\log(\kappa)}+1} < 1/\kappa$. 

{In this section, we will use the notation $|\psi\> = |\phi\> + O(\delta)$ to mean $\norm{|\psi\> - |\phi\>} = O(\delta)$.}

\paragraph{State after $\A_1$ to $\A_{j-1}$} We first observe that the algorithms $\A_1, \ldots, \A_{j-1}$ essentially do nothing to the state except modify the ancilla register $P$. To see this, observe that we start with the state
$|0\>_C |0\>_F |\lambda\>_I |0\>_P |0\>_Q$.
Upon applying \step{GPE} of $\A_1$, because $|\lambda|\leq \phi_1$, the clock register $C_1$ remains $|0\>$ with high probability, and only the register $P_1$ has been modified to hold the ancillary state of GPE. Thus, we have the state
\be 
|0\>_C |0\>_F |\lambda\>_I |\gamma^{1}_0\>_{P_1} |0\>_{P_2 \cdots P_m}|0\>_Q
\ee
up to error $O(\varepsilon)$, where $|\gamma^{1}_0\>$ is the ancillary state produced by GPE. \Step{W} of $\A_1$ does nothing because register $C_1$ is set to $|0\>$. 
Similarly, after $j-1$ steps we are left with the state
\be 
|0\>_C |0\>_F |\lambda\>_I |\gamma^{1}_0\>_{P_1} \cdots |\gamma^{j-1}_0\>_{P_{j-1}}
|0\>_{P_j \cdots P_m}|0\>_Q
\ee
up to error $O(j\varepsilon) = O(m\varepsilon)$, where $|\gamma^{i}_0\>$ is the ancillary state produced by the \th{i} call to GPE.

\paragraph{State after $\A_{j}$} Now when we apply \step{GPE} of $\A_j$, because $|\lambda|$ lies in between $\varphi_j$ and $2\varphi_j$, GPE will return a superposition of $|0\>_{C_j}$ and $|1\>_{C_j}$. We will then have the state
\begin{align}
\label{eq:unfinished}
\beta_0|0\>_C |0\>_F &|\lambda\>_I |\gamma^{1}_0\>_{P_1} \cdots |\gamma^{j-1}_0\>_{P_{j-1}}|\gamma^{j}_0\>_{P_{j}}
|0\>_{P_{j+1}\cdots P_m}|0\>_Q \\ 
+
\beta_1|\mathfrak{U}_j\>_C |0\>_F &|\lambda\>_I |\gamma^{1}_0\>_{P_1} \cdots |\gamma^{j-1}_0\>_{P_{j-1}}|\gamma^{j}_1\>_{P_{j}}
|0\>_{P_{j+1} \cdots P_m}|0\>_Q, \label{eq:finished}
\end{align} 
{up to error $O(m\varepsilon)$,} where $\mathfrak{U}_j:=0^{j-1}10^{m-j}$ denotes the integer $j$ represented in unary. Now when \step{W} of $\A_j$ is applied, the part of the state in \eq{unfinished} remains unchanged, since the register $C_j$ is set to $|0\>$. However, the part in \eq{finished} will have operator $W(\varphi_j,m\varepsilon)$ applied to it and the state in \eq{finished} will become
\begin{align}
\label{eq:fin1}
\beta_1|\mathfrak{U}_j\>_C |1\>_F \Bigl( \frac{h(A)}{\alpha_{\max}}|\lambda\>_I \Bigr)  |\gamma^{1}_0\>_{P_1} \cdots |\gamma^{j-1}_0\>_{P_{j-1}}|\gamma^{j}_1\>_{P_{j}}
|0\>_{P_{j+1}\cdots P_m}|0\>_Q  
+ \beta_1|\mathfrak{U}_j\>_C |0\>_F |g_j\>_{IPQ}, 
\end{align} 
where $|g_j\>_{IPQ}$ is some state on register $I$, $P$, $Q$. Since $\lambda$ falls in the required range of \lem{Ainvapprox}, i.e., it is between $[-1,-\varphi_j] \cup [\varphi_j,1]$, we can replace $h(A)$ in the equation above with $A^{-1}$, incurring error at most $O(m\varepsilon)$.

\paragraph{State after $\A_{j+1}$} We can now apply the algorithm $\A_{j+1}$ to our state. This will only affect the part of the state in \eq{unfinished}, because the computation in \eq{fin1}  has ended and has a nonzero value in register $C$. On applying \step{GPE} of $\A_{j+1}$, the state \eq{unfinished} is mapped to
\begin{align}
\label{eq:unf2}
\beta_0|\mathfrak{U}_{j+1}\>_C |0\>_F |\lambda\>_I & |\gamma^{1}_0\>_{P_1} \cdots |\gamma^{j}_0\>_{P_{j}}|\gamma^{j+1}_1\>_{P_{j+1}}
|0\>_{P_{j+2}\cdots P_m}|0\>_Q
\end{align} 
up to error $O(m\varepsilon)$, since $|\lambda|>2\varphi_{j+1}$, and hence GPE outputs $|1\>$ with high probability. We can now apply \step{W} of $\A_{j+1}$ to this state to obtain
\begin{align}
\beta_0|\mathfrak{U}_{j+1}\>_C |1\>_F \Bigl(\frac{h(A)}{\alpha_{\max}}|\lambda\>_I\Bigr) & |\gamma^{1}_0\>_{P_1} \cdots |\gamma^{j}_0\>_{P_{j}}|\gamma^{j+1}_1\>_{P_{j+1}}
|0\>_{P_{j+2}\cdots P_m}|0\>_Q 
+\beta_0|\mathfrak{U}_{j+1}\>_C |0\>_F |g_{j+1}\>_{IPQ}
\end{align} 
up to error $O(m\varepsilon)$, where $|g_{j+1}\>_{IPQ}$ is some state on register $I$, $P$, $Q$. Since $\lambda$ falls in the required range of \lem{Ainvapprox}, i.e., it is between $[-1,-\varphi_{j+1}] \cup [\varphi_{j+1},1]$, we can replace $h(A)$ in the equation above with $A^{-1}$, incurring error at most $O(m\varepsilon)$.

\paragraph{State after $\A$} Now since the state has no overlap with $|0\>_C$, {up to error $O(m\varepsilon)$,} the remaining operations of $\A$ do nothing. {Thus the final state at the end of algorithm $\A$,  $\A |0\>_{CF} |\lambda\>_I |0\>_{PQ}$, is}
\begin{align}
&\frac{\beta_1}{\alpha_{\max}}|\mathfrak{U}_j\>_C |1\>_F A^{-1}|\lambda\>_I  |\gamma^{1}_0\>_{P_1} \cdots |\gamma^{j-1}_0\>_{P_{j-1}}|\gamma^{j}_1\>_{P_{j}}
|0\>_{P_{j+1}\cdots P_m}|0\>_Q \nonumber\\ 
+&
\frac{\beta_0}{\alpha_{\max}}|\mathfrak{U}_{j+1}\>_C |1\>_F A^{-1}|\lambda\>_I  |\gamma^{1}_0\>_{P_1} \cdots |\gamma^{j}_0\>_{P_{j}}|\gamma^{j+1}_1\>_{P_{j+1}}
|0\>_{P_{j+2}\cdots P_m}|0\>_Q \nonumber\\
+&\beta_1|\mathfrak{U}_j\>_C |0\>_F |g_j\>_{IPQ} 
+\beta_0 |\mathfrak{U}_{j+1}\>_C |0\>_F |g_{j+1}\>_{IPQ}
\end{align} 
up to error $O(m\varepsilon)$. On projecting this state to the $|1\>_F$ subspace, we obtain
\begin{align}
\Pi_F \A |0\>_{CF} |\lambda\>_I |0\>_{PQ} = 
\frac{\beta_1}{\alpha_{\max}}|\mathfrak{U}_j\>_C |1\>_F A^{-1}|\lambda\>_I   |\gamma^{1}_0\>_{P_1} \cdots |\gamma^{j-1}_0\>_{P_{j-1}}|\gamma^{j}_1\>_{P_{j}}
|0\>_{P_{j+1}\cdots P_m}|0\>_Q \nonumber\\ 
+
\frac{\beta_0}{\alpha_{\max}}|\mathfrak{U}_{j+1}\>_C |1\>_F A^{-1}|\lambda\>_I  |\gamma^{1}_0\>_{P_1} \cdots |\gamma^{j-1}_0\>_{P_{j-1}}|\gamma^{j}_0\>_{P_{j}}|\gamma^{j+1}_1\>_{P_{j+1}}
|0\>_{P_{j+2}\cdots P_m}|0\>_Q + O(m\delta),
\end{align} 
where $\Pi_F$ denotes the projector onto $|1\>_F$. Now let $|\Psi_\lambda\>$ denote a normalized quantum state on registers $C$, $F$, $P$, and $Q$, such that this equation can be rewritten as
\be
\label{eq:finstate}
\Pi_F \A |0\>_{CF} |\lambda\>_I |0\>_{PQ} = \frac{A^{-1}}{\alpha_{\max}} |\lambda\>_I |\Psi_\lambda\>_{CFPQ}  + O(m\delta).
\ee 

\paragraph{Algorithm $\A'$} Now observe that if we had run algorithm $\A'$ instead of algorithm $\A$, the ancillary state $|\Psi_\lambda\>$ created would be identical since this state depends only on GPE and not on the implementation of $A^{-1}$. In $\A'$, since the operation $W$ is replaced by the operation in \eq{triv} that simply flips the bit in register $F$, we have 
\be
\A' |0\>_{CF} |\lambda\>_I |0\>_{PQ} = |\lambda\>_I |\Psi_\lambda\>_{CFPQ}  + O(m\delta).
\ee
Note that there is no need to project onto $|1\>_F$, since the final state here has no overlap on $|0\>_F$, and hence there is no need to apply VTAA either. Thus the inverse of algorithm $\A'$ can be used to erase the state $|\Psi_\lambda\>$ given state $|\lambda\>$ in another register, even in superposition.

\paragraph{Final algorithm}
We can now analyze the general case where the input state is $\ket{b} = \sum_k c_k \ket{\lambda_k}$,
where $\ket{\lambda_k}$ are eigenvectors of $A$ of eigenvalue $\lambda_k \in [-1,1]$ and $\sum_k |c_k|^2=1$.

Using linearity and \eq{finstate}, we obtain
\be
\label{eq:finalstate}
\Pi_F \A |0\>_{CF} |b\>_I |0\>_{PQ} 
= \frac{A^{-1}}{\alpha_{\max}} \sum_k c_k |\lambda_k\>_I |\Psi_{\lambda_k}\>_{CFPQ}  + O(m\delta).
\ee 

We can now apply VTAA (\thm{VTAA}) to $\mathcal A$ to produce a normalized version of this state with constant probability. 
The approximation error is $O(\alpha_{\max} m \varepsilon) = O(\epsilon)$ since, in the worst case, the norm of this state is $O(1/\alpha_{\max})$. Now applying $(\A')^\dagger$ allows us to erase the ancillary states $|\Psi_{\lambda_k}\>_{CFPQ}$ to obtain 
\be
\frac{A^{-1}|b\>_I |0\>_{CFPQ}}{\norm{A^{-1}|b\>_I |0\>_{CFPQ}}}
\ee 
up to error $O(\epsilon)$, as desired.

\subsection{Complexity}

The complexity of our algorithm is dominated by the cost of applying VTAA to algorithm $\A$. The next step, which is to apply $(\A')^\dagger$, can only cost as much as $\A$, and hence can be ignored.

To obtain the query complexity of applying VTAA to $\A$, we need to compute the quantities in \thm{VTAA}, which are $t_m$, $p_\mathrm{succ}$, and $t_\mathrm{avg}$.

\paragraph{Computing $t_j$}
Let us begin with  $t_j$, the query complexity of applying $\A_j \cdots \A_1$. The cost of any $\A_j$ is the cost of applying $\GPE(\varphi_j,\varepsilon)$ using \lem{GPE} and $W(\varphi_j,m\delta)$ using \lem{Ainvapprox}. 
The number of uses of $U=e^{iA}$ in $\GPE(\varphi_j,\varepsilon)$ follows from \lem{GPE} and is $O(2^j \log(1/\varepsilon))$.
Then, each $U$ must be implemented using a Hamiltonian simulation algorithm within precision $O(\varepsilon /(2^j \log(1/\varepsilon)))$.
By the results of \cite{BCK15}, the query complexity of $\GPE(\varphi_j,\varepsilon)$ is therefore
\begin{align}
  O(d 2^j \poly(\log(d 2^j/\varepsilon))) = \O(d2^j).
 \end{align}
The query complexity of $W(\varphi_j,m\varepsilon)$ is $O(\frac{d}{\varphi_j}\log^2(\frac{d\kappa}{m\varepsilon}))=  \O(d2^j)$. Since the cost of $\A_j$ is $\O(d2^j)$, $t_j = \O(d2^j)$ and $t_m = \O(d\kappa)$.  

\paragraph{Computing $p_\mathrm{succ}$}

The probability $p_\mathrm{succ}$ is the probability of measuring the register $F$ in $\ket 1_F$ in the state output by algorithm $\mathcal A$. This is simply the squared norm of the state on the left hand side of \eq{finalstate}, i.e., $p_\mathrm{succ}=\norm{\Pi_F \A |0\>_{CF} |b\>_I |0\>_{PQ}}^2$. Hence from \eq{finalstate} we have
\begin{align}
\sqrt{p_\mathrm{succ}} & 
= \Norm{\frac{A^{-1}}{\alpha_{\max}} \sum_k c_k |\lambda_k\>_I |\Psi_\lambda\>_{CFPQ}}  + O(m\delta)\\
&= \frac{1}{\alpha_{\max}} \Norm{\sum_k \frac{c_k}{\lambda_k} |\lambda_k\>_I |\Psi_\lambda\>_{CFPQ}} + O(m\delta)\\
&= \frac{1}{\alpha_{\max}} {\left(\sum_k \frac{|c_k|^2}{\lambda_k^2}\right)^{\frac{1}{2}}} + O(m\delta). 
\end{align}
To be precise, this is only an approximation of $\sqrt{p_\mathrm{succ}}$ up to error $O(m \varepsilon) = O(\epsilon/\alpha_{\max})$. However, since this error is much smaller than the calculated value of $\sqrt{p_\mathrm{succ}}$, which is at least $1/\alpha_{\max}$, this error only affects the value by a constant factor for sufficiently small $\epsilon$.

\paragraph{Computing $t_\mathrm{avg}$}

Let $p_j$ denote the probability that algorithm $\mathcal A$ stops exactly at the \th{j} step, which is 
$\norm{\Pi_{C_j}\A_j\cdots\A_1|b\>_I|0\>_{CFPQ}}^2$, where $\Pi_{C_j}$ is the projector onto $|1\>_{C_j}$. We can now compute
\begin{align}
t_\mathrm{avg}^2 = \sum_j p_j t_j^2 &= \sum_j \norm{\Pi_{C_j}\A_j\cdots\A_1|b\>_I|0\>_{CFPQ}}^2 t_j^2 \\
&=\sum_j \norm{\Pi_{C_j}\A_j\cdots\A_1\sum_k c_k|\lambda_k\>_I|0\>_{CFPQ}}^2 t_j^2 \\
&=\sum_j \sum_k |c_k|^2 \norm{\Pi_{C_j}\A_j\cdots\A_1 |\lambda_k\>_I|0\>_{CFPQ}}^2 t_j^2 \\
&=\sum_k |c_k|^2 \Bigl(\sum_j \norm{\Pi_{C_j}\A_j\cdots\A_1 |\lambda_k\>_I|0\>_{CFPQ}}^2 t_j^2\Bigr), \label{eq:tavg}
\end{align}
where the expression in parentheses is the average squared stopping time of the algorithm for the special case where $|b\>= |\lambda_k\>$. For the state $|\lambda_k\>$, we know that the algorithm stops after step $j$ or $j+1$, where $j$ satisfies $\varphi_j < |\lambda_k| \leq 2\varphi_j$. Hence the average squared stopping time for $|\lambda_k\>$ is at most $t_{j+1} = \O(d2^j) = \O(d/\lambda_k)$. More precisely, the algorithm stops after step $j$ or $j+1$ with probability at least $1-\delta$. With probability at most $\delta$ it may stop at step $j+2$, and with probability at most $\delta^2$ it may stop at step $j+3$, etc. Since $t_{j+r}$ grows only exponentially with $r$, i.e., $t_{j+r} = \exp(r) t_j$, for small enough $\delta$, the sum $\delta^r \exp(r)$ converges to a constant and hence this does not change the average squared stopping time for $|\lambda_k\>$ by more than a constant factor. 

Substituting this into \eq{tavg}, we have
\be 
t_\mathrm{avg}^2 = \sum_k |c_k|^2 \O(d^2/\lambda_k^2)=\O\biggl(d^2\sum_k \frac{|c_k|^2}{\lambda_k^2}\biggr).
\ee 

\paragraph{Total complexity}

We are now ready to compute the total cost of VTAA and prove \thm{linear}. \thm{VTAA} states that the query complexity of VTAA applied to $\mathcal A$ is
\begin{equation}
O\(\Bigl( t_m + \frac{t_\mathrm{avg}}{\sqrt{p_\mathrm{succ}}} \Bigr) \poly(\log t_m)  \).
\end{equation}
Using the values we computed, this is 
\begin{align}
\O(d \kappa + d \alpha_{\max}) =  \O(d\kappa),
\end{align}
where we used the fact that $\alpha_{\max} = \O(\kappa)$, as stated in \lem{Ainvapprox}.

Finally, since $\A$ and $\A'$ are gate-efficient and VTAA preserves this property, the overall algorithm is gate-efficient.

\section*{Acknowledgements}

RK thanks Vamsi Pritham Pingali for several helpful discussions during the course of this work. We would also like to thank the anonymous referees of SICOMP for their detailed comments.

The authors acknowledge support from
AFOSR grant number FA9550-12-1-0057,
ARO grant numbers W911NF-12-1-0482 and W911NF-12-1-0486,
CIFAR,
IARPA grant number D15PC00242,
NRO,
and NSF grant number 1526380.
This preprint is MIT-CTP \#4687 and LA-UR-15-27205.


\newcommand{\etalchar}[1]{$^{#1}$}


\end{document}